\documentclass[a4paper,11pt,fleqn]{article}

\usepackage{amsmath, amssymb, amsthm}
\usepackage{mathtools, thm-restate}
\usepackage{mathrsfs}
\usepackage[margin=1in]{geometry}
\usepackage{xcolor}
\usepackage{url}
\usepackage{comment}
\usepackage{tikz}
\usepackage{enumitem}
\usepackage{array}
\usepackage{longtable}
\usepackage{diagbox}
\usepackage{anyfontsize}
\usepackage{xfrac}
\usepackage[justification=centering]{caption}
\usepackage[hypertexnames=false,final]{hyperref}
\usepackage{algorithm, algpseudocode}
\usepackage[capitalize,sort]{cleveref}
\usetikzlibrary{decorations.pathreplacing,arrows.meta,calc}

\usepackage{amsmath}

\let\epsilon\varepsilon

\newcommand*{\thmdep}[2]{}
\newcommand*{\thmdepcref}[2]{\cref{#1}}

\newcommand*{\Th}{^{\textrm{th}}}

\let\eps\epsilon

\newcommand*{\OPT}{\mathrm{OPT}}
\newcommand*{\defeq}{:=}
\newcommand*{\pfrac}[2]{\left(\frac{#1}{#2}\right)}

\newcommand*{\ceil}[1]{\left\lceil #1 \right\rceil}

\newcommand*{\abs}[1]{\left\lvert #1 \right\rvert}
\newcommand*{\smallabs}[1]{\lvert #1 \rvert}
\newcommand*{\norm}[1]{\left\lVert #1 \right\rVert}

\DeclareMathOperator*{\E}{E}

\DeclareMathOperator*{\argmin}{argmin}
\DeclareMathOperator*{\argmax}{argmax}
\DeclareMathOperator*{\support}{support}
\DeclareMathOperator*{\opt}{opt}

\DeclareMathOperator{\poly}{poly}

\usepackage{iftex}
\usepackage{url}
\usepackage{mathtools}
\usepackage{comment}

\hypersetup{
    colorlinks,
    linkcolor={red!50!black},
    citecolor={red!50!black},
    urlcolor={blue!50!black}
}

\Urlmuskip=0mu plus 0.1mu
\makeatletter
\g@addto@macro{\UrlBreaks}{%
\do\/%
\do\a\do\b\do\c\do\d\do\e\do\f\do\g\do\h\do\i\do\j\do\k\do\l\do\m%
\do\n\do\o\do\p\do\q\do\r\do\s\do\t\do\u\do\v\do\w\do\x\do\y\do\z%
\do\A\do\B\do\C\do\D\do\E\do\F\do\G\do\H\do\I\do\J\do\K\do\L\do\M%
\do\N\do\O\do\P\do\Q\do\R\do\S\do\T\do\U\do\V\do\W\do\X\do\Y\do\Z%
\do\0\do\1\do\2\do\3\do\4\do\5\do\6\do\7\do\8\do\9%
}
\makeatother

\hfuzz=2pt
\renewcommand*{\defeq}{\coloneqq}
\newcolumntype{L}{>{$\displaystyle}l<{$}}
\algnewcommand{\LineComment}[1]{\State \textcolor{gray}{// #1}}

\newcommand*{\restate}[2]{\textbf{#1.} \emph{#2}}

\newcommand*{\setParSpacing}{
    \setlength{\parindent}{0pt}
    \setlength{\parskip}{0.6em}
}
\newcommand*{\acknowledgements}[1]{\paragraph{Acknowledgements.} #1}

\newtheorem{theorem}{Theorem}
\newtheorem{definition}{Definition}

\newtheorem{corollary}{Corollary}[theorem]
\newtheorem{lemma}[theorem]{Lemma}
\newtheorem{claim}[theorem]{Claim}

\crefname{claim}{Claim}{Claims}
\crefname{property}{Property}{Properties}
\crefname{observation}{Observation}{Observations}
\crefname{transformation}{Transformation}{Transformations}

\newcommand*{\config}{configuration}
\newcommand*{\Config}{Configuration}

\newcommand*{\Ccal}{\mathcal{C}}

\newcommand*{\Otild}{\widetilde{O}}
\newcommand*{\xhat}{\widehat{x}}
\newcommand*{\xtild}{\widetilde{x}}
\newcommand*{\yhat}{\widehat{y}}
\newcommand*{\lamhat}{\widehat{\lambda}}
\newcommand*{\lamtild}{\widetilde{\lambda}}
\newcommand*{\epsSigma}{\eps_{\sigma}}
\newcommand*{\epsAlpha}{\eps_{\alpha}}
\newcommand*{\vecone}{\mathbf{1}}
\newcommand*{\prodOrHyp}{\hyperref[defn:product-oracle]{product oracle}}
\newcommand*{\columnOrHyp}{\hyperref[defn:column-oracle]{column oracle}}
\newcommand*{\costOrHyp}{\hyperref[defn:cost-oracle]{cost oracle}}
\newcommand*{\indexOrHyp}{\hyperref[defn:index-find]{index-finding oracle}}
\newcommand*{\pointOrHyp}{\hyperref[defn:point-find]{point-finding oracle}}

\DeclareMathOperator{\fcov}{fcov}
\DeclareMathOperator{\fcovHyp}{\hyperref[defn:frac-cov]{\fcov}}
\DeclareMathOperator{\ofcov}{ofcov}
\DeclareMathOperator{\dfcov}{dfcov}
\DeclareMathOperator{\width}{width}
\DeclareMathOperator{\widthHyp}{\hyperref[defn:width]{\width}}
\DeclareMathOperator{\covLP}{covLP}
\DeclareMathOperator{\CondI}{\mathcal{C}_1}
\DeclareMathOperator{\CondIHyp}{\hyperref[eqn:condI]{\CondI}}
\DeclareMathOperator{\CondII}{\mathcal{C}_2}
\DeclareMathOperator{\CondIIHyp}{\hyperref[eqn:condII]{\CondII}}

\DeclareMathOperator{\getSeed}{\mathtt{get-seed}}
\DeclareMathOperator{\getSeedHyp}{\hyperref[algo:get-seed]{\getSeed}}
\DeclareMathOperator{\improveCover}{\mathtt{improve-cover}}
\DeclareMathOperator{\improveCoverHyp}{\hyperref[algo:improve-cover]{\improveCover}}
\DeclareMathOperator{\fracCover}{\mathtt{frac-cover}}
\DeclareMathOperator{\fracCoverHyp}{\hyperref[algo:frac-cover]{\fracCover}}
\DeclareMathOperator{\fracCovII}{\mathtt{frac-cover-2}}
\DeclareMathOperator{\indexFind}{\mathtt{index-find}}
\DeclareMathOperator{\indexFindHyp}{\hyperref[defn:index-find]{\indexFind}}
\DeclareMathOperator{\pointFind}{\mathtt{point-find}}
\DeclareMathOperator{\pointFindHyp}{\hyperref[defn:point-find]{\pointFind}}
\DeclareMathOperator{\covLPsolve}{\mathtt{covLP-solve}}
\DeclareMathOperator{\covLPsolveHyp}{\hyperref[algo:cov-lp-solve]{\covLPsolve}}

\title{An Approximation Algorithm for Covering Linear
Programs\texorpdfstring{\\}{ }and its Application to Bin-Packing}
\author{Eklavya Sharma\\
Department of Computer Science and Automation\\
Indian Institute of Science, Bengaluru.\\
\texttt{eklavyas@iisc.ac.in}}
\date{\empty}

\begin{document}

\maketitle

\begin{abstract}
We give an $\alpha(1+\eps)$-approximation algorithm for solving covering LPs, assuming
the presence of a $(1/\alpha)$-approximation algorithm for a certain optimization problem.
Our algorithm is based on a simple modification of the Plotkin-Shmoys-Tardos algorithm
\cite{plotkin1995fast}.
We then apply our algorithm to $\alpha(1+\eps)$-approximately solve the
\config{} LP for a large class of bin-packing problems,
assuming the presence of a $(1/\alpha)$-approximate algorithm for the
corresponding knapsack problem (KS).
Previous results give us a PTAS for the \config{} LP using a PTAS for KS.
Those results don't extend to the case where KS is poorly approximated.
Our algorithm, however, works even for polynomially-large $\alpha$.

\end{abstract}

\setParSpacing
\setlist[itemize]{noitemsep,nolistsep}

\acknowledgements{I want to thank my advisor, Prof.~Arindam Khan, for his valuable comments,
and K.V.N.~Sreenivas for helpful discussions.}

\section{Introduction}

Algorithms for solving linear programs (LPs) have been the cornerstone of operations research.
Linear programming also has applications in computer science;
Gr\"otschel, Lov\'asz and Schrijver \cite{gls-ellipsoid} give examples of
combinatorial optimization problems that can be solved using linear programming.
Linear programs are especially important in the area of approximation algorithms.
Many optimization problems can be expressed as integer programs.
\emph{Rounding-based} algorithms first solve the LP relaxation of these integer programs,
and then \emph{round} the relaxed solution to get an approximate solution to the original problem
\cite{det-lp-round-daa,rand-lp-round-daa,iterative-methods}.

We study a large and important class of linear programs, called \emph{covering linear programs}.
Our main result is an approximation algorithm, called $\covLPsolveHyp$,
for solving covering LPs.

\begin{definition}
\label{defn:cov-lp}
A linear program is called a covering LP iff it is of the form
\[ \min_{x \in \mathbb{R}^N} c^Tx \textrm{ where } Ax \ge b \textrm{ and } x \ge 0, \]
where $A \in \mathbb{R}_{\ge 0}^{m \times N}$ ($m$-by-$N$ matrix over non-negative reals),
$b \in \mathbb{R}^m_{>0}$ %
and $c \in \mathbb{R}^N_{> 0}$. %
Denote this covering LP by $\covLP(A, b, c)$.
\end{definition}

Our motivating application stems from the bin-packing problem.
There are multiple ways of representing bin-packing as an integer LP,
but probably the most useful of them is the \emph{\config{} LP}
(formally defined in \cref{sec:bin-packing}).
Rounding the \config{} LP was used in the first linear-time APTAS for bin-packing
by de la Vega and Lueker \cite{bp-aptas}.
It was later used by Karmarkar and Karp \cite{karmarkar-karp} to get an algorithm
for bin-packing that uses $\OPT + O(\log^2(\OPT))$ bins,
and by Hoberg and Rothvoss \cite{hoberg2017logarithmic}
for an algorithm that uses $\OPT + O(\log(\OPT))$ bins.
Bansal, Caprara and Sviridenko \cite{rna} devised the \emph{Round-and-Approx} (R\&A) framework
for solving variants of the bin-packing problem.
The R\&A framework requires an approximate solution to the \config{} LP.
They used the R\&A framework to get approximation algorithms
for vector bin-packing and 2-dimensional geometric bin-packing.
Improved algorithms were later devised for these bin-packing variants
\cite{bansal2014binpacking,BansalE016},
but those algorithms also use the R\&A framework.
Thus, solving the \config{} LP of (variants of) bin-packing is an important problem.
In \cref{sec:bin-packing}, we show how to approximately solve the \config{} LP
of a large class of bin-packing problems.

An implicit covering LP is one where $A$ and $c$ are not given to us explicitly.
Instead, we are given an input $I$, and $A$, $b$, $c$ are defined in terms of $I$.
The \config{} LP for bin-packing, for example, is defined implicitly.
Such an implicit definition is helpful when $N$, the number of columns in $A$,
is super-polynomial in the input size $|I|$.
We assume that $m$, the number of rows in $A$, is polynomial in $|I|$
and that $b$ has already been computed.
Since $A$ and $c$ are not given to us explicitly, we will assume the presence of certain oracles
that can help us indirectly get useful information about $A$ and $c$.
Our main result is an approximation algorithm $\covLPsolveHyp$
(described in \cref{sec:covLP-solve})
that solves $\covLP(A, b, c)$ in polynomial time using these oracles.

The main implication of our result is that for any $\eps > 0$, we can
$\alpha(1+\eps)$-approximately solve the \config{} LP of some variants of bin-packing,
using a $(1/\alpha)$-approximation algorithm for the corresponding knapsack problem.
Previous results give us a PTAS for the \config{} LP
using a PTAS for the corresponding knapsack problem
(see \cref{sec:prior-work} for details).
For many variants of knapsack, a PTAS is not known, so previous results cannot be applied.
Our algorithm, however, works even for polynomially-large $\alpha$.

\subsection{Formal Statement of Our Results}

\paragraph{Preliminaries:}
\begin{itemize}
\item For a non-negative integer $n$, let $[n] \defeq \{1, 2, \ldots, n\}$.
\item Let $\mathbb{R}_{\ge 0}$ be the set of non-negative real numbers.
Let $\mathbb{R}_{> 0} \defeq \mathbb{R}_{\ge 0} - \{0\}$.
\item For a vector $x$, $\support(x) \defeq \{j: x_j \neq 0\}$.
\item For a vector $x$, $x \ge 0$ means that every coordinate of $x$ is non-negative.
\item For a matrix $A$, $A[i, j]$ is the entry in the $i\Th$ row and $j\Th$ column of $A$.
\item Let $e_j$ be a vector whose $j\Th$ component is 1 and all other components are 0.
\item $\poly(n)$ is the set of functions of $n$ that are upper-bounded
by a polynomial in $n$.
\end{itemize}

\begin{definition}[Column oracle]
\label{defn:column-oracle}
The column oracle for $A \in \mathbb{R}_{\ge 0}^{m \times N}$
takes $j \in [N]$ as input and returns the $j\Th$ column of $A$.
\end{definition}
\begin{definition}[Cost oracle]
\label{defn:cost-oracle}
The cost oracle for $c \in \mathbb{R}_{> 0}^N$
takes $j \in [N]$ as input and returns $c_j$.
\end{definition}

\begin{definition}[Index-finding oracle]
\label{defn:index-find}
Let $A \in \mathbb{R}_{\ge 0}^{m \times N}$ and $c \in \mathbb{R}^N_{> 0}$
be implicitly defined in terms of input $I$.
For $j \in [N]$, define the function
$D_j: \mathbb{R}^m_{\ge 0} \mapsto \mathbb{R}_{\ge 0}$ as
\[ D_j(y) \defeq y^TA\left(\frac{e_j}{c_j}\right) = \frac{1}{c_j} \sum_{i=1}^m y_iA[i, j] \]
Then for $\eta \in (0, 1]$, an $\eta$-weak index-finding oracle for $I$,
denoted by $\indexFind$, is an algorithm that takes as input $y \in \mathbb{R}^m_{\ge 0}$
and returns $k \in [N]$ such that $D_k(y) \ge \eta \max_{j=1}^N D_j(y)$.
\end{definition}

The algorithm $\covLPsolve$ takes the following inputs:
\begin{itemize}
\item $I$: the input used to implicitly define $\covLP(A, b, c)$.
\item $q$: an upper-bound on $\opt(\covLP(A, b, c))$.
\item $\rho$: an upper-bound on
${\displaystyle q\max_{i=1}^m \max_{j=1}^N \frac{A[i,j]}{b_ic_j}}$.
\item $\eps, \eta \in (0, 1]$.
\end{itemize}
$\covLPsolve$ is also provided a column oracle for $A$, a cost oracle for $c$,
and an $\eta$-weak index-finding oracle.

\Cref{thm:cov-lp-solve,thm:covlps-time} below are our main results,
the proofs of which can be found in \cref{sec:covLP-solve}.

\newcommand*{\thmCovLPSolve}{%
Let $\covLP(A, b, c)$ be implicitly defined in terms of input $I$. Then
\\ $\covLPsolve(I, q, \rho, \eps, \eta)$ returns a
$(1+\eps+\eps^2)/\eta$-approximate solution to $\covLP(A, b, c)$.}
\begin{theorem}
\label{thm:cov-lp-solve}
\thmCovLPSolve
\end{theorem}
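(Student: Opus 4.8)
The plan is to analyze $\covLPsolve$ as a modification of the Plotkin--Shmoys--Tardos (PST) width-based Lagrangian/multiplicative-weights framework for fractional covering. The natural strategy is to set up the Lagrangian dual view: solving $\covLP(A,b,c)$ to within a factor is equivalent to approximately finding a probability-like distribution over the constraints (the vector $y$) together with a solution $x$ that is feasible for the ``averaged'' constraint $y^TAx \ge y^Tb$ while having small cost, and then certifying near-feasibility of every individual constraint. First I would recall/restate what one iteration of the PST scheme does: given current weights $y$ on the $m$ covering constraints, it calls an oracle to find a column (or a convex combination of columns) that is good for the weighted constraint; here the $\eta$-weak index-finding oracle returns an index $k$ with $D_k(y) \ge \eta \max_j D_j(y)$, i.e.\ it identifies an approximately-best column in the sense of maximizing $y^TAx / (c^Tx)$ over $x = e_j/c_j$. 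Scaling that column appropriately (using $q$ as the budget on $c^Tx$ and $\rho$ to control the width) gives a bounded-width update; the weights are then updated multiplicatively according to how much each constraint is (under-)satisfied, and the returned solution is the average of the per-iteration columns.

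The core of the argument is the standard MWU potential/regret analysis, adapted to absorb the factor $1/\eta$. Concretely, I would track a potential $\Phi_t = \sum_i y_i^{(t)}$ (or $\log$ thereof), show the multiplicative update gives the usual one-sided regret bound relating $\sum_t \langle y^{(t)}, \text{(deficit)}\rangle$ to $\min_i \sum_t (\text{deficit}_i)$ up to a $(1+\eps)$ multiplicative slack and an additive term controlled by the width $\rho$ and the number of rounds. The number of rounds is chosen (this is where $\rho$ enters) as roughly $\Theta(\rho \ln m / \eps^2)$ so the additive error is absorbed. The key deviation from textbook PST is that the oracle is only $\eta$-weak: at each step the column we add achieves weighted value at least $\eta$ times the best possible, which is at least $\eta$ times what the optimal LP solution $x^*$ would achieve against the current weights (since $x^*$, scaled to budget $q$, is a feasible competitor). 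Propagating this $\eta$ through the regret inequality yields that the averaged solution $\bar x$, when scaled up by $1/\eta$, satisfies every constraint $i$ to within a $(1-O(\eps))$ factor, and has cost at most $q \le \opt \cdot (1+\dots)$; a final clean-up scaling by $1/(1-O(\eps))$ turns near-feasibility into exact feasibility and collapses all the $\eps$-terms into the claimed $(1+\eps+\eps^2)/\eta$ bound. I would be careful that $q$ is only an upper bound on $\opt$, not $\opt$ itself, and that $\rho$ is an upper bound on the relevant width quantity, so all inequalities go the right way; the arithmetic of combining $(1+\eps)$ from the regret bound, the scaling to restore feasibility, and the bound $q \ge \opt$ should be arranged to land exactly on $(1+\eps+\eps^2)/\eta$ rather than something slightly worse.

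The main obstacle, I expect, is precisely this bookkeeping: showing that the weak oracle's factor $\eta$ interacts \emph{multiplicatively and cleanly} with the PST guarantee, rather than compounding with the $(1+\eps)$ slack in a way that degrades the bound (e.g.\ to $(1+\eps)^2/\eta$ or $(1+\eps)/\eta + \eps$). The delicate point is the comparison step inside the regret analysis: one must argue that the $\eta$-weak oracle's output, measured against the current weight vector $y^{(t)}$, dominates $\eta$ times the contribution of the (appropriately scaled) optimal solution, and this must hold \emph{simultaneously} with the width bound needed to control the additive error. A secondary nuisance is verifying that $\bar x \ge 0$ and $\support(\bar x)$ stays polynomial (so the output is well-formed and the running-time claim in the companion theorem goes through), and handling the possibility that $\max_j D_j(y) = 0$ for some intermediate $y$ (in which case the current averaged solution already over-satisfies all constraints and we can stop early). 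I would structure the proof as: (1) describe one iteration and the width bound; (2) state and prove the MWU potential inequality; (3) plug in the $\eta$-weak oracle guarantee and the competitor $x^*$; (4) average, rescale to restore feasibility, and collapse the error terms to get $(1+\eps+\eps^2)/\eta$.
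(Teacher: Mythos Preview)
Your proposal misreads the structure of $\covLPsolve$. The algorithm (see \cref{algo:cov-lp-solve}) is \emph{not} a single PST/MWU run with cost budget $q$; it is a binary search over candidate objective values $r \in [0,q]$, and for each $r$ it calls $\fracCovII$, which in turn invokes $\fracCover$ on $\fcov(A,b,P_r)$ with $P_r=\{x:c^Tx=r,\;x\ge 0\}$. The MWU analysis you sketch is essentially the content of \cref{thm:frac-cover} and \cref{thm:frac-cov-2}, which the paper has already established and which the proof of \cref{thm:cov-lp-solve} uses as a black box. The actual proof is a short binary-search argument: one shows the invariants $g(\alpha)=0$, $g(\beta)=1$, $c^T\xhat=\beta$, $A\xhat\ge\mu b$ (\cref{thm:covlps-invs}), together with the bracketing $\alpha<r^*$ and $\beta\ge\mu r^*$ (\cref{thm:covlps-part}), and then at termination $\beta\le(1+\delta)\alpha<(1+\delta)r^*$, so $\xhat/\mu$ is feasible with cost $\beta/\mu\le(1+\delta)r^*/\mu=(1+\eps+\eps^2)r^*/\eta$, using $\delta=\eps^2/(1+\eps)$ and $\mu=\eta/(1+\eps)$.

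The gap in your plan is not just structural but quantitative: a single MWU run with fixed budget $q$ produces a solution of cost $\Theta(q)$ that is $\eta/(1+\eps)$-feasible, so after rescaling to restore feasibility the cost is $\Theta(q(1+\eps)/\eta)$, giving approximation ratio $(q/r^*)\cdot(1+\eps)/\eta$. Nothing in the hypotheses bounds $q/r^*$; $q$ may exceed $r^*$ by any polynomial factor (indeed \cref{thm:covlps-time} explicitly allows $\log(q/r^*)$ in the running time). Your sentence ``has cost at most $q\le\opt\cdot(1+\dots)$'' is exactly where this breaks: $q\ge\opt$, not $\le$. The binary search is what eliminates the dependence on $q/r^*$ from the approximation ratio, and it is the missing idea in your proposal.
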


\newcommand*{\Uexpr}{m + \ceil{\ln\left(\frac{m}{\eta}\right)}
    \ceil{\frac{312m\rho(1+\eps)}{\eta\eps^3}\ln\left(\frac{12m}{\eps}\right)}}
\newcommand*{\thmCovLPSolveTime}{%
Let $\covLP(A, b, c)$ be implicitly defined in terms of input $I$,
where $A \in \mathbb{R}^{m \times N}_{\ge 0}$. Let
\begin{align*}
M &\defeq 3 + 2\lg\left(\frac{1}{\eps}+1\right)
    + \lg\left(\frac{1}{\eta}\right) + \lg\left(\frac{q}{\opt(\covLP(A, b, c))}\right)
\\ U &\defeq \Uexpr \in \Otild\left(\frac{m\rho}{\eta\eps^3}\right)
\end{align*}
Then all of the following hold for $\covLPsolve(I, q, \rho, \eps, \eta)$:
\begin{itemize}
\item $\covLPsolve$ makes at most $MU$ calls to the index-finding oracle,
at most $MU$ calls to the column oracle, and at most $MU$ calls to the cost oracle.
\item In $\covLPsolve$, the time taken by non-oracle operations is $O(MUm)$.
\item The solution $\xhat$ returned by $\covLPsolve$ has $|\support(\xhat)| \le U$.
\end{itemize}}
\begin{theorem}
\label{thm:covlps-time}
\thmCovLPSolveTime
\end{theorem}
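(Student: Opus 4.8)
The plan is to follow the layered structure of $\covLPsolve$: an outermost loop that rescales a running estimate of $\opt(\covLP(A,b,c))$, inside which a fractional-covering routine — built from a seed-finding step and repeated cover-improvement steps — runs a Plotkin--Shmoys--Tardos-style multiplicative-weights iteration. I would establish the three claims by bounding (i) the number of outer rescaling iterations by $M$, (ii) the work done inside one outer iteration by at most $U$ calls to each oracle and $O(Um)$ non-oracle time, and (iii) then multiplying.

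\emph{Outer loop.} $\covLPsolve$ starts with the supplied upper bound $q$ on $\opt$ and, until a feasibility certificate is produced, repeatedly halves its target value, invoking the fractional-covering routine at each scale. I would argue the loop must stop once the target drops to within a fixed $\poly(\eps,\eta)$ factor of $\opt(\covLP(A,b,c))$ — exactly what the returned solution needs in order to meet the $(1+\eps+\eps^2)/\eta$ guarantee of \cref{thm:cov-lp-solve} — so the number of halvings is at most $\lg(q/\opt(\covLP(A,b,c)))$ plus an additive term that depends only on $\eps$ and $\eta$; tracking the constants carefully gives precisely $M = 3 + 2\lg(1/\eps+1) + \lg(1/\eta) + \lg(q/\opt(\covLP(A,b,c)))$. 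The delicate point here is matching these constants to the exact termination test of the loop.

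\emph{Per-phase bound.} Within one outer iteration, the fractional-covering routine first computes a starting solution of support at most $m$ — one column per covering constraint suffices to make $Ax \ge b$ satisfiable, though at possibly poor cost — using at most $m$ calls to each oracle, and then performs $\ceil{\ln(m/\eta)}$ rounds of cover-improvement, each of which runs $\ceil{312m\rho(1+\eps)/(\eta\eps^3)\ln(12m/\eps)}$ multiplicative-weights iterations. This is the heart of the argument and where the PST analysis enters: maintain exponential weights $y \in \mathbb{R}^m_{\ge 0}$ on the $m$ constraints; in each iteration, call the $\eta$-weak index-finding oracle to obtain a column $k$ with $D_k(y) \ge \eta\max_j D_j(y)$, use the column and cost oracles to fetch the $k\Th$ column of $A$ and $c_k$, add a suitably scaled $e_k$ to the current solution, and update $y$. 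The crux estimate is that the relevant potential contracts per iteration by an amount controlled by $\rho$ — which bounds the instance's ``width'' $q\max_{i,j} A[i,j]/(b_ic_j)$ — and degraded by the factor $\eta$ since the oracle is only $\eta$-weak, so that the stated iteration count drives the constraint violation below the tolerance of the current round. Each iteration makes one index-finding, one column, and one cost call, does $O(m)$ non-oracle work (updating the $m$ weights and reading the fetched column), and adds at most one coordinate to the support. Summing the seed step and all improvement iterations bounds the number of calls to each oracle within one outer iteration by $m + \ceil{\ln(m/\eta)}\ceil{312m\rho(1+\eps)/(\eta\eps^3)\ln(12m/\eps)} = U$, the non-oracle time by $O(Um)$, and the support of the resulting solution by $U$.

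\emph{Combining.} Since the outer loop runs at most $M$ times and each pass incurs at most $U$ calls to each oracle and $O(Um)$ non-oracle time, the totals are at most $MU$ and $O(MUm)$; and because $\covLPsolve$ returns the solution built during a single (the last) outer iteration, that solution has support at most $U$. The estimate $U \in \Otild(m\rho/(\eta\eps^3))$ follows by absorbing the logarithmic and ceiling factors. The one genuine obstacle I anticipate is the width/iteration-count bookkeeping in the per-phase step: pinning down, with all constants, why $312m\rho(1+\eps)/(\eta\eps^3)\ln(12m/\eps)$ iterations per round and $\ceil{\ln(m/\eta)}$ rounds suffice. By contrast, the outer-loop count and the per-iteration oracle and time accounting are essentially mechanical once the pseudocode is fixed.
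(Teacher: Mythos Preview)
Your decomposition is the same as the paper's: bound the number of outer iterations by $M$ (\cref{thm:covlps-iters}), bound the work per call to the fractional-covering routine by $U$ oracle calls and $O(Um)$ time (\cref{thm:frac-cover-time} together with the observation that $\pointFind$ here has $\tau=1$), and multiply; the support bound comes from the single call that produced the returned $\xhat$. Your per-phase accounting (seed step of $m$ calls, then $\ceil{\ln(m/\eta)}$ improvement rounds, each with the stated PST iteration count, each iteration touching one column) is exactly what \cref{thm:fc-pf-calls,thm:fc-time-2,thm:frac-cov-2} establish.

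One correction: the outer loop of $\covLPsolve$ is not a one-sided halving search that ``stops once the target drops to within a fixed factor of $\opt$.'' It is a genuine binary search maintaining $\alpha<r^*\le\beta$ (with $\alpha=0$ initially) and terminating when $\beta\le(1+\delta)\alpha$, where $\delta=\eps^2/(1+\eps)$. The iteration bound $M$ still comes out as you state, but the argument has two phases: while $\alpha=0$ the midpoint halves $\beta$, and this phase ends after at most $1+\lg(q/(\mu r^*))$ steps since $\beta$ must stay at least $\mu r^*$; once $\alpha>0$, the gap $\beta-\alpha$ halves each step, and $\ceil{\lg(1/\delta)}$ more steps force $\beta\le(1+\delta)\alpha$. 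Summing and adding one for the initial call to $\fracCovII$ outside the loop gives $M$. Your own flag that ``matching these constants to the exact termination test'' is delicate is well-placed; with the correct loop description this step is routine.
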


\begin{corollary}
Let $r^* = \opt(\covLP(A, b, c))$.
Then to approximately solve $\covLP(A, b, c)$ in polynomial time using $\covLPsolve$,
we need a way to compute $\rho$ and $q$ in polynomial time,
$m, 1/\eps, 1/\eta, \log(q/r^*), \rho \in \poly(|I|)$,
and the oracles should run in $\poly(|I|)$ time.
\end{corollary}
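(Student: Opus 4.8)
The plan is to read both halves of the corollary straight off \cref{thm:cov-lp-solve,thm:covlps-time}; no genuinely new argument is needed, only bookkeeping. Correctness is immediate from \cref{thm:cov-lp-solve}: $\covLPsolve(I, q, \rho, \eps, \eta)$ already returns a $(1+\eps+\eps^2)/\eta$-approximate solution to $\covLP(A,b,c)$, which for $\eps \le 1$ is an $O(1/\eta)$-approximation. So the only thing left is to check that, under the stated hypotheses, the algorithm together with the preprocessing that produces its inputs runs in time $\poly(|I|)$.

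First I would invoke \cref{thm:covlps-time} and argue that the two quantities $M$ and $U$ appearing there are polynomially bounded in $|I|$. For $U \in \Otild(m\rho/(\eta\eps^3))$: since $m, \rho, 1/\eta, 1/\eps \in \poly(|I|)$ and the factors suppressed by $\Otild$ are only logarithmic in these, $U \in \poly(|I|)$. For $M = 3 + 2\lg(1/\eps + 1) + \lg(1/\eta) + \lg(q/r^*)$: the terms $\lg(1/\eps + 1)$ and $\lg(1/\eta)$ are $O(\log|I|)$ because $1/\eps, 1/\eta \in \poly(|I|)$, and $\lg(q/r^*) \in \poly(|I|)$ is assumed outright; hence $M \in \poly(|I|)$.

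Next I would feed these into the three bullets of \cref{thm:covlps-time}: the number of index-finding, column, and cost oracle calls is at most $MU \in \poly(|I|)$; the non-oracle time is $O(MUm) \in \poly(|I|)$; and the output $\xhat$ has $|\support(\xhat)| \le U \in \poly(|I|)$, so it fits in polynomial space. Accounting for the rest, the total time is (polynomial time to compute $\rho$ and $q$) $+$ (at most $MU$ oracle evaluations, each $\poly(|I|)$ by hypothesis) $+$ $O(MUm)$ for everything else, which is $\poly(|I|)$. That is the whole proof. There is no real obstacle; the only point requiring attention is that the logarithmic and $\Otild$-hidden factors inside $M$ and $U$ cannot push the bound past polynomial once $m, 1/\eps, 1/\eta, \log(q/r^*), \rho$ are all in $\poly(|I|)$ --- and they cannot. (The corollary asserts that these conditions suffice; we neither need nor prove that they are necessary.)
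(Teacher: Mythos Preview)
Your proposal is correct and matches the paper's approach: the corollary is stated without proof immediately after \cref{thm:cov-lp-solve,thm:covlps-time}, treated as a direct consequence of those two theorems, and your bookkeeping is exactly the intended reading. Your parenthetical about sufficiency versus necessity is also apt given the slightly loose phrasing of the statement.
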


\subsection{The Fractional Covering Problem}

A subsidiary contribution of this paper is an algorithm
for the \emph{fractional covering problem}.
We use that algorithm as a subroutine in $\covLPsolve$.

\begin{definition}[Fractional Covering Problem \cite{plotkin1995fast}]
\label{defn:frac-cov}
Let $A \in \mathbb{R}_{\ge 0}^{m \times N}$ be an $m$-by-$N$ matrix
and $b \in \mathbb{R}^m_{> 0}$ be an $m$-dimensional vector.
Let $P \subseteq \mathbb{R}^N$ be a convex polytope
such that $Ax \ge 0$ for all $x \in P$.
The fractional covering problem on input $(A, b, P)$, denoted as $\fcov(A, b, P)$,
requires us to do one of the following:
\begin{itemize}
\item output a feasible solution $x$,
i.e. $x \in P$ such that $Ax \ge b$.
\item claim that $\fcov(A, b, P)$ is unsatisfiable,
i.e. $Ax \ge b$ is not satisfied by any $x \in P$.
\end{itemize}
For $\alpha \in (0, 1]$, an algorithm is said to $\alpha$-weakly solve
$\fcov(A, b, P)$ iff it does one of the following:
\begin{itemize}
\item output an $\alpha$-approximate solution $x$,
i.e. $x \in P$ such that $Ax \ge \alpha b$.
\item claim that $\fcov(A, b, P)$ is unsatisfiable.
\end{itemize}
\end{definition}

An implicit fractional covering problem is one where $A$, $b$ and $P$
are not given to us explicitly. Instead, we are given an input $I$,
and $A$, $b$, $P$ are defined in terms of $I$.

We give an algorithm $\fracCoverHyp$ (described in \cref{sec:frac-cover})
that weakly solves $\fcov(A, b, P)$ in polynomial time using certain oracles.
$\fracCover$ is obtained by modifying the algorithm of Plotkin, Shmoys and Tardos
\cite{plotkin1995fast} for the fractional covering problem.
Moreover, we use $\fracCover$ as a subroutine in $\covLPsolve$.

\begin{definition}[Product oracle]
\label{defn:product-oracle}
Let $\fcov(A, b, P)$ be a fractional covering problem instance,
where $A, b, P$ are defined implicitly in terms of $I$.
The product oracle for $I$ takes $x \in P$ as input and returns $Ax$.
\end{definition}

\begin{definition}[Point-finding oracle]
\label{defn:point-find}
Let $\fcov(A, b, P)$ be a fractional covering problem instance,
where $A, b, P$ are defined implicitly in terms of $I$.
For $\eta \in (0, 1]$, an $\eta$-weak point-finding oracle for $I$,
denoted by $\pointFind$, is an algorithm that takes as input
$y \in \mathbb{R}^m_{\ge 0}$ and returns $\xhat \in P$ such that
$y^TA\xhat \ge \eta \max_{x \in P} y^TAx$.
\end{definition}

\begin{definition}
\label{defn:width}
\[ \width(A, b, P) \defeq \max_{x \in P} \max_{i=1}^m \frac{(Ax)_i}{b_i} \]
\end{definition}

The algorithm $\fracCover$ takes the following inputs:
\begin{itemize}
\item $I$: the input used to implicitly define $\fcov(A, b, P)$.
\item $\rho$: an upper-bound on $\width(A, b, P)$.
\item $\eps, \eta \in (0, 1]$.
\end{itemize}
$\fracCover$ is also provided a product oracle for $A$
and an $\eta$-weak point-finding oracle.

\Cref{thm:frac-cover,thm:frac-cover-time} below are the main results for this problem,
the proofs of which can be found in \cref{sec:frac-cover}.

\newcommand*{\thmFracCover}{%
Let $\fcov(A, b, P)$ be a fractional covering problem instance
where $A, b, P$ are implicitly defined in terms of input $I$.
Then $\fracCover(I, \rho, \eps, \eta)$ will $\eta/(1+\eps)$-weakly solve $\fcov(A, b, P)$,
i.e., if it returns \texttt{null}, then $\fcov(A, b, P)$ is unsatisfiable,
and if it returns a vector $x$, then $x \in P$ and $Ax \ge (\eta/(1+\eps))b$.}
\begin{theorem}
\label{thm:frac-cover}
\thmFracCover
\end{theorem}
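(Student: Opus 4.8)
The plan is to analyze $\fracCover$ as a modified Plotkin--Shmoys--Tardos potential-reduction scheme (the algorithm itself is given in \cref{sec:frac-cover}). Write $\lambda(x) \defeq \min_{i\in[m]} (Ax)_i/b_i$ for the current coverage ratio; the target is $\lambda(x) \ge \eta/(1+\eps)$. The algorithm maintains an iterate $x\in P$ --- starting from a default point of $P$ or one returned by an initial oracle call --- and repeats: read $Ax$ from the product oracle; form the exponential weights $y$ with $y_i \propto e^{-\alpha(Ax)_i/b_i}/b_i$, where $\alpha = \Theta((\ln m)/(\eps\eta))$, normalized so that $y^Tb = 1$; call the $\eta$-weak point-finding oracle on $y$ to obtain $\xhat\in P$ with $y^TA\xhat \ge \eta\max_{x\in P} y^TAx$; and update $x \gets (1-\sigma)x + \sigma\xhat$ for a step size $\sigma = \Theta(\eps\eta/(\alpha\rho^2))$ (the exact constants are calibrated in the proof). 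It returns the current $x$ once $\lambda(x) \ge \eta/(1+\eps)$ --- equivalently, once the potential $\Phi(x) \defeq \frac1\alpha\ln\sum_{i=1}^m e^{-\alpha(Ax)_i/b_i}$ drops below $-\eta/(1+\eps)$ --- and returns \texttt{null} if a preset iteration budget is exhausted first.

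The ``returns a vector'' half of the statement is immediate: each update replaces $x$ by a convex combination of $x$ and a point $\xhat\in P$, so every iterate lies in $P$ by convexity; and the stopping rule guarantees $\lambda(x) \ge \eta/(1+\eps)$, i.e.\ $Ax \ge (\eta/(1+\eps))b$, whenever a vector is returned. If the implemented test is phrased via $\Phi$ rather than $\lambda$ directly, one also invokes the elementary sandwich $-\lambda(x) \le \Phi(x) \le (\ln m)/\alpha - \lambda(x)$, which follows from $e^{-\alpha\lambda(x)} \le \sum_{i=1}^m e^{-\alpha(Ax)_i/b_i} \le m\,e^{-\alpha\lambda(x)}$; here the hypothesis $Ax\ge 0$ on $P$ from \cref{defn:frac-cov} is what guarantees all exponents are $\le 0$ and $\lambda(x)\ge 0$ throughout.

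For the ``returns \texttt{null}'' half I will prove the contrapositive: if $\fcov(A,b,P)$ is satisfiable, $\fracCover$ returns a vector. Fix a feasible $x^*\in P$, so $Ax^*\ge b$. Then for \emph{every} weight vector $y\ge 0$ the algorithm forms, $\max_{x\in P} y^TAx \ge y^TAx^* \ge y^Tb = 1$, hence the oracle's output obeys $y^TA\xhat \ge \eta$. Plugging this into the one-step potential estimate: with $\delta_i \defeq (A\xhat)_i/b_i - (Ax)_i/b_i \in [-\rho,\rho]$ and $w_i \defeq y_ib_i$ (a probability distribution), the bound $e^{-t} \le 1 - t + t^2$ for $|t|\le 1$ (valid since $\alpha\sigma\rho \le 1$) and $\ln(1+u)\le u$ give
\[ \Phi(x') \le \Phi(x) - \sigma\sum_i w_i\delta_i + \alpha\sigma^2\sum_i w_i\delta_i^2 \le \Phi(x) - \sigma\big(y^TA\xhat - y^TAx\big) + \alpha\sigma^2\rho^2 . \]
One then bounds $y^TAx = \sum_i w_i(Ax)_i/b_i \le \lambda(x) + (\ln m)/\alpha$, using $w_i \le e^{-\alpha((Ax)_i/b_i - \lambda(x))}$ and the entropy inequality $-\sum_i w_i\ln w_i \le \ln m$. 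So as long as $\lambda(x) < \eta/(1+\eps)$ we get $y^TA\xhat - y^TAx \ge \eta - \lambda(x) - (\ln m)/\alpha \ge \eta\eps/(1+\eps) - (\ln m)/\alpha = \Omega(\eps\eta)$ for $\alpha = \Theta((\ln m)/(\eps\eta))$, and with $\sigma$ as above the net drop in $\Phi$ each iteration is $\Omega(\sigma\eps\eta) > 0$. Since $\Phi$ starts at most $(\ln m)/\alpha = O(\eps\eta)$ and the loop runs only while $\lambda(x) < \eta/(1+\eps)$, hence only while $\Phi \ge -\lambda(x) > -\eta/(1+\eps)$, the total possible decrease is $O(\eta)$, so the loop must stop with a vector within $O(1/(\sigma\eps))$ iterations; choosing the preset budget in $\fracCover$ at least this large (which the running-time analysis in \cref{sec:frac-cover} does) closes the argument. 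The $\eta$ in the guarantee is exactly the loss from using a \emph{weak} oracle, and the $(1+\eps)$ is the margin that absorbs the $(\ln m)/\alpha$ smoothing gap and the $\alpha\sigma^2\rho^2$ curvature term.

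I expect the main obstacle to be the constant-level bookkeeping in the potential-decrease step: $\alpha$, $\sigma$ and the iteration budget must be fixed as explicit functions of $\eps,\eta,m,\rho$ so that simultaneously (i) $\alpha\sigma\rho \le 1$, so the second-order expansion is legitimate; (ii) the first-order gain $\sigma(y^TA\xhat - y^TAx)$ provably dominates the curvature term $\alpha\sigma^2\rho^2$; and (iii) the accumulated $(\ln m)/\alpha$ slack costs at most the claimed multiplicative $(1+\eps)$ in the final coverage ratio, so that one lands on $\eta/(1+\eps)$ exactly rather than on some unspecified constant factor. Secondary care points: the normalization of $y$ and the degenerate case $Ax=0$ (so that $\max_{x\in P} y^TAx$ and the weights are well defined), both handled via $Ax\ge 0$ on $P$; verifying that the \texttt{null} branch is triggered only by exhausting the iteration budget and never by a spuriously small oracle response, so the contrapositive argument applies verbatim; and confirming that the hard-coded budget is no smaller than the $O(1/(\sigma\eps))$ bound the potential argument produces.
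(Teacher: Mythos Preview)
Your proposal analyzes an algorithm that is not quite the paper's $\fracCover$. In the paper, $\fracCover$ never returns \texttt{null} because an iteration budget ran out; there is no such budget. It returns \texttt{null} in exactly three places: (i) when $\rho=0$; (ii) when $\getSeed$ fails, i.e.\ some $(Ax^{(i)})_i<\eta b_i$; and (iii) when $\improveCover$ reports \texttt{success} (meaning the pair $(x,y_\alpha(x))$ satisfies the relaxed optimality condition $\CondII(\eps_2,\eps_3)$) but $\lambda(x)<1-\eps'$. Cases (i) and (ii) you do not address; case (iii) is the heart of the matter, and the paper handles it by a \emph{dual-certificate} argument, not a counting argument: $\CondI$ is always maintained (\cref{thm:improve-cover-c1}), \texttt{success} gives $\CondII$, so by \cref{thm:rlx-to-opt} $x$ is $\eps'$-optimal, and then $\lambda(x)<1-\eps'$ forces $\lambda^*<1$ via \cref{thm:opt-to-feas}, i.e.\ infeasibility. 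Your contrapositive ``if feasible then the potential keeps dropping, so we finish before the budget'' proves the right implication for the wrong algorithm.

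That said, the potential machinery you set up is exactly what the paper uses \emph{inside} $\improveCover$ (compare your one-step estimate to \cref{thm:pot-dec}, and your $y^TAx\le\lambda(x)+(\ln m)/\alpha$ is essentially \cref{thm:c1-sat}). The difference is structural: the paper packages the potential-decrease argument as a bound on how long $\improveCover$ can run before either $\lambda$ doubles or $\CondII$ holds (\cref{thm:improve-cover-iters}), and then uses $\CondI\wedge\CondII$ together with weak duality to certify infeasibility directly. To fix your write-up, replace the iteration-budget story with this certificate step, and add the one-line arguments for the $\rho=0$ and $\getSeed$ branches.
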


\newcommand*{\thmFracCoverTime}{%
Let $\fcov(A, b, P)$ be implicitly defined in terms of input $I$,
where $A \in \mathbb{R}^{m \times N}_{\ge 0}$.
Let $\tau$ be an upper-bound on the support of the output of $\pointFind$.
Suppose $\fracCover(I, \rho, \eps, \eta)$ calls the point-finding oracle $T$ times. Then
\[ T \le U \defeq \Uexpr \in \Otild\left(\frac{m\rho}{\eta\eps^3}\right) \]
Additionally,
\begin{itemize}
\item $\fracCover$ makes at most $T$ calls to the product oracle.
For every input $x$ to the product oracle, $|\support(x)| \le \tau$.
\item The running time of $\fracCover$, excluding the time taken by oracles,
is $O(T(m + \tau))$.
\item The solution $\xhat$ returned by $\fracCover$ has $|\support(\xhat)| \le T\tau$.
\end{itemize}}
\begin{theorem}
\label{thm:frac-cover-time}
\thmFracCoverTime
\end{theorem}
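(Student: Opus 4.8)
Matching the product structure of $U$, I expect $\fracCover$ to be a Plotkin--Shmoys--Tardos-style procedure organized into $\ceil{\ln(m/\eta)}$ scaling \emph{phases}, adapted to the fact that the polytope $P$ is accessed only through an $\eta$-weak $\pointFind$ oracle, and the plan is to read all four bullets off the resulting iteration count. Since \cref{thm:frac-cover} already certifies the correctness of whatever $\fracCover$ outputs, only the quantitative claims remain. In a typical phase, the algorithm fixes a target value $\lambda^{\ast}$ for $\min_i (Ax)_i/b_i$ and then repeatedly calls $\pointFind$ with the weight vector $y$ proportional to the gradient of the smoothed-min potential $\Phi(x) \defeq \sum_{i=1}^m \exp(-\gamma (Ax)_i/b_i)$ (for a suitable $\gamma$ of order $\eps^{-1}(\lambda^{\ast})^{-1}\ln(m/\eps)$), moving $x$ a small step towards the returned point, until $\Phi$ drops below the threshold that certifies $\min_i (Ax)_i/b_i \ge \lambda^{\ast} - O(\eps)$.

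The heart of the proof is the per-iteration progress lemma: as long as the current phase has not terminated, one iteration of $\fracCover$ multiplies $\Phi$ by a factor at most $1 - \Omega(\eta\eps^3/(m\rho))$. This combines (i) the first-order gain, which is proportional to $y^{T}A\xhat$ and, by the $\eta$-weakness of $\pointFind$ together with an averaging argument over the $m$ constraints, is at least an $\eta$-fraction of what an exact oracle would contribute; (ii) the second-order error from Taylor-expanding the exponentials, controlled by the width bound $\rho \ge \width(A,b,P)$, which forces the step length down by a factor of order $\eps/\rho$; and (iii) the choice of $\gamma$ balancing these two. Combining this with the facts that $\Phi \le m$ at the start of each phase and that $\Phi$ stays above the phase threshold (of order $(\eps/m)^{O(1)}$) until the phase ends, taking logarithms shows each phase performs $O(\frac{m\rho}{\eta\eps^3}\ln(m/\eps))$ iterations. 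Summing over the $\ceil{\ln(m/\eta)}$ phases and adding the at most $m$ iterations of the initial bootstrapping stage (which, e.g., makes every $(Ax)_i$ positive so the potential argument can begin) gives $T \le U$ with the stated constants; $U \in \Otild(m\rho/(\eta\eps^3))$ is then immediate, since $\ceil{\ln(m/\eta)}$ and $\ln(12m/\eps)$ are polylogarithmic and the additive $m$ is dominated.

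The remaining three bullets are bookkeeping. Each iteration issues exactly one $\pointFind$ call and then one product-oracle call, on the point returned by $\pointFind$ in that iteration, to update the maintained image $Ax$ of the current iterate via a convex-combination step; since that point has support at most $\tau$, every product-oracle query respects the support bound, and there are at most $T$ such queries. For the non-oracle running time, an iteration updates the $m$-vectors $Ax$ and $y$ in $O(m)$ time and folds in a point of support at most $\tau$ in $O(\tau)$ time, for $O(m+\tau)$ per iteration and $O(T(m+\tau))$ overall. Finally, the returned $\xhat$ is a convex combination of the at most $T$ points returned by $\pointFind$ across all iterations, each of support at most $\tau$, so $|\support(\xhat)| \le T\tau$.

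I expect essentially all the difficulty to be in the per-iteration potential-decrease lemma and in pinning down the explicit constant $312$ and the two nested ceilings. Because $\fracCover$ is a \emph{modified} PST algorithm, the classical argument has to be re-derived with the $\eta$ loss threaded through every step: the step length must be small enough that the quadratic error term is dominated even after that loss, and the per-phase advance of $\lambda^{\ast}$ must be chosen so that $\ceil{\ln(m/\eta)}$ phases suffice while staying compatible with $\gamma$. Once the per-step multiplicative decrease and the lower bound on $\Phi$ within a phase are established, reaching exactly $U$ is routine, if delicate, arithmetic.
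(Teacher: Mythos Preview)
Your proposal is correct and matches the paper's approach: the paper proves the bound $T\le U$ by splitting the $\pointFind$ calls into the $m$ calls made by an initial seeding routine ($\getSeed$) and at most $\ceil{\lg(m/\eta)}$ invocations of a phase routine ($\improveCover$), each of which is bounded via exactly the exponential-potential decrease argument you describe (\cref{thm:pot-dec,thm:improve-cover-iters,thm:fc-pf-calls}); the remaining three bullets are handled by the same incremental-maintenance bookkeeping you outline, including the key point that $Ax$ is updated via the convex-combination recurrence rather than recomputed (\cref{thm:fc-time-2}). One minor imprecision: the per-step multiplicative drop in $\Phi$ is $\Omega(\eta\eps^2\lambda_0/\rho)$, and the extra $\eps^{-1}$ in the iteration count comes from the potential range $\ln m + \alpha\lambda_0 = \Theta(\eps^{-1}\ln(m/\eps))$, not from the step itself---but your final product is right.
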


\subsection{Organization of This Paper}

\Cref{sec:bin-packing} defines the bin-packing problem and its corresponding \config{} LP,
and shows how to use $\covLPsolve$ to approximately solve the \config{} LP in polynomial time.
\Cref{sec:prior-work} describes some well-known algorithms for solving LPs
and compares them to our algorithm.
\Cref{sec:frac-cover} describes the $\fracCover$ algorithm and proves
\cref{thm:frac-cover,thm:frac-cover-time}.
\Cref{sec:covLP-solve} describes the $\covLPsolve$ algorithm and proves
\cref{thm:cov-lp-solve,thm:covlps-time}.
\Cref{sec:future-work} describes avenues for further improvement.

\section{The Bin-Packing Problem}
\label{sec:bin-packing}

In this section, we will define the bin-packing problem and its corresponding \config{} LP.
Then we will see how to apply $\covLPsolve$ to approximately solve the \config{} LP
in polynomial time.

In the classic bin-packing problem (classic-BP), we are given a set $I$ of $n$ items.
Each item $i$ has a size $s_i \in (0, 1]$.
We want to partition $I$ such that the sum of sizes of items in each partition is at most 1.
Each partition is called a bin, and we want to minimize the number of bins.
We want an algorithm for this problem whose worst-case running time is polynomial in $n$.
See the survey by Coffman et al. on approximation algorithms for bin-packing \cite{coffman2013bin}.

In the classic knapsack problem (classic-KS), we are given a set $I$ of $n$ items.
Each item $i$ has a size $s_i \in (0, 1]$ and a profit $p_i$ associated with it.
We want to select a subset $J \subseteq I$ of items such that
$p(J) \defeq \sum_{i \in J} p_i$ is maximized.

There are many variants of the classic bin-packing problem.
In the 2D geometric bin-packing problem (2GBP) \cite{CKPT17},
we are given a set $I$ of $n$ axis-parallel rectangular items,
and we have to place the items into the minimum number of
rectangular bins without rotating the items, such that no two items overlap.
In the vector bin-packing problem (VBP), we are given a set $I$ of $n$ vectors over
$\mathbb{R}^d_{\ge 0}$ that we have to pack into the minimum number of bins such that
in each bin, the maximum coordinate of the sum of vectors is at most 1.
We can similarly define 2D geometric knapsack (2GKS) and vector knapsack (VKS).
Note that in classic-BP and VBP, we only need to partition the items into bins,
whereas in 2GBP, we also need to decide the position of the items into the bins.

\subsection{Abstract Bin-Packing and the \Config{} LP}

We will now state the bin-packing problem and the knapsack problem abstractly,
so that our results hold for a large class of their variants.
Let $I$ be a set of $n$ items.
A \config{} is a packing of some items from $I$ into a bin.
Let $\Ccal$ be the set of all possible \config{}s of $I$.
In the abstract bin-packing problem (BP),
we have to pack the items into the minimum number of bins, such that
the packing in each bin is according to some \config{} in $\Ccal$.
The abstract knapsack problem (KS) requires us to choose
the max-profit \config{} where each item has an associated profit.
Note that we can get different variants of BP and KS by defining $\Ccal$ appropriately.
For example, when $\Ccal = \{X: X \subseteq I \textrm{ and } \sum_{i \in X} s_i \le 1 \}$,
we get classic-BP and classic-KS.

We will now formulate BP as an integer linear program.
Let there be $m$ distinct items in the set $I$ of $n$ items.
Let $b \in \mathbb{R}^m_{> 0}$ be a vector where $b_i$ is the number of items of type $i$.
Therefore, $n = \sum_{i=1}^m b_i$. Let $N \defeq |\Ccal|$.
Let $A$ be an $m$-by-$N$ matrix where $A[i, C]$
is the number of items of type $i$ in \config{} $C$.
Then $A$ is called the \config{} matrix of $I$.
Let $\vecone \in \mathbb{R}^N$ be a vector whose each component is 1.

For every \config{} $C$, suppose we pack $x_C$ bins according to $C$.
Then the total number of bins used is $\vecone^Tx$.
The number of items of type $i$ that got packed is
$\sum_{C \in \Ccal} A[i, C] x_C = (Ax)_i$.
Therefore, the optimal solution to BP is given by
the optimal integral solution to $\covLP(A, b, \vecone)$.
$\covLP(A, b, \vecone)$ is called the \config{} LP of $I$
(also known as the Gilmore-Gomory LP of $I$).

Finding an approximately optimal (not necessarily integral) solution to
the \config{} LP of $I$ is also an important problem.
The algorithm of Karmarkar and Karp for classic-BP \cite{karmarkar-karp}
requires a $(1+1/n)$-approximate solution to $\covLP(A, b, \vecone)$.
The Round-and-Approx framework of Bansal and Khan \cite{bansal2014binpacking},
which is used to obtain the best-known approximation factor for 2GBP,
requires a $(1+\eps)$-approximate solution to $\covLP(A, b, \vecone)$.

\subsection{Solving the \Config{} LP using \texorpdfstring{$\covLPsolve$}{covLP-solve}}

\textbf{Indexing convention:}
Instead of using an integer $j \in [N]$ to index the columns in
the \config{} matrix $A$ and the entries in a feasible solution $x$,
we will index them by the corresponding \config{} $C$.
Hence, instead of writing $A[i, j]$ and $x_j$, we will write $A[i, C]$ and $x_C$.
Similarly, $\indexFindHyp$ will return a \config{} instead of an integer.

\begin{lemma}
\label{thm:config-lp-bounds}
Let $\covLP(A, b, \vecone)$ be the \config{} LP of a bin-packing instance $I$
having $n$ items. Then $1 \le \opt(\covLP(A, b, \vecone)) \le n$.
\end{lemma}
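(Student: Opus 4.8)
The plan is to establish the two inequalities separately, each by a direct construction/estimate; there is no real heavy machinery needed here.

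For the upper bound $\opt(\covLP(A, b, \vecone)) \le n$, I would exhibit an explicit integral feasible point of objective value exactly $n$. For each item type $i \in [m]$, let $C_i \in \Ccal$ be the \config{} consisting of a single copy of item type $i$; this is a legitimate \config{} because a single item can be packed into an otherwise empty bin — an assumption that is implicit in (indeed necessary for the feasibility of) any well-posed bin-packing instance, and which holds for classic-BP, 2GBP and VBP. The \config{}s $C_1, \dots, C_m$ are pairwise distinct, $A[i, C_i] = 1$, and $A[i, C_j] = 0$ for $i \neq j$. Define $x$ by $x_{C_i} \defeq b_i$ for $i \in [m]$ and $x_C \defeq 0$ for every other \config{} $C$. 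Then $(Ax)_i = A[i, C_i]\, x_{C_i} = b_i$ for every $i$, so $Ax = b$ and $x$ is feasible for $\covLP(A, b, \vecone)$; moreover $\vecone^Tx = \sum_{i=1}^m b_i = n$. Hence $\opt(\covLP(A, b, \vecone)) \le \vecone^Tx = n$, and in particular the optimum is a well-defined finite number.

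For the lower bound $\opt(\covLP(A, b, \vecone)) \ge 1$, I would show $\vecone^Tx \ge 1$ for every feasible $x$. Since $I$ has $n \ge 1$ items, there is at least one item type; fix any such type $i$. Every \config{} $C$ is a packing of items drawn from $I$, and $I$ contains only $b_i$ items of type $i$, so $A[i, C] \le b_i$ for all $C \in \Ccal$. Therefore, for any feasible $x$ (so $x \ge 0$ and $Ax \ge b$),
\[ b_i \le (Ax)_i = \sum_{C \in \Ccal} A[i, C]\, x_C \le b_i \sum_{C \in \Ccal} x_C = b_i\, \vecone^Tx, \]
and dividing by $b_i > 0$ gives $\vecone^Tx \ge 1$. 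Since feasible points exist (by the construction above) and every one of them has objective at least $1$, we get $\opt(\covLP(A, b, \vecone)) \ge 1$.

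The only delicate points are bookkeeping ones rather than mathematical obstacles: one must invoke the well-formedness convention that each single item constitutes a valid \config{} (used in the upper bound), and one must note that $n \ge 1$ forces $m \ge 1$ so that the lower-bound argument has a row to apply the estimate $A[i,C] \le b_i$ to. As an alternative to naming the singleton \config{}s, the upper bound can be argued by observing that any feasible packing of $I$ into bins, after discarding empty bins, uses at most $n$ bins and hence induces an integral feasible solution to the \config{} LP of objective at most $n$.
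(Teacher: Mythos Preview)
Your proposal is correct and follows essentially the same approach as the paper: the upper bound via the singleton-\config{} solution $x_{C_i}=b_i$, and the lower bound via the estimate $A[i,C]\le b_i$ applied to a single row of the constraint $Ax\ge b$. Your write-up is in fact slightly more careful about the bookkeeping (distinctness of the $C_i$, $n\ge 1\Rightarrow m\ge 1$) than the paper's version.
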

\begin{proof}
\Config{}s that contain only a single item are called singleton \config{}s.
Let $x_C = 0$ when $C$ is not a singleton \config{} and $x_C = b_i$ when $C$
is a singleton \config{} of item type $i$.
Then $x$ is a feasible solution to $\covLP(A, b, \vecone)$ and $\vecone^Tx = n$.
Therefore, $\opt(\covLP(A, b, \vecone)) \le n$.

Let $r^* = \opt(\covLP(A, b, \vecone))$.
Let $x^*$ be an optimal solution to $\covLP(A, b, \vecone)$.
Let $i$ be an arbitrary number in $[m]$
($m$ is the number of distinct items in $I$).
Since $x^*$ is feasible,
\[ b_i \le (Ax^*)_i = \sum_{C \in \mathcal{C}} A[i, C]x^*_C
\le \sum_{C \in \mathcal{C}} b_ix^*_C = b_ir^*
\implies 1 \le r^* \qedhere \]
\end{proof}

To solve $\covLP(A, b, \vecone)$ using $\covLPsolve$,
we need to compute $q$, an upper-bound on\\$\opt(\covLP(A, b, \vecone))$,
and $\rho$, an upper-bound on $q\max_{i=1}^m \max_{C \in \Ccal} A[i,C]/(b_i\vecone_C)$.
By \cref{thm:config-lp-bounds}, we can select $q \defeq n$.
Since $A[i,C] \le b_i$, we can choose $\rho \defeq n$.

The cost oracle simply outputs 1 for every input.
Let $a_C$ be the column of $A$ corresponding to \config{} $C$.
Then $a_C \in \mathbb{R}^m_{\ge 0}$ and the $i\Th$ coordinate of $a_C$
is the number of items of type $i$ in \config{} $C$.
Therefore, for any \config{} $C$, can get $a_C$ in $O(m)$ time.

For any \config{} $C$, define the function
$D_C: \mathbb{R}^m_{\ge 0} \mapsto \mathbb{R}_{\ge 0}$ as
\[ D_C(y) \defeq y^TAe_C = \sum_{i=1}^m y_iA[i,C] \]
Then for $\eta \in (0, 1]$, an $\eta$-weak index-finding oracle for $I$
is an algorithm that takes as input $y \in \mathbb{R}^m_{\ge 0}$
and returns $\widehat{C} \in \Ccal$ such that
$D_{\widehat{C}}(y) \ge \eta \max_{C \in \Ccal} D_C(y)$.
Note that if we assign profit $y_i$ to items of type $i$,
then $D_C(y)$ is the profit of \config{} $C$.
Therefore, an $\eta$-weak index-finding oracle
is an $\eta$-approximation algorithm for KS.

Now that we have the oracles ready, we can call
$\covLPsolve(I, n, n, \eps, \eta)$ to get a
$(1+\eps+\eps^2)/\eta$-approximate solution to $\covLP(A, b, \vecone)$.
Let us now look at the time complexity of this solution.

\begin{theorem}
\label{thm:bplp-time}
Let $I$ be a set of $n$ items, of which there are $m$ distinct items.
Assume we have an $\eta$-approximate algorithm for KS
that runs in time $O(T(m, n))$, for some function $T$ where $T(m, n) \ge m$.
Then $\covLPsolve(I, n, n, \eps, \eta)$ runs in time $O(MUT(m, n))$.
Here $M \in O(\log(n/(\eps\eta)))$ and $U \in \Otild(mn/(\eta\eps^3))$
(as defined in \cref{thm:covlps-time}).
\end{theorem}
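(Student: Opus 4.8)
The plan is to simply chain together the running-time bounds already established for $\covLPsolve$ in \cref{thm:covlps-time} with the particular choices $q = n$ and $\rho = n$ made in this section, and then account for the cost of each oracle call. By \cref{thm:covlps-time}, $\covLPsolve(I, q, \rho, \eps, \eta)$ makes at most $MU$ calls each to the index-finding oracle, the column oracle, and the cost oracle, and the non-oracle work takes $O(MUm)$ time. So the total running time is $O(MUm + MU(t_{\mathrm{idx}} + t_{\mathrm{col}} + t_{\mathrm{cost}}))$, where $t_{\mathrm{idx}}, t_{\mathrm{col}}, t_{\mathrm{cost}}$ are upper bounds on the per-call running times of the three oracles.

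First I would substitute $q = n$ and $\rho = n$ (justified earlier in this section via \cref{thm:config-lp-bounds} and the bound $A[i,C] \le b_i$) into the expressions for $M$ and $U$ from \cref{thm:covlps-time}. Since $\opt(\covLP(A,b,\vecone)) \ge 1$ by \cref{thm:config-lp-bounds}, we have $\lg(q/\opt) = \lg(\opt^{-1} n) \le \lg n$, so
\[ M = 3 + 2\lg\left(\tfrac1\eps + 1\right) + \lg\left(\tfrac1\eta\right) + \lg\left(\tfrac{n}{\opt}\right) \in O\!\left(\log\tfrac{n}{\eps\eta}\right). \]
Likewise, plugging $\rho = n$ and $m \le n$ into $U = \Uexpr$ gives $U \in \Otild(m\rho/(\eta\eps^3)) = \Otild(mn/(\eta\eps^3))$, matching the statement.

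Next I would bound each oracle's per-call cost. The cost oracle returns the constant $1$, so $t_{\mathrm{cost}} = O(1)$. The column oracle returns $a_C$, the column for configuration $C$, which (as noted in this section) can be produced in $O(m)$ time, so $t_{\mathrm{col}} = O(m)$. The index-finding oracle is, by the discussion above, exactly the assumed $\eta$-approximate algorithm for KS, which runs in $O(T(m,n))$ time; hence $t_{\mathrm{idx}} = O(T(m,n))$. Combining, the total time is $O(MUm) + MU \cdot O(T(m,n) + m + 1) = O(MU \cdot T(m,n))$, where in the last step we used the hypothesis $T(m,n) \ge m$ to absorb the $O(MUm)$ term and the $O(m)$ column-oracle cost into $O(MU\,T(m,n))$. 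This yields the claimed running time $O(MUT(m,n))$.

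I do not expect a genuine obstacle here: the statement is essentially a bookkeeping corollary of \cref{thm:covlps-time} together with the oracle implementations exhibited in this section. The only point requiring mild care is making sure every additive term — the non-oracle $O(MUm)$, the column-oracle contribution $O(MUm)$, and the cost-oracle contribution $O(MU)$ — is dominated by $O(MU\,T(m,n))$, which is exactly what the assumption $T(m,n) \ge m$ is there to guarantee.
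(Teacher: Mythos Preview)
Your proposal is correct and follows essentially the same approach as the paper: invoke \cref{thm:covlps-time}, plug in the per-call costs of the three oracles (cost oracle $O(1)$, column oracle $O(m)$, index-finding oracle $O(T(m,n))$), and absorb everything into $O(MU\,T(m,n))$ using $T(m,n)\ge m$. You supply more detail than the paper's three-line proof---in particular the explicit verification that $M \in O(\log(n/(\eps\eta)))$ and $U \in \Otild(mn/(\eta\eps^3))$ under $q=\rho=n$ and $\opt\ge 1$---but the substance is identical.
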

\begin{proof}
By \cref{thm:covlps-time},
the time taken by non-oracle operations is $O(MUm)$,
the time taken by the product oracle is $O(MUm)$,
and the time taken by calls to the algorithm for KS is $O(MUT(m, n))$.
\end{proof}

Therefore, if $T(m, n) \in \poly(n)$, then $\covLPsolve$ gives us a
polynomial-time algorithm for solving the \config{} LP of $I$.

Note that $\eta$ can be very small here, i.e., this algorithm works
even if the approximation factor of KS is very bad.
As far as we know, all previous algorithms for approximately solving the \config{} LP
of BP assumed a PTAS for KS.
For many variants of KS, no PTAS is known.

\section{Comparison with Prior Work}
\label{sec:prior-work}

Many algorithms for solving general and special LPs exist.
In this section, we will look at the algorithms that have been used in the past
to solve implicitly-defined covering LPs, especially the \config{} LP of some variants of BP,
and why they cannot be used for other variants of BP.

\subsection{Ellipsoid Algorithm}

The Ellipsoid algorithm by Khachiyan \cite{khachiyan-ellipsoid},
in addition to being the first polynomial-time algorithm for linear programming,
can solve LPs that are implicitly defined.
Specifically, it uses a \emph{separation oracle},
which takes a vector $x$ as input, and either claims that $x$ is feasible
or outputs a constraint of the LP that is violated by $x$.
This is useful for solving LPs where the number of constraints
is super-polynomial in the input size
(Gr\"otschel, Lov\'asz and Schrijver \cite{gls-ellipsoid}
give many examples of this).

Let us see how the ellipsoid algorithm may be used for solving
the \config{} LP of a bin-packing instance.
Let $\covLP(A, b, \vecone)$ be implicitly defined in terms of input $I$.
$\covLP(A, b, \vecone)$ has $m$ constraints, where $m \le |I|$,
but the number of variables, $N$, can be super-polynomial.
We therefore compute the dual $D$ of $\covLP(A, b, \vecone)$,
that has $m$ variables and $N$ constraints.
We will solve $D$ using the Ellipsoid algorithm and then use that solution of $D$
to obtain a solution to $\covLP(A, b, c)$.
This is what $D$ looks like:
\[ \max_{y \in \mathbb{R}^m} b^Ty \textrm{ where } A^Ty \le \vecone \textrm{ and } y \ge 0 \]
The separation oracle for $D$ takes a vector $y$ as input
and checks if $A^Ty \le \vecone$.
\[ A^Ty \le \vecone
\iff \max_{C \in \Ccal} (A^Ty)_C \le 1
\iff \max_{C \in \Ccal} \sum_{i=1}^m y_iA[i, C] \le 1 \]
If we interpret $y_i$ as the profit of item $i$,
then $\sum_{i=1}^m y_iA[i, C]$ is the profit of \config{} $C$.
Therefore, the separation oracle is the decision version of the knapsack problem.
Specifically, the separation oracle should either claim that the optimal profit is at most 1,
or it should output a \config{} of profit more than 1.
Since the decision version of the knapsack problem is known to be NP-complete,
we cannot design a polynomial-time separation oracle.

Gr\"otschel, Lov\'asz and Schrijver \cite{gls-ellipsoid} gave a variant of
the Ellipsoid algorithm (which we will hereafter refer to as the GLS algorithm)
that can approximately solve an LP using an approximate separation oracle.
Karmarkar and Karp \cite{karmarkar-karp} modified the GLS algorithm to
solve the dual of the \config{} LP of classic-BP,
and described how to obtain a solution to the \config{} LP
using a solution to the dual.
Their algorithm, however, requires an FPTAS for classic-KS.
Our algorithm $\covLPsolve$ doesn't have such strict requirements,
and can work with very poorly-approximated algorithms for KS.

\subsection{Plotkin-Shmoys-Tardos Algorithm}

Plotkin, Shmoys and Tardos \cite{plotkin1995fast} gave algorithms for solving
the fractional covering problem (see \cref{defn:frac-cov}) and the fractional packing problem.
Our algorithm $\fracCoverHyp$ is obtained by slightly modifying their algorithm.
The following theorem is their most relevant result to us:
\begin{theorem}[Theorem 3.10 in \cite{plotkin1995fast}]
For $0 < \eps < 1$, given a $(1-\eps/2)$-weak point-finding oracle, the algorithm
of \cite{plotkin1995fast} $(1-\eps)$-weakly solves the fractional covering problem.
\end{theorem}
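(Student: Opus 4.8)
The plan is to reproduce the Lagrangian-relaxation (multiplicative-weights) argument of \cite{plotkin1995fast}; since this theorem is quoted verbatim from that paper, what follows is a sketch of how one \emph{would} establish it rather than a self-contained proof. First I would set up the exponential potential that drives the algorithm. For $x \in P$ define
\[ \Phi(x) \defeq \sum_{i=1}^m \exp\!\left(-\alpha\,\frac{(Ax)_i}{b_i}\right), \]
where $\alpha$ is chosen large, of order $\eps^{-1}\ln(m)$ normalized by the width $\rho$ (an upper bound on $\width(A,b,P)$), together with the non-negative dual weights $y_i \propto b_i^{-1}\exp(-\alpha (Ax)_i/b_i)$, which are (a rescaling of) the negative gradient of $\Phi$ with respect to $Ax$. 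Minimizing $\Phi$ pushes every ratio $(Ax)_i/b_i$ upward; the algorithm does this iteratively, and when it can no longer make progress it either reads off a near-feasible $x$ or a dual certificate of infeasibility.

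Next I would analyze one iteration. Feed $y$ to the $(1-\eps/2)$-weak point-finding oracle to obtain $\xhat \in P$ with $y^TA\xhat \ge (1-\eps/2)\max_{x'\in P} y^TAx'$, and update $x \leftarrow (1-\sigma)x + \sigma\xhat$ for a small step $\sigma$ chosen in terms of $\eps$, $\alpha$, and $\rho$. Since $\width$ bounds both $(Ax)_i/b_i$ and $(A\xhat)_i/b_i$ by $\rho$, the exponents change by $O(\eps)$ over the step, so a second-order Taylor expansion of each $\exp(\cdot)$ term is accurate; the standard computation then shows that either $\Phi$ drops by a multiplicative factor $1 - \Omega(\sigma\alpha\eps)$, or the current $x$ already satisfies $\min_i (Ax)_i/b_i \ge (1-O(\eps))\lambda^*$, where $\lambda^* \defeq \max_{x'\in P}\min_i (Ax')_i/b_i$ (a softmin argument: for $\alpha$ large the ratio $y^TAx/y^Tb$ is essentially $\min_i (Ax)_i/b_i$). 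The gap between the oracle's $1-\eps/2$ guarantee and the output's $1-\eps$ guarantee is exactly the slack that lets the guaranteed per-iteration decrease survive the oracle's approximation error.

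Then I would close the loop. Starting from the $x_0 \in P$ returned by the oracle on a uniform $y$, one has $\Phi(x_0) \le m$, while $\Phi(x) \ge \exp(-\alpha\rho) > 0$ always, so the geometric decrease caps the number of iterations at a quantity polynomial in $m$, $\rho$ and $1/\eps$ (matching the $U$ bound used elsewhere in this paper). When the loop halts in the ``no further progress'' case, either $\Phi(x)$ is small enough to force every exponent to satisfy $\alpha(Ax)_i/b_i \ge \alpha(1-\eps)$, i.e.\ $Ax \ge (1-\eps)b$, and we output $x$; or the final weights $y \ge 0$ satisfy $\max_{x'\in P} y^TAx' < y^Tb$, and weak duality (if some $x'\in P$ had $Ax' \ge b$ then $y \ge 0$ would force $y^TAx' \ge y^Tb$) certifies that $\fcov(A,b,P)$ is unsatisfiable, so we return \texttt{null}.

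The main obstacle is the joint calibration of the parameters: $\alpha$ and $\sigma$ must be chosen so that \emph{simultaneously} (i) the Taylor remainder is lower-order, (ii) the per-iteration multiplicative decrease remains $\Omega(\cdot)$ even after the $(1-\eps/2)$-oracle error is absorbed, and (iii) the terminal small-potential condition genuinely forces $Ax \ge (1-\eps)b$ rather than merely $Ax \ge (1-c\eps)b$ for some constant $c>1$. Tracking how the width $\rho$ propagates into the step size and hence into the iteration count, while keeping the error budget tight enough to land exactly at factor $1-\eps$, is the delicate part, and is the content of the lemmas that \cite{plotkin1995fast} prove on the way to their Theorem 3.10.
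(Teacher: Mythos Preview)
The paper does not prove this statement at all: it is quoted verbatim as Theorem~3.10 of \cite{plotkin1995fast} in the prior-work section, with no accompanying proof. You correctly recognized this (``since this theorem is quoted verbatim from that paper, what follows is a sketch of how one \emph{would} establish it''), and your sketch is a faithful outline of the exponential-potential/multiplicative-weights argument of \cite{plotkin1995fast}; indeed, it matches the machinery the present paper itself develops in \cref{sec:frac-cover} (the potential $\Phi$, the dual weights $y_\alpha(x)$, the relaxed optimality conditions, and the per-iteration decrease lemma) for its own modified algorithm $\fracCover$. So there is nothing to compare against, and your proposal is appropriate for a result the paper merely cites.
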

The above result holds only for a sufficiently small $\eps$.
\cite{plotkin1995fast} doesn't explicitly state how small $\eps$ should be,
but even the optimistic case of $\eps < 1$ tells us that for an $\eta$-weak $\pointFind$,
we require $\eta > 1/2$.
However, we are interested in the case where $\eta$ can be very small.

Moreover, there is a large gap between $(1-\eps/2)$ and $(1-\eps)$
when $\eps$ is large enough. For example, for $\eps = 1/3$,
$\pointFind$ is $5/6$-weak, but their algorithm will only give us a $2/3$-weak solution
to the fractional covering problem.
Our modified algorithm $\fracCover$, on the other hand, outputs a solution
that is roughly $5/6$-weak for this example.

We did not focus on optimizing the running time of our algorithm;
instead, we focused on getting as small an approximation factor as possible.
Our algorithm is, therefore, slower than that of \cite{plotkin1995fast}.

\section{The Fractional Covering Problem}
\label{sec:frac-cover}

Recall that in the problem $\fcovHyp(A, b, P)$, we need to find
$x \in P$ such that $Ax \ge b$ or claim that no such $x$ exists.
Also, $\rho \ge \widthHyp(A, b, P)$.

\subsection{Optimization Version of fcov}

Let us try to frame $\fcov(A, b, P)$ as an optimization problem.

\begin{definition}
\label{defn:ofcov}
For the problem $\fcov(A, b, P)$, let $\lambda(x) \defeq \max_{\lambda} (Ax \ge \lambda b)$.
The problem $\ofcov(A, b, P)$ is defined as
\[ \argmax_{x \in P} \lambda(x) \]
Let $x^*$ be the optimal solution to $\ofcov(A, b, P)$
and let $\lambda^* \defeq \lambda(x^*)$.
Then $x \in P$ is said to be $\eps$-optimal for $\ofcov(A, b, P)$
iff $\lambda(x) \ge (1-\eps)\lambda(x^*)$.
\end{definition}

\begin{claim}
\[ \lambda(x) = \min_{i=1}^m \frac{(Ax)_i}{b_i} \]
So $\lambda(x)$ can be computed using the \prodOrHyp.
\end{claim}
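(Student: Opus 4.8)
The plan is to prove the identity $\lambda(x) = \min_{i=1}^m (Ax)_i/b_i$ directly from the definition $\lambda(x) = \max\{\lambda : Ax \ge \lambda b\}$, and then observe that the right-hand side is trivially computable once $Ax$ is known. First I would unpack the constraint $Ax \ge \lambda b$ coordinatewise: it is equivalent to the system $(Ax)_i \ge \lambda b_i$ for all $i \in [m]$. Since $b_i > 0$ for every $i$ (recall $b \in \mathbb{R}^m_{>0}$), I can divide through to rewrite each constraint as $\lambda \le (Ax)_i/b_i$. Hence the set of feasible $\lambda$ is exactly $\{\lambda : \lambda \le (Ax)_i/b_i \text{ for all } i\} = \left(-\infty,\ \min_{i=1}^m (Ax)_i/b_i\right]$, and its maximum is $\min_{i=1}^m (Ax)_i/b_i$, which establishes the identity. (One should note the maximum is attained, so $\lambda(x)$ is well-defined; and since $Ax \ge 0$ for all $x \in P$ by the hypothesis in \cref{defn:frac-cov}, we have $\lambda(x) \ge 0$.)

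For the computational claim, I would simply remark that the \prodOrHyp\ takes $x \in P$ and returns the vector $Ax$; given this vector, computing $\min_{i=1}^m (Ax)_i/b_i$ takes $O(m)$ arithmetic operations, so $\lambda(x)$ is computable with one oracle call plus $O(m)$ additional time.

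I do not expect any real obstacle here — the statement is essentially a restatement of an LP with a single scalar variable $\lambda$ whose feasible region is an interval. The only things to be careful about are: (i) using positivity of $b$ to justify dividing the inequalities (without it the direction of the inequality could flip or the constraint could become vacuous), and (ii) confirming the supremum is achieved, which it is since the feasible set for $\lambda$ is a closed half-line. So the "hard part," such as it is, is merely making the coordinatewise translation of $Ax \ge \lambda b$ explicit.
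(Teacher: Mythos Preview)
Your proposal is correct; the paper actually states this claim without proof, treating it as immediate from the definition. Your argument is exactly the natural one-line unpacking of $Ax \ge \lambda b$ coordinatewise using $b_i > 0$, so there is nothing to compare.
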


Note that $\fcov(A, b, P)$ is unsatisfiable iff $\lambda(x^*) < 1$,
and otherwise $x^*$ is a solution to $\fcov(A, b, P)$.
However, we can't directly use this fact to solve $\fcov(A, b, P)$,
since it may be very hard to compute $x^*$.
So instead, we'll compute an $\eps$-optimal solution $\xhat$ to $\ofcov(A, b, P)$.
Then $\lambda(\xhat)$ is an approximation to $\lambda(x^*)$,
since $(1-\eps)\lambda(x^*) \le \lambda(\xhat) \le \lambda(x^*)$.

\begin{claim}
\label{thm:opt-to-feas}
If $x$ is $\eps$-optimal for $\ofcov(A, b, P)$, then
\[ \lambda(x) < 1-\eps
\implies \lambda^* \le \frac{\lambda(x)}{1-\eps} < 1
\implies \fcov(A, b, P) \textrm{ has no solution } \]
\[ \lambda(x) \ge 1-\eps
\implies Ax \ge (1-\eps)b
\implies x \textrm{ is } (1-\eps)\textrm{-approx for } \fcov(A, b, P) \]
\end{claim}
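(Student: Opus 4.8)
The plan is just to unwind the definitions; both implications are one line each once the right facts are in place. First I would recall from the preceding claim that $\lambda(x) = \min_{i=1}^m (Ax)_i/b_i$, so $\lambda(x)$ is precisely the largest scalar $\lambda$ with $Ax \ge \lambda b$; in particular the inequality $Ax \ge \lambda(x)b$ always holds, and $\lambda(x) \ge 0$ because $Ax \ge 0$ for every $x \in P$. I would also restate the observation already recorded right after \cref{defn:ofcov}: $\fcov(A,b,P)$ is satisfiable iff $\lambda^* \ge 1$, since $x^*$ maximizes $\lambda(\cdot)$ over $P$ — if $\lambda^* \ge 1$ then $x^*$ is feasible as $Ax^* \ge \lambda^* b \ge b$, and if $\lambda^* < 1$ then $\lambda(x) \le \lambda^* < 1$, i.e. $Ax \not\ge b$, for every $x \in P$.

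For the first implication, assume $\lambda(x) < 1-\eps$. Since $x$ is $\eps$-optimal, $\lambda(x) \ge (1-\eps)\lambda^* = (1-\eps)\lambda(x^*)$. Dividing by $1-\eps > 0$ gives $\lambda^* \le \lambda(x)/(1-\eps) < 1$, and then the observation above yields that $\fcov(A,b,P)$ has no solution. (Here $\eps < 1$; if $\eps = 1$ the hypothesis $\lambda(x) < 1-\eps = 0$ is vacuous because $\lambda(x) \ge 0$, so this branch is empty and nothing needs to be proved.)

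For the second implication, assume $\lambda(x) \ge 1-\eps$. Then $Ax \ge \lambda(x)b \ge (1-\eps)b$ since $b > 0$, and $x \in P$ because $\eps$-optimality already presupposes $x \in P$. By \cref{defn:frac-cov}, "$x \in P$ and $Ax \ge (1-\eps)b$" is exactly the definition of $x$ being a $(1-\eps)$-approximate solution to $\fcov(A,b,P)$.

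I expect no genuine obstacle here: the proof is a direct chase through the definitions, and the only points that deserve an explicit sentence are (i) the degenerate case $\eps = 1$, where the first branch is vacuous, and (ii) spelling out the "$\lambda^* < 1 \Rightarrow$ unsatisfiable" step, which is the observation stated immediately after \cref{defn:ofcov}. I would lay the argument out along the two implication chains exactly as they appear in the statement, inserting these one-line justifications between the arrows.
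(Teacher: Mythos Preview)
Your proposal is correct; it is exactly the straightforward definition chase the claim invites, and the extra care about the $\eps = 1$ edge case and the ``$\lambda^* < 1 \Rightarrow$ unsatisfiable'' step is appropriate. The paper itself gives no proof for this claim---it treats both implication chains as immediate from the definitions---so your write-up simply makes explicit what the paper leaves to the reader.
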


We'll now focus on finding an $\eps$-optimal solution to $\ofcov(A, b, P)$.

\subsection{Weak Duality}

\begin{definition}
\label{defn:dfcov}
For the problem $\ofcov(A, b, P)$, define
\[ C(y) \defeq \max_{x \in P} y^T A x. \]
Define the problem $\dfcov(A, b, P)$ as
\[ \argmin_{y \in \mathbb{R}^m_{\ge 0}} \frac{C(y)}{y^Tb} \]
We call it the dual problem of $\ofcov$.
\end{definition}

\begin{lemma}[Weak duality]
\label{thm:weak-duality}
Let $\xhat \in P$. Then
\[ \lambda(\xhat)y^Tb \le y^TA\xhat \le C(y) \]
\end{lemma}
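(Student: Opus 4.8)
The plan is to verify the two inequalities $\lambda(\xhat)y^Tb \le y^TA\xhat$ and $y^TA\xhat \le C(y)$ separately; both are immediate from the relevant definitions. First I would handle the right-hand inequality: by \cref{defn:dfcov}, $C(y) = \max_{x \in P} y^TAx$, and since $\xhat \in P$, the vector $\xhat$ is one of the points over which this maximum is taken, so $y^TA\xhat \le C(y)$. This step requires no computation at all — it is just the definition of a maximum over $P$.

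For the left-hand inequality, I would recall from \cref{defn:ofcov} that $\lambda(\xhat) = \max\{\lambda : A\xhat \ge \lambda b\}$, so in particular $A\xhat \ge \lambda(\xhat)b$ holds coordinatewise. Now I would left-multiply this vector inequality by $y^T$, where $y \in \mathbb{R}^m_{\ge 0}$. Here is the one spot where the hypothesis $y \ge 0$ is used: multiplying a coordinatewise inequality $A\xhat \ge \lambda(\xhat)b$ by a nonnegative vector preserves the direction of the inequality, giving $y^TA\xhat \ge \lambda(\xhat) y^Tb$. (Writing it out: $y^T(A\xhat - \lambda(\xhat)b) = \sum_{i=1}^m y_i\big((A\xhat)_i - \lambda(\xhat)b_i\big) \ge 0$ since every $y_i \ge 0$ and every term $(A\xhat)_i - \lambda(\xhat)b_i \ge 0$.) Combining the two inequalities yields the chain $\lambda(\xhat)y^Tb \le y^TA\xhat \le C(y)$.

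There is essentially no obstacle here — the lemma is a one-line consequence of unwinding \cref{defn:ofcov} and \cref{defn:dfcov} together with the nonnegativity of $y$. The only thing worth stating carefully is that the problem instance $\ofcov(A,b,P)$ requires $Ax \ge 0$ for all $x \in P$, but that is not even needed for this particular lemma; what is needed is only $y \in \mathbb{R}^m_{\ge 0}$, which is part of the definition of $\dfcov$. I expect the author's proof to be a two-line calculation along exactly these lines.
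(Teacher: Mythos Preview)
Your proposal is correct and matches the paper's proof essentially line for line: the paper also uses $A\xhat \ge \lambda(\xhat)b$ together with $y \ge 0$ for the left inequality, and the definition $C(y) = \max_{x\in P} y^TAx$ with $\xhat \in P$ for the right inequality. The only cosmetic difference is that the paper presents the two steps as a single aligned chain rather than treating them separately.
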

\begin{proof}
\begin{align*}
\lambda(\xhat)(y^Tb) &= y^T(\lambda(\xhat)b)
\\ &\le y^TA\xhat  \tag{$y \ge 0$ and $A\xhat \ge \lambda(\xhat)b$}
\\ &\le \max_{x \in P} y^TAx = C(y)
\qedhere \end{align*}
\end{proof}

\subsection{Relaxed Optimality Conditions}

If we could find $x \in P$ and $y \ge 0$ such that $\lambda(x)y^Tb \ge y^TAx \ge C(y)$,
then weak duality would imply that $x$ is optimal for $\ofcov$ and $y$ is optimal for $\dfcov$.
To find approximate optima, we slightly relax these conditions.
\[ \label{eqn:condI} (1+\eps_1)\lambda(x)y^Tb \ge y^TAx \tag{condition $\CondI(\eps_1)$} \]
\[ \label{eqn:condII} C(y) - y^TAx \le \eps_2 C(y) + \eps_3 \lambda(x) y^Tb
\tag{condition $\CondII(\eps_2, \eps_3)$} \]
Condition $\CondII(\eps_2, \eps_3)$ can equivalently be written as
\[ y^TAx \ge (1-\eps_2)C(y) - \eps_3 \lambda(x) y^Tb \]

\begin{lemma}
\label{thm:rlx-to-opt}
Suppose $x \in P$ and $y \ge 0$ satisfy conditions $\CondI(\eps_1)$
and $\CondII(\eps_2, \eps_3)$, where $0 < \eps_1, \eps_2, \eps_3 < 1$.
Let $\eps' \defeq (\eps_1 + \eps_2 + \eps_3)/(1 + \eps_2 + \eps_3)$.
Then $x$ is $\eps'$-optimal for $\ofcov(A, b, P)$.
\end{lemma}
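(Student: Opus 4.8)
The plan is to chain together the two relaxed conditions and weak duality (\cref{thm:weak-duality}) to bound $\lambda^* = \lambda(x^*)$ in terms of $\lambda(x)$. The key observation is that for the specific pair $(x, y)$ in hand we have an upper bound on $C(y)$ coming from $\CondI(\eps_1)$, and we have a lower bound on $\lambda^*$ in disguise: applying weak duality to the \emph{optimal} point $x^*$ with the \emph{same} $y$ gives $\lambda^* y^Tb \le C(y)$. So if I can show $C(y) \le (\text{something}) \cdot \lambda(x) y^Tb$, then dividing by $y^Tb > 0$ yields $\lambda^* \le (\text{something}) \cdot \lambda(x)$, which is exactly what $\eps'$-optimality asserts once the constant is rearranged.

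The main chain of inequalities I would write out is as follows. Start from $\CondII(\eps_2,\eps_3)$ in its rearranged form $y^TAx \ge (1-\eps_2)C(y) - \eps_3\lambda(x)y^Tb$. Combine this with $\CondI(\eps_1)$, which says $y^TAx \le (1+\eps_1)\lambda(x)y^Tb$, to get
\[
(1-\eps_2)C(y) - \eps_3\lambda(x)y^Tb \le (1+\eps_1)\lambda(x)y^Tb,
\]
hence $(1-\eps_2)C(y) \le (1+\eps_1+\eps_3)\lambda(x)y^Tb$, i.e.
\[
C(y) \le \frac{1+\eps_1+\eps_3}{1-\eps_2}\,\lambda(x)\,y^Tb.
\]
Now apply \cref{thm:weak-duality} with $\xhat = x^*$: $\lambda^* y^Tb \le C(y)$. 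Chaining,
\[
\lambda^* y^Tb \le \frac{1+\eps_1+\eps_3}{1-\eps_2}\,\lambda(x)\,y^Tb,
\]
and since $y^Tb > 0$ (because $b \in \mathbb{R}^m_{>0}$ and $y \ge 0$; strictly I should note $y \ne 0$, which follows because otherwise $\CondII$ forces $C(y)=0$ and the statement is degenerate — or one observes $\CondI$ is vacuous and handles it separately) we may cancel it to obtain $\lambda^* \le \frac{1+\eps_1+\eps_3}{1-\eps_2}\lambda(x)$.

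It remains to check that $\frac{1-\eps_2}{1+\eps_1+\eps_3} \ge 1-\eps'$ with $\eps' = (\eps_1+\eps_2+\eps_3)/(1+\eps_2+\eps_3)$, so that $\lambda(x) \ge \frac{1-\eps_2}{1+\eps_1+\eps_3}\lambda^* \ge (1-\eps')\lambda^*$ and $x$ is $\eps'$-optimal. One computes $1-\eps' = 1/(1+\eps_2+\eps_3) \cdot (1+\eps_2+\eps_3-\eps_1-\eps_2-\eps_3) \cdot$ — more carefully, $1-\eps' = (1-\eps_1)/(1+\eps_2+\eps_3)$, so the desired inequality is $(1-\eps_2)(1+\eps_2+\eps_3) \ge (1-\eps_1)(1+\eps_1+\eps_3)$, which after expanding reduces to comparing $\eps_3 - \eps_2 - \eps_2^2 - \eps_2\eps_3$ against $\eps_3 - \eps_1 + \eps_1\eps_3 - \eps_1^2$... this routine algebra is where the only real friction lies; I would just expand both sides and verify the difference is nonnegative for $\eps_1,\eps_2,\eps_3 \in (0,1)$, or more cleanly bound $\frac{1+\eps_1+\eps_3}{1-\eps_2} \le \frac{1+\eps_2+\eps_3}{1-\eps_1}$ termwise. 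I expect this final constant-wrangling step — not the conceptual part — to be the main obstacle, and I would handle it by picking the cleanest of the equivalent algebraic forms rather than brute force.
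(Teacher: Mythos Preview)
Your argument is exactly the paper's: chain $\CondI(\eps_1)$ and $\CondII(\eps_2,\eps_3)$ to get $(1+\eps_1+\eps_3)\lambda(x)\,y^Tb \ge (1-\eps_2)C(y)$, then use weak duality $\lambda^* y^Tb \le C(y)$ to conclude $\lambda(x) \ge \frac{1-\eps_2}{1+\eps_1+\eps_3}\lambda^*$. The paper stops here, simply writing $\frac{1-\eps_2}{1+\eps_1+\eps_3} = 1-\eps'$.

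The ``friction'' you encountered in the final constant-wrangling is real, and it is not your fault: the denominator in the statement of $\eps'$ is a typo. Elsewhere in the paper (\cref{thm:improve-cover} and \cref{algo:frac-cover}) the definition is $\eps' = (\eps_1+\eps_2+\eps_3)/(1+\eps_1+\eps_3)$, and with that denominator one has $1-\eps' = \frac{1-\eps_2}{1+\eps_1+\eps_3}$ \emph{exactly}, so there is nothing left to check. With the typo'd denominator $1+\eps_2+\eps_3$, the inequality you would need, $(1-\eps_2)(1+\eps_2+\eps_3) \ge (1-\eps_1)(1+\eps_1+\eps_3)$, factors as $(\eps_1-\eps_2)(\eps_1+\eps_2+\eps_3) \ge 0$, which fails whenever $\eps_1 < \eps_2$. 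So do not try to push that algebra through; just correct the denominator and the proof is complete.
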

\begin{proof}
By weak duality (\thmdepcref{thm:weak-duality}{}), we get
$\lambda^* \le C(y)/b^Ty$.

Conditions $\CondI(\eps_1)$ and $\CondII(\eps_2, \eps_3)$ give us
\begin{align*}
& (1+\eps_1)\lambda(x) y^Tb \ge y^TAx \ge (1-\eps_2)C(y) - \eps_3\lambda(x) y^Tb
\\ &\implies (1 + \eps_1 + \eps_3)\lambda(x) y^Tb \ge (1-\eps_2)C(y)
\\ &\implies \lambda(x) \ge \frac{1-\eps_2}{1 + \eps_1 + \eps_3} \lambda^*
    = (1-\eps')\lambda^*
\\ &\implies x \textrm{ is } \eps'\textrm{-optimal}
\qedhere \end{align*}
\end{proof}

We'll now focus on finding $x \in P$ and $y \ge 0$ such that
conditions $\CondI(\eps_1)$ and $\CondII(\eps_2, \eps_3)$ are satisfied.

\subsection{Dual Fitting}

\begin{definition}
For some $\alpha > 0$, define the vector $y_{\alpha}(x) \in \mathbb{R}^m_{> 0}$ as
\[ y_{\alpha}(x)_i \defeq \frac{1}{b_i}\exp\left( - \frac{\alpha(Ax)_i}{b_i} \right) \]
\end{definition}

\begin{lemma}
\label{thm:pot-bounds}
\[ \exp(-\alpha\lambda(x)) \le b^Ty_{\alpha}(x) \le m\exp(-\alpha\lambda(x)) \]
\end{lemma}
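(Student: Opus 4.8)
The plan is to unfold the definition of $b^Ty_\alpha(x)$ and observe a convenient cancellation. Writing $b^Ty_\alpha(x) = \sum_{i=1}^m b_i (y_\alpha(x))_i$ and substituting the definition $(y_\alpha(x))_i = \frac{1}{b_i}\exp(-\alpha(Ax)_i/b_i)$, the factor $b_i$ cancels, so
\[ b^Ty_\alpha(x) = \sum_{i=1}^m \exp\!\left(-\frac{\alpha(Ax)_i}{b_i}\right). \]
I would then abbreviate $\mu_i \defeq (Ax)_i/b_i$, so that this is $\sum_{i=1}^m \exp(-\alpha\mu_i)$, and recall from the preceding claim that $\lambda(x) = \min_{i=1}^m \mu_i$.

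For the upper bound, since $\alpha > 0$ and $\mu_i \ge \lambda(x)$ for every $i$, we have $\exp(-\alpha\mu_i) \le \exp(-\alpha\lambda(x))$ termwise; summing the $m$ terms gives $b^Ty_\alpha(x) \le m\exp(-\alpha\lambda(x))$. For the lower bound, let $i^*$ be an index attaining the minimum, so $\mu_{i^*} = \lambda(x)$ and $\exp(-\alpha\mu_{i^*}) = \exp(-\alpha\lambda(x))$; since every summand $\exp(-\alpha\mu_i)$ is positive, dropping all but the $i^*$ term yields $b^Ty_\alpha(x) \ge \exp(-\alpha\lambda(x))$. Combining the two inequalities gives the claim.

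There is no real obstacle here: the only things used are the cancellation of $b_i$, monotonicity of $\exp$, nonnegativity of the exponential terms, and the characterization $\lambda(x) = \min_i (Ax)_i/b_i$ already established. The lemma is essentially the standard soft-min/soft-max sandwich bound, and I would present it in a few lines.
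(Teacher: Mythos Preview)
Your proof is correct and follows essentially the same approach as the paper: both unfold $b^Ty_\alpha(x)$ to the sum $\sum_{i=1}^m \exp(-\alpha(Ax)_i/b_i)$, bound the sum between its maximum term and $m$ times that term, and then use monotonicity of $\exp$ together with $\lambda(x) = \min_i (Ax)_i/b_i$ to identify the maximum term as $\exp(-\alpha\lambda(x))$.
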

\begin{proof}
\begin{align*}
b^Ty_{\alpha}(x) &= \sum_{i=1}^m \exp\left(-\frac{\alpha(Ax)_i}{b_i}\right)
\\ &\in [1, m] \max_{i=1}^m \exp\left(-\frac{\alpha(Ax)_i}{b_i}\right)
\\ &= [1, m] \exp\left(-\alpha \min_{i=1}^m \frac{(Ax)_i}{b_i} \right)
= [1, m] \exp(-\alpha \lambda(x))
\qedhere \end{align*}
\end{proof}

\begin{lemma}
\label{thm:c1-sat}
Let $0 < \eps < 1$ and let
\[ \beta \defeq \frac{2}{\eps}\ln\left(\frac{4m}{\eps}\right) \]
(note that $\beta \ge 1$). Then
\[ \alpha\lambda(x) \ge \beta \implies (x, y_{\alpha}(x)) \textrm{ satisfies $\CondIHyp(\eps)$} \]
\end{lemma}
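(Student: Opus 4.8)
The plan is to unwind $\CondI(\eps)$ into a single-variable inequality and then verify it. Write $\lambda \defeq \lambda(x)$, $y \defeq y_{\alpha}(x)$, and for $i \in [m]$ set $z_i \defeq (Ax)_i/b_i$, so that $y_i = \exp(-\alpha z_i)/b_i$ and, by the earlier characterization $\lambda(x) = \min_i (Ax)_i/b_i$, we have $\lambda = \min_i z_i$, hence $z_i \ge \lambda$ for all $i$. Then $b^Ty = \sum_{i=1}^m \exp(-\alpha z_i)$ and $y^TAx = \sum_{i=1}^m z_i\exp(-\alpha z_i)$, so $\CondI(\eps)$, namely $(1+\eps)\lambda\,(b^Ty) \ge y^TAx$, is exactly
\[ \sum_{i=1}^m \bigl((1+\eps)\lambda - z_i\bigr)\exp(-\alpha z_i) \ge 0. \]
If $\lambda = 0$ then $\alpha\lambda = 0 < 1 \le \beta$, so the hypothesis $\alpha\lambda \ge \beta$ is vacuous; hence I may assume $\lambda > 0$, and then $\alpha\lambda \ge \beta \ge 1$ gives $\lambda \ge 1/\alpha$.

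Next I would split the sum according to the sign of $(1+\eps)\lambda - z_i$. The terms with $z_i \le (1+\eps)\lambda$ are non-negative, and the coordinate $i$ with $z_i = \lambda$ (which always lies in this set, since $\eps\lambda \ge 0$) contributes at least $\eps\lambda\exp(-\alpha\lambda)$; thus the non-negative part of the sum is at least $\eps\lambda\exp(-\alpha\lambda)$. For a term with $z_i > (1+\eps)\lambda$ I would bound $\bigl(z_i - (1+\eps)\lambda\bigr)\exp(-\alpha z_i) \le z_i\exp(-\alpha z_i)$ and then use that $t \mapsto t\exp(-\alpha t)$ is non-increasing on $[1/\alpha,\infty)$: since $(1+\eps)\lambda \ge \lambda \ge 1/\alpha$, each such term is at most $(1+\eps)\lambda\exp\bigl(-\alpha(1+\eps)\lambda\bigr)$, and there are at most $m$ of them, so the negative part of the sum has absolute value at most $m(1+\eps)\lambda\exp\bigl(-\alpha(1+\eps)\lambda\bigr)$.

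It therefore suffices to establish $m(1+\eps)\lambda\exp\bigl(-\alpha(1+\eps)\lambda\bigr) \le \eps\lambda\exp(-\alpha\lambda)$. Dividing by $\lambda\exp(-\alpha\lambda) > 0$, this is equivalent to $m(1+\eps)\exp(-\eps\alpha\lambda) \le \eps$, i.e. to $\alpha\lambda \ge \frac{1}{\eps}\ln\frac{m(1+\eps)}{\eps}$. Finally I would check this follows from $\alpha\lambda \ge \beta$: since $1+\eps < 4$ and $\ln(4m/\eps) \ge \ln 4 > 0$ (using $m \ge 1$ and $\eps \le 1$),
\[ \frac{1}{\eps}\ln\frac{m(1+\eps)}{\eps} \le \frac{1}{\eps}\ln\frac{4m}{\eps} \le \frac{2}{\eps}\ln\frac{4m}{\eps} = \beta. \]
The only mildly delicate point is ensuring the monotonicity step for $t\exp(-\alpha t)$ is invoked on an interval where the function really is non-increasing; this is precisely what $\beta \ge 1$ (together with $\alpha\lambda \ge \beta$) guarantees, via $\lambda \ge 1/\alpha$. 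Everything else is routine bookkeeping, and the slack in the bound on $\beta$ means no care is needed about constants.
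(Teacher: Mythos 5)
Your proof is correct and complete. The paper does not give its own argument here — its ``proof'' is just a citation to Lemma 3.2 of Plotkin, Shmoys, and Tardos — so what you have written fills in the gap with a self-contained reconstruction, and it follows the same standard line of reasoning used there. In particular: rewriting $\CondI(\eps)$ as $\sum_{i=1}^m \bigl((1+\eps)\lambda - z_i\bigr)e^{-\alpha z_i} \ge 0$, lower-bounding the non-negative part by the coordinate achieving $z_i = \lambda$, and upper-bounding the negative part via the monotonicity of $t\mapsto t e^{-\alpha t}$ on $[1/\alpha,\infty)$ are exactly the moves one expects, and your use of $\beta \ge 1$ to guarantee $\lambda \ge 1/\alpha$ (hence $(1+\eps)\lambda, z_i \ge 1/\alpha$) is the right place to invoke that side condition. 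The final reduction to $\alpha\lambda \ge \frac{1}{\eps}\ln\frac{m(1+\eps)}{\eps}$ and the check that this follows from $\alpha\lambda\ge\beta$ are sound. One minor remark, which you also flag yourself: your argument only needs $\alpha\lambda \ge \frac{1}{\eps}\ln\frac{4m}{\eps}$, so the factor $2$ in the paper's definition of $\beta$ is slack as far as this lemma goes; it is evidently carried over from the constants in the cited PST lemma rather than being essential to the inequality.
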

\begin{proof} See lemma 3.2 in \cite{plotkin1995fast}. \end{proof}

\begin{lemma}
\label{thm:width-is-positive}
$\width(A, b, P) = 0 \implies \fcov(A, b, P)$ is unsatisfiable.
\end{lemma}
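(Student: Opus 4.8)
The plan is to prove the contrapositive in a slightly stronger form: if $\fcov(A, b, P)$ has any solution at all, then $\width(A, b, P) > 0$. So suppose $x \in P$ satisfies $Ax \ge b$. Since $b \in \mathbb{R}^m_{>0}$, every coordinate $b_i$ is strictly positive, so in particular $b_1 > 0$, and $(Ax)_1 \ge b_1 > 0$. Then by \cref{defn:width},
\[ \width(A, b, P) = \max_{x' \in P} \max_{i=1}^m \frac{(Ax')_i}{b_i} \ge \frac{(Ax)_1}{b_1} \ge \frac{b_1}{b_1} = 1 > 0, \]
where the first inequality uses that $x \in P$ is one particular choice in the outer maximization. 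Hence $\width(A, b, P) > 0$, contradicting the hypothesis $\width(A, b, P) = 0$. Therefore $\fcov(A, b, P)$ has no solution, i.e., it is unsatisfiable, which is exactly the claim.

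I should double-check one edge case: the definition of $\fcov$ requires $Ax \ge 0$ for all $x \in P$, and $P$ is a convex polytope, hence nonempty; so the maximum over $P$ is well-defined and the quantity $\width(A, b, P)$ makes sense. The only subtlety is ensuring $m \ge 1$ so that "coordinate $1$" exists, but $b \in \mathbb{R}^m_{>0}$ with $m = 0$ would be vacuous and there would be nothing to cover, so we may harmlessly assume $m \ge 1$; alternatively, if $m = 0$ then $Ax \ge b$ holds trivially for every $x \in P$ and $\fcov$ is satisfiable, so the hypothesis $\width = 0$ combined with $m=0$ can be handled separately or simply excluded by convention. I expect no real obstacle here — the entire content is that a feasible $x$ forces the ratio $(Ax)_i/b_i$ to be at least $1$ in every coordinate, so the width, being a max of such ratios, is at least $1$; the "hard part," such as it is, is merely being careful that the maximizing domain $P$ in the width is the same polytope, so that a feasible point is admissible as a witness.
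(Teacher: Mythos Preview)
Your proof is correct and essentially the same as the paper's: both hinge on $b>0$, so that a feasible $x$ forces $(Ax)_i/b_i\ge 1$ (equivalently, $\width=0$ forces $Ax=0$ for all $x\in P$, making $Ax\ge b$ impossible). The only cosmetic difference is that the paper argues directly from $\width=0$ while you argue the contrapositive; your version even yields the slightly sharper $\width\ge 1$ when $\fcov$ is satisfiable.
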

\begin{proof}
We know that $\width(A, b, P) \ge 0$. Suppose $\width(A, b, P) = 0$.
Then $\forall x \in P, Ax = 0$, so $Ax \ge b$ cannot be true.
\end{proof}

\Cref{thm:width-is-positive} means that if $\rho = 0$,
then we know that $\fcov(A, b, P)$ is unsatisfiable.
So we'll assume from now on that $\rho > 0$.

\begin{theorem}
\label{thm:pot-dec}
Let $\epsSigma, \epsAlpha, \eps_2, \eps_3 \in (0, 1)$ be constants.
Suppose $x \in P$, $\xtild \in P$, $\alpha \in \mathbb{R}_{> 0}$,
$\eta \in (0, 1]$ and $\sigma \in \mathbb{R}_{> 0}$
satisfy the following properties:
\begin{itemize}
\item $\lambda(x) > 0$.
\item $(x, y_{\alpha}(x))$ doesn't satisfy condition $\CondIIHyp(\eps_2, \eps_3)$.
Denote $y_{\alpha}(x)$ by $y$ for simplicity.

\item $y^TA\xtild \ge \eta \hyperref[defn:dfcov]{C}(y)$.

\item $\sigma \le \epsSigma/(\alpha\rho)$.
Let $\xhat = (1-\sigma)x + \sigma\xtild$.
Denote $y_{\alpha}(\xhat)$ by $\yhat$ for simplicity.

\item ${\displaystyle \alpha \ge \frac{2}{\lambda(x)\epsAlpha}
\ln\left(\frac{4m}{\epsAlpha}\right)}$.

\item ${\displaystyle \eta \ge (1-\eps_2)\left(\frac{1+\epsSigma}{1-\epsSigma}\right)}$.
\end{itemize}

Then $\xhat \in P$ and
\[ \frac{b^Ty - b^T\yhat}{b^Ty}
> \alpha\sigma\lambda(x)\eps_3\left(1-\epsSigma\right) \frac{\eta}{1-\eps_2} \]
\end{theorem}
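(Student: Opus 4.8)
The plan is to track how the potential function $\Phi(x) \defeq b^Ty_\alpha(x)$ changes when we move from $x$ to $\xhat = (1-\sigma)x + \sigma\xtild$. First I would establish $\xhat \in P$: this is immediate since $P$ is convex, $x, \xtild \in P$, and $\sigma \in (0,1)$ (note $\sigma \le \epsSigma/(\alpha\rho)$ and $\alpha\rho \ge \alpha\lambda(x) \ge \beta \ge 1 > \epsSigma$, so indeed $\sigma < 1$). Then I would write $\Phi(\xhat) = \sum_i \frac1{b_i}\exp(-\alpha(A\xhat)_i/b_i)$ and expand $(A\xhat)_i = (1-\sigma)(Ax)_i + \sigma(A\xtild)_i$, giving
\[ \yhat_i = y_i \exp\!\left(-\alpha\sigma\frac{(A\xtild)_i - (Ax)_i}{b_i}\right). \]

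The key step is a second-order Taylor-type estimate of the exponential. Since $\sigma \le \epsSigma/(\alpha\rho)$ and $\rho \ge \width(A,b,P) \ge \max_i (A\xtild)_i/b_i$ and also $\ge \max_i (Ax)_i/b_i$, the exponent $\alpha\sigma((A\xtild)_i - (Ax)_i)/b_i$ is bounded in absolute value by something like $2\epsSigma$ (or $\epsSigma$, after being careful about signs — $(Ax)_i/b_i \ge \lambda(x) \ge 0$, so the negative part is controlled too). On that bounded range one uses the inequality $e^z \le 1 + z + c z^2$ for a suitable constant, or more cleanly $e^z \le 1 + z\,e^{|z|}$, to get
\[ \yhat_i \le y_i\left(1 - \alpha\sigma\frac{(A\xtild)_i - (Ax)_i}{b_i}(1 - \text{(something)}\epsSigma)\right), \]
where I will need to be careful to push all the $\epsSigma$-slack into a single clean factor $(1-\epsSigma)$ matching the statement. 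Multiplying by $b_i$ and summing over $i$ turns $\sum_i b_i y_i (A\xtild)_i/b_i$ into $y^TA\xtild$ and $\sum_i b_i y_i (Ax)_i/b_i$ into $y^TAx$, so
\[ b^Ty - b^T\yhat \ge \alpha\sigma(1-\epsSigma)\left(y^TA\xtild - y^TAx\right) \]
(up to getting the slack factor exactly right).

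Now I would lower-bound $y^TA\xtild - y^TAx$. By hypothesis $y^TA\xtild \ge \eta C(y)$, and since $(x,y)$ violates $\CondII(\eps_2,\eps_3)$ we have $y^TAx < (1-\eps_2)C(y) - \eps_3\lambda(x)y^Tb$. Subtracting,
\[ y^TA\xtild - y^TAx > \eta C(y) - (1-\eps_2)C(y) + \eps_3\lambda(x)y^Tb = \left(\eta - (1-\eps_2)\right)C(y) + \eps_3\lambda(x)y^Tb. \]
The sixth hypothesis, $\eta \ge (1-\eps_2)(1+\epsSigma)/(1-\epsSigma) \ge 1-\eps_2$, makes the first term nonnegative, so $y^TA\xtild - y^TAx > \eps_3\lambda(x)y^Tb$. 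Combining with the potential-decrease bound and dividing by $b^Ty$ gives $(b^Ty - b^T\yhat)/b^Ty > \alpha\sigma(1-\epsSigma)\eps_3\lambda(x)$. The target bound has an extra factor $\eta/(1-\eps_2) \ge 1$, which I would recover by keeping the nonnegative term $(\eta - (1-\eps_2))C(y)$ instead of discarding it: using $C(y) \ge y^TA\xtild/\eta$ is not quite enough, so instead I would note $y^TA\xtild - y^TAx > \eps_3\lambda(x)y^Tb \cdot \frac{\eta}{1-\eps_2}$ must follow from a tighter bookkeeping — specifically, since $y^TAx \ge 0$ and $y^TA\xtild \ge \eta C(y) \ge \frac{\eta}{1-\eps_2}(y^TAx + \eps_3\lambda(x)y^Tb)$ using the violated $\CondII$, one gets $y^TA\xtild - y^TAx \ge (\frac{\eta}{1-\eps_2}-1)y^TAx + \frac{\eta}{1-\eps_2}\eps_3\lambda(x)y^Tb \ge \frac{\eta}{1-\eps_2}\eps_3\lambda(x)y^Tb$ since the first term is nonnegative.

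The main obstacle I anticipate is the exponential estimate: getting the second-order error term to collapse into exactly the factor $(1-\epsSigma)$ claimed (rather than $(1-2\epsSigma)$ or $(1-c\epsSigma)$) will require choosing the right elementary inequality and exploiting that $(Ax)_i/b_i \ge \lambda(x) \ge 0$ to kill the unfavorable direction of the perturbation. Everything else — convexity for $\xhat \in P$, linearity to reassemble $y^TA\xtild$ and $y^TAx$, and the duality/violation bookkeeping — is routine.
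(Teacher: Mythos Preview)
Your overall strategy matches the paper's: Taylor-expand the exponential, sum to reassemble $y^TA\xtild$ and $y^TAx$, then combine the violated $\CondII$ with $y^TA\xtild \ge \eta C(y)$. But the step you flag as ``the main obstacle'' is a genuine gap, and your proposed fix does not work.

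Concretely, from $e^\delta < 1 + \delta + \delta^2 \le 1 + \delta + \epsSigma|\delta|$ (with $\delta_i = -\alpha\sigma(\lamtild_i - \lambda_i)$ and $|\delta_i| \le \epsSigma$ since both $\lambda_i,\lamtild_i \in [0,\rho]$), one obtains coordinatewise
\[ \frac{b_iy_i - b_i\yhat_i}{\alpha\sigma} > \bigl(y_i(A\xtild)_i - y_i(Ax)_i\bigr) - \epsSigma\,\bigl|y_i(A\xtild)_i - y_i(Ax)_i\bigr|. \]
After summing, the error term is $\epsSigma\sum_i|\cdot|$, and there is no way to collapse this into $\epsSigma\,(y^TA\xtild - y^TAx)$: for indices with $(A\xtild)_i < (Ax)_i$ the sign goes the wrong way, and knowing $(Ax)_i/b_i \ge \lambda(x) \ge 0$ does not help. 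What one \emph{can} do is use $|a-c| \le a+c$ for nonnegative $a,c$, giving the asymmetric estimate
\[ \frac{b^Ty - b^T\yhat}{\alpha\sigma} > (1-\epsSigma)\,y^TA\xtild - (1+\epsSigma)\,y^TAx. \]
This asymmetry is precisely why the sixth hypothesis reads $\eta \ge (1-\eps_2)\frac{1+\epsSigma}{1-\epsSigma}$ rather than merely $\eta \ge 1-\eps_2$: after substituting $y^TA\xtild \ge \eta C(y)$ and $C(y) > \frac{1}{1-\eps_2}\bigl(y^TAx + \eps_3\lambda(x)b^Ty\bigr)$, the coefficient on $y^TAx$ becomes $(1-\epsSigma)\frac{\eta}{1-\eps_2} - (1+\epsSigma)$, and the full strength of the hypothesis is needed to make it nonnegative. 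Your final-paragraph bookkeeping (deriving $y^TA\xtild - y^TAx \ge \frac{\eta}{1-\eps_2}\eps_3\lambda(x)\,b^Ty$) is correct but only uses $\eta \ge 1-\eps_2$; it cannot be paired with the symmetric bound $(1-\epsSigma)(y^TA\xtild - y^TAx)$ you hoped for, because that bound is false. You must instead carry the $(1-\epsSigma)$ and $(1+\epsSigma)$ coefficients separately through the substitution, exactly as the paper does.
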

Intuitively, this means that under certain conditions, a point $x \in P$ can be
replaced by a different point in $P$ such that $b^Ty_{\alpha}(x)$ decreases by a significant factor.
The reasons for choosing such conditions will be apparent on reading
\cref{sec:frac-cover:improve-cover}.

\begin{proof}
\begin{align*}
\sigma &\le \frac{\epsSigma}{\alpha\rho}
\le \frac{\epsSigma}{\rho} \frac{\lambda(x)\epsAlpha}{2\ln\left(4m/\epsAlpha\right)}
\tag{$\alpha \ge \frac{2}{\lambda(x)\epsAlpha}\ln\left(\frac{4m}{\epsAlpha}\right)$}
\\ &\le \frac{\epsSigma\epsAlpha}{2\ln\left(4m/\epsAlpha\right)}
\tag{$\lambda(x) \le \rho$}
\\ &< \frac{1}{2\ln(4m)}
\le \frac{1}{2}
\tag{$\epsSigma, \epsAlpha < 1$}
\end{align*}
Therefore, $\sigma \le 1$, so $\xhat$ is a convex combination of $x$ and $\xtild$.
Therefore, $\xhat \in P$.

Let $\lambda_i \defeq (Ax)_i/b_i$, $\lamtild_i \defeq (A\xtild)_i/b_i$
and $\lamhat_i \defeq (A\xhat)_i/b_i$. Then
\begin{align*}
\xhat - x &= \sigma(\xtild - x)
& \lamhat_i &= (1-\sigma)\lambda_i + \sigma\lamtild_i
& \lamhat_i - \lambda_i &= \sigma(\lamtild_i - \lambda_i)
\\ y_i &= \frac{1}{b_i}e^{-\alpha\lambda_i}
& \yhat_i &= \frac{1}{b_i}e^{-\alpha\lamhat_i}
\end{align*}
\[ \frac{\yhat_i}{y_i}
= \exp(-\alpha(\lamhat_i - \lambda_i))
= \exp(-\alpha\sigma(\lamtild_i - \lambda_i)) \]

Let $\delta \defeq -\alpha\sigma(\lamtild_i - \lambda_i)$.
By definition of $\rho$ and $\widthHyp$, $\lambda_i \le \rho$ and $\lamtild_i \le \rho$.
Therefore, $\smallabs{\lamtild_i - \lambda_i} \le \rho$.
This gives us
$\abs{\delta} = \alpha\sigma\smallabs{\lamtild_i - \lambda_i}
\le \alpha\sigma\rho \le \epsSigma < 1$.

For $\abs{\delta} \le 1$, we get $e^{\delta} < 1 + \delta + \delta^2
\le 1 + \delta + \abs{\delta}\epsSigma$.

\begin{align*}
b_i\yhat_i &= b_iy_ie^{\delta}
\\ &< b_iy_i (1 + \delta + \epsSigma\abs{\delta})
\\ &= b_iy_i - \alpha\sigma y_ib_i(\lamtild_i - \lambda_i)
+ \alpha\sigma\epsSigma\smallabs{y_ib_i(\lamtild_i - \lambda_i)}
\\ &= b_iy_i - \alpha\sigma (y_i(A\xtild)_i - y_i(Ax)_i)
+ \alpha\sigma\epsSigma\abs{y_i(A\xtild)_i - y_i(Ax)_i}
\end{align*}
\begin{align*}
\implies \frac{b_iy_i - b_i\yhat_i}{\alpha\sigma}
&> (y_i(A\xtild)_i - y_i(Ax)_i) - \epsSigma\abs{y_i(A\xtild)_i - y_i(Ax)_i}
\\ &\ge (y_i(A\xtild)_i - y_i(Ax)_i) - \epsSigma(y_i(A\xtild)_i + y_i(Ax)_i)
\\ &= (1-\epsSigma)y_i(A\xtild)_i - (1+\epsSigma)y_i(Ax)_i
\end{align*}
\begin{align*}
\implies \frac{b^Ty - b^T\yhat}{\alpha\sigma}
&> (1-\epsSigma)y^TA\xtild - (1+\epsSigma)y^TAx
\\ &\ge (1-\epsSigma)\eta C(y) - (1+\epsSigma)y^TAx
\end{align*}
Since condition $\CondII(\eps_2, \eps_3)$ is not satisfied, we get
\[ (1-\eps_2)C(y) > y^TAx + \eps_3\lambda(x) b^Ty \]
\begin{align*}
\implies & \frac{b^Ty - b^T\yhat}{\alpha\sigma}
> (1-\epsSigma)\eta C(y) - (1+\epsSigma)y^TAx
\\ &> (1-\epsSigma)\eta \frac{y^TAx + \eps_3\lambda(x) b^Ty}{1-\eps_2}
- (1+\epsSigma)y^TAx
\\ &= \left((1-\epsSigma)\frac{\eta}{1-\eps_2} - (1 + \epsSigma)\right)y^TAx
+ (1-\epsSigma)\frac{\eta\eps_3}{1-\eps_2} \lambda(x) b^Ty
\end{align*}
\[ \eta \ge (1-\eps_2)\frac{1+\epsSigma}{1-\epsSigma}
\implies (1-\epsSigma)\frac{\eta}{1-\eps_2} - (1 + \epsSigma) \ge 0 \]
Therefore,
\begin{align*}
& \frac{b^Ty - b^T\yhat}{\alpha\sigma}
    > (1-\epsSigma) \frac{\eta\eps_3}{1-\eps_2} \lambda(x) b^Ty
\\ &\implies \frac{b^Ty - b^T\yhat}{b^Ty}
\ge \alpha\sigma\lambda(x)\eps_3(1-\epsSigma)\frac{\eta}{1-\eps_2}
\qedhere \end{align*}
\end{proof}

\subsection{Algorithm \texorpdfstring{$\improveCover$}{improve-cover}}
\label{sec:frac-cover:improve-cover}

We'll now start building an algorithm for $\fcov(A, b, P)$,
where $A$, $b$, $P$ are implicitly defined by input $I$.

\begin{algorithm}[H]
\caption{$\improveCover(I, x, \epsSigma, \eps_1, \eps_2, \eps_3, \rho)$:\\
Requires $x \in P$, $\rho > 0$, $\lambda(x) > 0$, $0 < \epsSigma, \eps_1, \eps_2, \eps_3 < 1$ and
${\displaystyle (1-\eps_2)\frac{1+\epsSigma}{1-\epsSigma} \le \eta \le 1}$.}
\label{algo:improve-cover}
\begin{algorithmic}[1]
\State $\lambda_0 = \lambda(x)
\quad {\displaystyle \alpha \defeq \frac{4}{\lambda_0\eps_1}\ln\left(\frac{4m}{\eps_1}\right)}
\quad {\displaystyle \sigma \defeq \frac{\epsSigma}{\alpha\rho}}$
\While{$\lambda(x) \le 2\lambda_0$ and $(x, y_{\alpha}(x))$ doesn't satisfy
        condition $\CondIIHyp(\eps_2, \eps_3)$}
    \State $\xtild = \pointFindHyp(y_{\alpha}(x))$
        \Comment{now $y_{\alpha}(x)^TA\xtild \ge \eta C(y_{\alpha}(x))$}.
    \State \label{alg-line:improve-cover:x-upd}$x = (1-\sigma)x + \sigma\xtild$
\EndWhile
\State \texttt{success} = \texttt{true} if
    $(x, y_{\alpha}(x))$ satisfies condition $\CondIIHyp(\eps_2, \eps_3)$ else \texttt{false}
\State \Return ($x$, \texttt{success})
\end{algorithmic}
\end{algorithm}

\begin{lemma}
\label{thm:improve-cover-invs}
Let $x^{[0]}$ be the initial value of $x$.
Throughout $\improveCover$, the following conditions are satisfied:
\begin{itemize}
\item $x \in P$
\item $b^Ty_{\alpha}(x) \le b^Ty_{\alpha}(x^{[0]})$
\item ${\displaystyle \lambda(x) \ge \frac{3}{4}\lambda_0}$
\item ${\displaystyle \alpha \ge \frac{3}{\lambda(x)\eps_1}\ln\left(\frac{4m}{\eps_1}\right)}$
\end{itemize}
\end{lemma}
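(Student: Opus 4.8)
The plan is to prove all four invariants simultaneously by induction on the iterations of the \textbf{while} loop. The first invariant, $x \in P$, is essentially immediate: it holds for $x^{[0]}$ by the precondition, and every update on line~\ref{alg-line:improve-cover:x-upd} replaces $x$ by a convex combination $(1-\sigma)x + \sigma\xtild$ of two points of $P$, provided $0 \le \sigma \le 1$. The bound $\sigma = \epsSigma/(\alpha\rho) \le 1/2$ was already established inside the proof of \cref{thm:pot-dec} (from $\alpha \ge \frac{2}{\lambda(x)\epsAlpha}\ln(4m/\epsAlpha)$ and $\lambda(x)\le\rho$), and the same computation applies here with $\eps_1$ in the role of $\epsAlpha$, since the initial $\alpha = \frac{4}{\lambda_0\eps_1}\ln(4m/\eps_1) \ge \frac{2}{\lambda_0\eps_1}\ln(4m/\eps_1)$; so $\sigma \le 1$ and convexity gives $x \in P$.

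Next I would handle the third invariant, $\lambda(x) \ge \frac34\lambda_0$, since the fourth follows from it mechanically. The key observation is that the loop guard ensures we only perform an update when $\lambda(x) \le 2\lambda_0$, and each update changes $x$ by $\sigma(\xtild - x)$, which changes each coordinate ratio $(Ax)_i/b_i$ by $\sigma(\lamtild_i - \lambda_i)$, an amount bounded in absolute value by $\sigma\rho \le \epsSigma/\alpha < \lambda_0\eps_1/(4\ln(4m/\eps_1)) \le \lambda_0/4$ (using $\sigma\rho = \epsSigma/\alpha \le 1/\alpha$ and the value of $\alpha$). Hence a single update can decrease $\lambda(x) = \min_i (Ax)_i/b_i$ by at most $\lambda_0/4$; combined with the fact that before the update $\lambda(x) \ge \lambda_0$ — which I would get by arguing that the update is only reached when the guard $\lambda(x)\le 2\lambda_0$ held \emph{and}, from the induction hypothesis, $\lambda(x)\ge\tfrac34\lambda_0$, but actually a cleaner route is: the first time $\lambda(x)$ could drop below $\tfrac34\lambda_0$ it must cross from $\ge\tfrac34\lambda_0$ downward by a step of size $\le\lambda_0/4$, which is impossible — so $\lambda(x) \ge \frac34\lambda_0$ is maintained. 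I expect this ``last step'' argument to be the main subtlety: one must be careful that $\sigma\rho$ is bounded using the \emph{current} $\alpha$, which is fixed throughout the loop at its initial value, so the bound $\sigma\rho \le \lambda_0/4$ is uniform and the argument is clean.

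Given $\lambda(x) \ge \frac34\lambda_0$, the fourth invariant is pure algebra: $\alpha = \frac{4}{\lambda_0\eps_1}\ln(4m/\eps_1) = \frac{3}{(\tfrac34\lambda_0)\eps_1}\ln(4m/\eps_1) \ge \frac{3}{\lambda(x)\eps_1}\ln(4m/\eps_1)$. Finally, for the second invariant $b^Ty_{\alpha}(x) \le b^Ty_{\alpha}(x^{[0]})$, I would invoke \cref{thm:pot-dec}: at the moment of each update we have $x \in P$ with $\lambda(x) > 0$, the loop guard tells us $(x, y_{\alpha}(x))$ does not satisfy $\CondII(\eps_2,\eps_3)$, the point $\xtild = \pointFind(y_{\alpha}(x))$ satisfies $y^TA\xtild \ge \eta C(y)$, $\sigma = \epsSigma/(\alpha\rho)$, the lower bound on $\alpha$ in terms of $\lambda(x)$ is exactly invariant~4 (with $\eps_1$ as $\epsAlpha$), and $\eta \ge (1-\eps_2)\frac{1+\epsSigma}{1-\epsSigma}$ is the algorithm's precondition — so all hypotheses of \cref{thm:pot-dec} hold and it yields $b^Ty_{\alpha}(\xhat) < b^Ty_{\alpha}(x)$ strictly. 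Chaining this over all updates so far gives $b^Ty_{\alpha}(x) \le b^Ty_{\alpha}(x^{[0]})$, closing the induction.
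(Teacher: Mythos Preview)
Your inductive scheme is fine for invariants 1, 2, and 4, but the argument for invariant~3 has a real gap. Knowing that a single update changes each ratio $(Ax)_i/b_i$ by at most $\sigma\rho < \lambda_0/4$ only tells you that $\lambda(x)$ drops by at most $\lambda_0/4$ \emph{per step}; it does not prevent the accumulated drop over many iterations from exceeding $\lambda_0/4$. Your ``cleaner route'' makes this explicit: if before some update $\lambda(x)$ is, say, $0.76\lambda_0$ (which satisfies the induction hypothesis $\lambda(x)\ge\tfrac34\lambda_0$), a step of size $0.1\lambda_0 \le \lambda_0/4$ can land you at $0.66\lambda_0 < \tfrac34\lambda_0$. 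So the crossing you call ``impossible'' is in fact perfectly possible. Your alternative claim that ``before the update $\lambda(x) \ge \lambda_0$'' is likewise not implied by the loop guard $\lambda(x)\le 2\lambda_0$ together with the hypothesis $\lambda(x)\ge\tfrac34\lambda_0$.

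The fix is to reverse the order of your last two steps and derive invariant~3 \emph{from} invariant~2 via the potential bounds of \cref{thm:pot-bounds}. Once you have $b^Ty_{\alpha}(\xhat) \le b^Ty_{\alpha}(x^{[0]})$ (which, as you correctly argue, follows from \cref{thm:pot-dec} using the induction hypothesis for invariant~4), combine it with $e^{-\alpha\lambda(\xhat)} \le b^Ty_{\alpha}(\xhat)$ and $b^Ty_{\alpha}(x^{[0]}) \le m e^{-\alpha\lambda_0}$ to get
\[
\lambda_0 - \lambda(\xhat) \;\le\; \frac{\ln m}{\alpha}
\;=\; \frac{\lambda_0\eps_1\ln m}{4\ln(4m/\eps_1)}
\;\le\; \frac{\lambda_0}{4}.
\]
This is a \emph{global} lower bound, independent of how many iterations have elapsed, and it is exactly what the paper does. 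Invariant~4 then follows from invariant~3 by the algebra you already have.
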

\begin{proof}
These conditions are satisfied at the beginning of the algorithm.

Assume that the conditions are satisfied at the beginning of the \texttt{while} loop body.
Then the conditions for \thmdepcref{thm:pot-dec}{} are satisfied, so
$\xhat = (1-\sigma)x + \sigma\xtild \in P$
and $b^Ty_{\alpha}(\xhat) \le b^Ty_{\alpha}(x) \le b^Ty_{\alpha}(x^{[0]})$.

Let $\lamhat = \lambda(\xhat)$.
By \thmdepcref{thm:pot-bounds}{}, we get
\begin{align*}
& e^{-\alpha\lamhat} \le b^Ty_{\alpha}(\xhat)
\le b^Ty_{\alpha}(x^{[0]}) \le me^{-\alpha\lambda_0}
\\ &\implies me^{\alpha\lamhat} \ge e^{\alpha\lambda_0}
\\ &\implies \lambda_0 - \lamhat \le \frac{\ln m}{\alpha}
= \ln m \frac{\lambda_0\eps_1}{4 \ln\left(\frac{4m}{\eps_1}\right)}
\le \frac{\lambda_0}{4}
\\ &\implies \frac{3\lambda_0}{4} \le \lamhat
\end{align*}
\[ \alpha = \frac{4}{\lambda_0\eps_1} \ln\left(\frac{4m}{\eps_1}\right)
\ge \frac{3}{\lamhat\eps_1} \ln\left(\frac{4m}{\eps_1}\right) \]
Therefore, the conditions are also satisfied after the \texttt{while} loop.
\end{proof}

\begin{corollary}
\label{thm:improve-cover-c1}
Throughout $\improveCover$, $(x, y_{\alpha}(x))$ satisfies condition $\CondIHyp(\eps_1)$.
\end{corollary}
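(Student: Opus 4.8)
The plan is to read off the conclusion directly from the last invariant of \cref{thm:improve-cover-invs} together with \cref{thm:c1-sat}. Recall that \cref{thm:c1-sat}, applied with $\eps \defeq \eps_1$, states that $\alpha\lambda(x) \ge \beta$ implies that $(x, y_{\alpha}(x))$ satisfies $\CondI(\eps_1)$, where $\beta = \frac{2}{\eps_1}\ln\left(\frac{4m}{\eps_1}\right)$; note its precondition $0 < \eps_1 < 1$ is part of the requirements on the inputs to $\improveCover$.

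First I would invoke \cref{thm:improve-cover-invs}, whose last bullet gives, at every point during the execution of $\improveCover$, the bound $\alpha \ge \frac{3}{\lambda(x)\eps_1}\ln\left(\frac{4m}{\eps_1}\right)$; and whose third bullet gives $\lambda(x) \ge \frac{3}{4}\lambda_0 > 0$ (the positivity of $\lambda_0 = \lambda(x^{[0]})$ being part of the precondition of $\improveCover$), so that multiplying the inequality through by $\lambda(x) > 0$ is legitimate. Doing so yields $\alpha\lambda(x) \ge \frac{3}{\eps_1}\ln\left(\frac{4m}{\eps_1}\right) \ge \frac{2}{\eps_1}\ln\left(\frac{4m}{\eps_1}\right) = \beta$, since $3 \ge 2$. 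Then \cref{thm:c1-sat} immediately gives that $(x, y_{\alpha}(x))$ satisfies $\CondI(\eps_1)$.

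There is essentially no obstacle: the only thing to be careful about is that \cref{thm:improve-cover-invs} is phrased as an invariant maintained throughout the loop, so the conclusion also holds throughout; and that the slack between the constant $3$ appearing in the invariant and the constant $2$ in the definition of $\beta$ is exactly what makes the hypothesis of \cref{thm:c1-sat} go through (this slack is, in turn, why $\alpha$ was defined with a factor $4$ rather than $2$ in \cref{algo:improve-cover}, leaving room for $\lambda(x)$ to drop to $\tfrac34\lambda_0$).
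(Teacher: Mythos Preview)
Your proposal is correct and follows exactly the paper's approach: invoke the invariant $\alpha \ge \frac{3}{\lambda(x)\eps_1}\ln\!\left(\frac{4m}{\eps_1}\right)$ from \cref{thm:improve-cover-invs} and combine it with \cref{thm:c1-sat}. Your additional remarks about $\lambda(x)>0$ and the $3$-versus-$2$ slack are helpful elaborations but do not change the argument.
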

\begin{proof}
Use ${\displaystyle \alpha
\ge \frac{3}{\lambda(x)\eps_1} \ln\left(\frac{4m}{\eps_1}\right)}$
from \thmdepcref{thm:improve-cover-invs}{}
and \thmdepcref{thm:c1-sat}{}.
\end{proof}

\begin{lemma}
\label{thm:improve-cover}
Let $\eps' \defeq (\eps_1 + \eps_2 + \eps_3)/(1 + \eps_1 + \eps_3)$.
If $\improveCover$ terminates and \texttt{success} is \texttt{true},
then $x$ is \hyperref[defn:ofcov]{$\eps'$-optimal} for $\ofcov(A, b, P)$.
If $\improveCover$ terminates and \texttt{success} is \texttt{false},
then $\lambda(x) > 2\lambda_0$.
\end{lemma}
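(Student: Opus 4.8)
The plan is to split into the two cases according to the value of \texttt{success} upon termination, and in each case exploit the loop structure of $\improveCover$.

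First, suppose $\improveCover$ terminates with \texttt{success} being \texttt{true}. By the line defining \texttt{success}, this means $(x, y_\alpha(x))$ satisfies condition $\CondIIHyp(\eps_2, \eps_3)$. By \cref{thm:improve-cover-c1}, throughout the algorithm (in particular at termination) $(x, y_\alpha(x))$ satisfies condition $\CondIHyp(\eps_1)$. So $x$ and $y \defeq y_\alpha(x) \ge 0$ satisfy both $\CondI(\eps_1)$ and $\CondII(\eps_2, \eps_3)$. Now I would invoke \cref{thm:rlx-to-opt}, which says that under these two conditions $x$ is $\eps'$-optimal for $\ofcov(A, b, P)$ with $\eps' = (\eps_1 + \eps_2 + \eps_3)/(1 + \eps_2 + \eps_3)$. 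I need to reconcile this with the $\eps'$ in the present lemma, which is $(\eps_1 + \eps_2 + \eps_3)/(1 + \eps_1 + \eps_3)$; since $\eps_1 \le 1 \le$ (well, the denominators differ), the denominator $1 + \eps_1 + \eps_3$ here is at least $1 + \eps_2 + \eps_3$ only if $\eps_1 \ge \eps_2$, which need not hold — so instead I would note that being $\eps''$-optimal for a smaller $\eps''$ implies being $\eps'$-optimal for any larger $\eps'$, and check that the value from \cref{thm:rlx-to-opt} is at most the value claimed here. Actually the cleanest route: rerun the short computation inside the proof of \cref{thm:rlx-to-opt} but keeping the weaker bound $\lambda(x) \ge \frac{1 - \eps_2}{1 + \eps_1 + \eps_3}\lambda^* = (1 - \eps')\lambda^*$ with the present $\eps'$; this is exactly the penultimate line of that proof, so the bound with denominator $1 + \eps_1 + \eps_3$ is directly available and gives $\eps'$-optimality as stated.

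Second, suppose $\improveCover$ terminates with \texttt{success} being \texttt{false}. Then $(x, y_\alpha(x))$ does not satisfy $\CondII(\eps_2, \eps_3)$, so the \texttt{while}-loop guard failed for the other reason: $\lambda(x) > 2\lambda_0$. That is the desired conclusion; the only subtlety is arguing that the guard was indeed false because of the $\lambda$-condition and not because $\lambda(x) \le 2\lambda_0$ held while $\CondII$ also failed — but that conjunction is precisely the loop-continuation condition, so the loop would not have terminated. Hence at termination with \texttt{success} false we must have $\lambda(x) > 2\lambda_0$.

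The main obstacle is the bookkeeping around the value of $\eps'$: matching the $\eps'$ coming from \cref{thm:rlx-to-opt} (denominator $1 + \eps_2 + \eps_3$) with the $\eps'$ stated in this lemma (denominator $1 + \eps_1 + \eps_3$). The resolution is to not apply \cref{thm:rlx-to-opt} as a black box but to reuse its internal inequality $\lambda(x) \ge \frac{1 - \eps_2}{1 + \eps_1 + \eps_3}\lambda^*$, together with weak duality (\cref{thm:weak-duality}) giving $\lambda^* \le C(y)/(b^Ty)$ and the two relaxed conditions; everything else is a direct reading of the algorithm's loop guard and the definition of the \texttt{success} flag.
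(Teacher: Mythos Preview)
Your proof is correct and follows the same approach as the paper: invoke \cref{thm:improve-cover-c1} for $\CondI(\eps_1)$, combine with $\CondII(\eps_2,\eps_3)$ from the \texttt{success} flag, and apply \cref{thm:rlx-to-opt}; for the \texttt{false} case, read off the loop guard. The discrepancy you noticed in $\eps'$ is simply a typo in the \emph{statement} of \cref{thm:rlx-to-opt} --- its own proof actually derives the denominator $1+\eps_1+\eps_3$, matching this lemma --- so the paper just cites it as a black box without further comment.
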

\begin{proof}
By \thmdepcref{thm:improve-cover-c1}{},
$x$ satisfies condition $\CondI(\eps_1)$.
If \texttt{success} is \texttt{true}, then $x$ satisfies condition $\CondIIHyp(\eps_2, \eps_3)$.
So by \thmdepcref{thm:rlx-to-opt}{},
$x$ is $\eps'$-optimal for $\ofcov(A, b, P)$.

If \texttt{success} is \texttt{false}, then $x$ doesn't satisfy
condition $\CondIIHyp(\eps_2, \eps_3)$.
Since the \texttt{while} loop ended, $\lambda(x) > 2\lambda_0$.
\end{proof}

\begin{claim}
\label{thm:ln-lb}
$\forall x \in \mathbb{R}_{> 0}, \ln x > (x-1)/x$.
\end{claim}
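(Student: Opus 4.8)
The plan is to derive this from the familiar tangent-line bound on the logarithm, namely $\ln t \le t - 1$ for every $t > 0$. Substituting $t = 1/x$ (legal since $x > 0$) gives $\ln(1/x) \le 1/x - 1$, i.e.\ $-\ln x \le (1-x)/x$, which rearranges to $\ln x \ge (x-1)/x$. So the only real content is establishing $\ln t \le t - 1$, which is standard: it follows from concavity of $\ln$ (the graph lies below its tangent line at $t = 1$), or from a one-line calculus argument applied to $g(t) \defeq t - 1 - \ln t$, whose derivative $g'(t) = 1 - 1/t$ is negative on $(0,1)$ and positive on $(1,\infty)$, so $g$ attains its global minimum $g(1) = 0$ and hence $g \ge 0$ everywhere.

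Alternatively, one can prove the claim in a single step without the substitution: set $f(x) \defeq \ln x - 1 + 1/x$, so that $f'(x) = 1/x - 1/x^2 = (x-1)/x^2$. Then $f$ is decreasing on $(0,1)$ and increasing on $(1,\infty)$, so $f(x) \ge f(1) = 0$ for all $x > 0$, which is exactly $\ln x \ge (x-1)/x$.

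There is no real obstacle here — the statement is elementary. The only point worth flagging is that equality holds at $x = 1$, so the inequality is non-strict in general; the strict form as quoted should be understood as holding for $x \ne 1$, which is all that is needed wherever the claim is subsequently applied.
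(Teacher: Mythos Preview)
Your argument is correct. The paper states this claim without proof, treating it as a standard elementary inequality, so there is nothing to compare against; your calculus argument (via either $g(t) = t - 1 - \ln t$ or $f(x) = \ln x - 1 + 1/x$) is entirely adequate. Your observation about strictness is also on point: equality holds at $x = 1$, so the claim as written should really read $\ln x \ge (x-1)/x$. This is harmless in context, since the paper only invokes the claim (in the proof of \cref{thm:improve-cover-iters}) to deduce $\ln((1-\beta)^{-1}) \ge \beta$, which uses the non-strict form anyway.
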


\begin{theorem}
\label{thm:improve-cover-iters}
Let $T$ be the number of times $\improveCover$ calls $\pointFindHyp$.
Then $T$ is finite and
\[ T \le \ceil{\frac{4\rho}{3\epsSigma\eps_3\lambda_0}
\left(\ln m + \frac{4}{\eps_1}\ln\left(\frac{4m}{\eps_1}\right)\right)}. \]
\end{theorem}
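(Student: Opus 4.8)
The plan is to treat $\Phi(x) \defeq b^T y_\alpha(x)$ as a potential, show that every execution of the \texttt{while}-loop body of $\improveCover$ multiplies $\Phi$ by a fixed factor strictly below $1$, and then cap the number of iterations by sandwiching $\Phi$ between the two bounds of \cref{thm:pot-bounds}, keeping the lower bound usable via the loop guard $\lambda(x) \le 2\lambda_0$.

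First I would check that at the top of each execution of the loop body all hypotheses of \cref{thm:pot-dec} hold, with $\epsAlpha \defeq \eps_1$; this is exactly the verification already carried out inside the proof of \cref{thm:improve-cover-invs}. Indeed, $\lambda(x) > 0$ and $\alpha \ge \tfrac{2}{\lambda(x)\eps_1}\ln(4m/\eps_1)$ follow from \cref{thm:improve-cover-invs}, which gives $\lambda(x) \ge \tfrac34\lambda_0$ and the stronger factor-$3$ bound on $\alpha$; the failure of $\CondIIHyp(\eps_2,\eps_3)$ at $(x,y_\alpha(x))$ is the loop guard; $y_\alpha(x)^T A\xtild \ge \eta C(y_\alpha(x))$ holds since $\xtild = \pointFind(y_\alpha(x))$; $\sigma = \epsSigma/(\alpha\rho)$ meets its bound with equality; and $\eta \ge (1-\eps_2)(1+\epsSigma)/(1-\epsSigma)$ is a precondition of $\improveCover$. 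Thus, by \cref{thm:pot-dec}, the update $x \mapsto \xhat \defeq (1-\sigma)x + \sigma\xtild$ satisfies $b^T\yhat < (1-\gamma)\,b^T y_\alpha(x)$ once I lower-bound the decrease constant $\alpha\sigma\lambda(x)\eps_3(1-\epsSigma)\tfrac{\eta}{1-\eps_2}$ by $\gamma \defeq \tfrac{3\epsSigma\eps_3\lambda_0}{4\rho}$, using $\alpha\sigma = \epsSigma/\rho$, $\lambda(x) \ge \tfrac34\lambda_0$, and $(1-\epsSigma)\tfrac{\eta}{1-\eps_2} \ge 1+\epsSigma \ge 1$; here $\gamma < 1$ since $\epsSigma,\eps_3 < 1$ and $\lambda_0 \le \width(A,b,P) \le \rho$. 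Writing $x^{[t]}$ for the value of $x$ after $t$ iterations, this telescopes to $\Phi(x^{[t]}) < (1-\gamma)^t\,\Phi(x^{[0]})$.

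Finally I would convert this into the iteration bound. Suppose the loop body runs at least $T$ times. When iteration $T$ begins, the guard held for $x^{[T-1]}$, so $\lambda(x^{[T-1]}) \le 2\lambda_0$; combining $\Phi(x^{[T-1]}) < (1-\gamma)^{T-1}\Phi(x^{[0]})$ with both parts of \cref{thm:pot-bounds} gives
\[ e^{-2\alpha\lambda_0} \le \Phi(x^{[T-1]}) < (1-\gamma)^{T-1}\, m\, e^{-\alpha\lambda_0}, \]
hence $(1-\gamma)^{T-1} > e^{-\alpha\lambda_0}/m$. Taking logarithms and applying \cref{thm:ln-lb} to $1/(1-\gamma) > 1$ (which gives $\ln\tfrac{1}{1-\gamma} > \gamma$) yields $(T-1)\gamma < \alpha\lambda_0 + \ln m$, so $T - 1 < (\alpha\lambda_0 + \ln m)/\gamma$ and therefore $T \le \ceil{(\alpha\lambda_0 + \ln m)/\gamma}$, which in particular shows $T$ is finite. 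Substituting $\alpha\lambda_0 = \tfrac{4}{\eps_1}\ln(4m/\eps_1)$ and $1/\gamma = \tfrac{4\rho}{3\epsSigma\eps_3\lambda_0}$ then reproduces the claimed bound. The one delicate point is the loop bookkeeping: the per-iteration shrinkage must be read off at the pre-update state $x^{[t-1]}$, so that $\lambda(x^{[t-1]}) \ge \tfrac34\lambda_0$ from \cref{thm:improve-cover-invs} applies, whereas the floor $e^{-2\alpha\lambda_0}$ on $\Phi$ may be invoked only at $x^{[T-1]}$, where the guard still bounds $\lambda$ from above --- one must not use $x^{[T]}$, on which no such bound is available. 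Everything else is routine substitution.
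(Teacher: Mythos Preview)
Your proposal is correct and follows essentially the same route as the paper: both treat $b^Ty_\alpha(x)$ as a potential, invoke \cref{thm:pot-dec} (with hypotheses supplied by \cref{thm:improve-cover-invs} and the loop guard) to get a per-iteration multiplicative drop, sandwich the potential via \cref{thm:pot-bounds} using $\lambda(x^{[t]})\le 2\lambda_0$ from the guard, and finish with \cref{thm:ln-lb}. The only cosmetic difference is that the paper first names the raw decrease constant $\beta=\tfrac34\alpha\sigma\lambda_0\eps_3(1-\epsSigma)\tfrac{\eta}{1-\eps_2}$ and then bounds it below, whereas you pass directly to the simplified $\gamma=\tfrac{3\epsSigma\eps_3\lambda_0}{4\rho}$.
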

\begin{proof}
Let $x^{[t]}$ be the value of $x$ after $t$ runs of the \texttt{while} loop.
Let $\psi(t) \defeq b^Ty_{\alpha}(x^{[t]})$.
So $\lambda(x^{[t]}) \le 2\lambda_0$ for all $t < T$.
By \thmdepcref{thm:pot-bounds,thm:improve-cover-invs}{}, we get
\begin{align*}
& e^{-\alpha(2\lambda_0)} \le e^{-\alpha\lambda(x^{[t]})}
\le b^Ty_{\alpha}(x^{[t]}) \le b^Ty_{\alpha}(x^{[0]}) \le me^{-\alpha\lambda_0}
\\ &\implies \frac{\psi(0)}{\psi(t)}
\le \frac{me^{-\alpha\lambda_0}}{e^{-\alpha(2\lambda_0)}} = me^{\alpha\lambda_0}
\end{align*}
Define
\[ \beta \defeq \frac{3}{4}\alpha\sigma\lambda_0\eps_3(1-\epsSigma)\frac{\eta}{1-\eps_2} \]
By \thmdepcref{thm:improve-cover-invs}{},
the conditions for \cref{thm:pot-dec} are satisfied for $x = x^{[j]}$ for all $j < T$.
By \thmdepcref{thm:pot-dec}{},
\[ \forall j < T, \frac{\psi(j+1)}{\psi(j)} < 1 - \beta
\implies \frac{\psi(0)}{\psi(t)} > (1-\beta)^{-t} \]
Therefore,
\[ (1-\beta)^{-t} < \frac{\psi(0)}{\psi(t)} \le me^{\alpha\lambda_0}
\implies t < \frac{\ln m + \alpha\lambda_0}{\ln(1/(1-\beta))}
\iff t \le \ceil{\frac{\ln m + \alpha\lambda_0}{\ln(1/(1-\beta))}} - 1 \]
Since this is true for all $t < T$, we get that $T$ is finite and
\[ T \le \ceil{\frac{\ln m + \alpha\lambda_0}{\ln(1/(1-\beta))}} \]
By \thmdepcref{thm:ln-lb}{}, $\ln((1-\beta)^{-1}) \ge \beta$.
\begin{align*}
\beta &= \frac{3}{4}\alpha\sigma\lambda_0
\left(1-\epsSigma\right)\frac{\eta\eps_3}{1-\eps_2}
= \frac{3\epsSigma\lambda_0}{4\rho}
\left(1-\epsSigma\right)\frac{\eta\eps_3}{1-\eps_2}
\tag{since $\sigma \defeq \epsSigma/(\alpha\rho)$}
\\ &\ge \frac{3\eps_3\epsSigma\lambda_0}{4\rho}
\left( 1 + \epsSigma\right)
> \frac{3\eps_3\epsSigma\lambda_0}{4\rho}
\tag{${\displaystyle \eta \ge (1-\eps_2)\frac{1+\epsSigma}{1-\epsSigma}}$}
\end{align*}
Therefore,
\begin{align*}
T \le \ceil{\frac{\ln m + \alpha\lambda_0}{\ln((1-\beta)^{-1})}}
&\le \ceil{\frac{1}{\beta}\left(\ln m + \frac{4}{\eps_1}\ln\left(\frac{4m}{\eps_1}\right)\right)}
\\ &\le \ceil{\frac{4\rho}{3\eps_3\epsSigma\lambda_0}
\left(\ln m + \frac{4}{\eps_1}\ln\left(\frac{4m}{\eps_1}\right)\right)}
\qedhere \end{align*}
\end{proof}

\subsection{Algorithm \texorpdfstring{$\fracCover$}{frac-cover}}

\subsubsection{Starting with Good \texorpdfstring{$\lambda(x)$}{lambda(x)}}

The number of iterations within $\improveCover$ depends inversely on $\lambda_0$.
Therefore, we have to ensure that $\lambda_0$ isn't too small.
Let $e_i$ be the $i\Th$ standard basis vector for $\mathbb{R}^m$.
So $e_i^TAx = (Ax)_i$.

\begin{lemma}
\label{thm:get-seed}
Let $x^{(i)} \defeq \pointFindHyp(e_i)$.
Let $\ddot{x} \defeq \frac{1}{m}\sum_{i=1}^m x^{(i)}$.
If $(Ax^{(i)})_i < \eta b_i$ for some $i \in [m]$, then $\fcov(A, b, P)$ is unsatisfiable.
Otherwise, $\lambda(\ddot{x}) \ge \eta/m$.
\end{lemma}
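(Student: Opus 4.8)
The plan is to reason directly from the definitions of the $\eta$-weak $\pointFind$ oracle and of $\lambda(\cdot)$. First I would handle the "unsatisfiable" case: suppose there is some $i \in [m]$ with $(Ax^{(i)})_i < \eta b_i$. Since $x^{(i)} = \pointFind(e_i)$, the oracle guarantees $e_i^T A x^{(i)} \ge \eta \max_{x \in P} e_i^T A x$, i.e.\ $(Ax^{(i)})_i \ge \eta \max_{x \in P}(Ax)_i$. Combining, $\eta \max_{x \in P}(Ax)_i < \eta b_i$, so $\max_{x \in P}(Ax)_i < b_i$. Hence no $x \in P$ can satisfy $(Ax)_i \ge b_i$, so certainly no $x \in P$ satisfies $Ax \ge b$; that is, $\fcov(A, b, P)$ is unsatisfiable.

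Next I would treat the complementary case, where $(Ax^{(i)})_i \ge \eta b_i$ for every $i \in [m]$. Here I want a lower bound on $\lambda(\ddot x) = \min_{i=1}^m (A\ddot x)_i / b_i$. Because $Ax \ge 0$ for all $x \in P$ (part of the definition of $\fcov$), every $x^{(j)}$ contributes a non-negative amount to each coordinate of $A\ddot x$. So for a fixed coordinate $i$,
\[
(A\ddot x)_i = \frac{1}{m}\sum_{j=1}^m (Ax^{(j)})_i \ge \frac{1}{m}(Ax^{(i)})_i \ge \frac{\eta b_i}{m},
\]
where the first inequality drops all terms except $j = i$ (valid by non-negativity) and the second uses the case hypothesis. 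Dividing by $b_i > 0$ gives $(A\ddot x)_i / b_i \ge \eta/m$ for every $i$, hence $\lambda(\ddot x) \ge \eta/m$. I should also note $\ddot x \in P$ since $P$ is convex and $\ddot x$ is the average of the $m$ points $x^{(j)} \in P$.

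I do not anticipate a genuine obstacle here: the only subtlety is remembering to invoke $Ax \ge 0$ on $P$ so that discarding the other $m-1$ summands is legitimate, and to use the $\pointFind$ guarantee specialized to the basis vectors $e_i$. Everything else is a one-line computation, so the lemma should fall out cleanly from the two case analyses above.
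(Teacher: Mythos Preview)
Your proposal is correct and mirrors the paper's proof essentially step for step: both use the $\eta$-weak guarantee of $\pointFind$ on $e_i$ to derive unsatisfiability when $(Ax^{(i)})_i < \eta b_i$, and in the complementary case both drop all but the $j=i$ summand (using $Ax \ge 0$ on $P$) to get $(A\ddot x)_i \ge \eta b_i/m$, concluding $\lambda(\ddot x) \ge \eta/m$ with $\ddot x \in P$ by convexity.
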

\begin{proof}
Since $\pointFind$ is $\eta$-weak,
$(Ax^{(i)})_i \ge \eta \max_{x \in P} (Ax)_i$.
\begin{align*}
(Ax^{(i)})_i < \eta b_i
&\implies \max_{x \in P} (Ax)_i < b_i
\implies \forall x \in P, Ax \not\ge b
\\ &\implies \fcov(A, b, P) \textrm{ is unsatisfiable.}
\end{align*}
Suppose $(Ax^{(i)})_i \ge \eta b_i$ for all $i \in [m]$.
Since each $x^{(i)} \in P$, we get $\ddot{x} \in P$. Also,
\[ (A\ddot{x})_i = \frac{1}{m} \sum_{j=1}^m (Ax^{(j)})_i
\ge \frac{1}{m} (Ax^{(i)})_i \ge \frac{\eta}{m} b_i
\implies \lambda(\ddot{x}) \ge \frac{\eta}{m} \qedhere \]
\end{proof}

\Cref{thm:get-seed} gives us \cref{algo:get-seed} ($\getSeed$).

\begin{algorithm}[H]
\caption{$\getSeed(I, \eta)$}
\label{algo:get-seed}
\begin{algorithmic}[1]
\For{$i \in [m]$}
    \State $x^{(i)} = \pointFindHyp(e_i)$
        \Comment{so $e_i^TAx^{(i)} \ge \eta \max_{x \in P} e_i^TAx$.}
    \If{$(Ax^{(i)})_i < \eta b_i$}
        \State \Return \texttt{null}
    \EndIf
\EndFor
\State \Return ${\displaystyle \ddot{x} \defeq \frac{1}{m}\sum_{i=1}^m x^{(i)}}$
\end{algorithmic}
\end{algorithm}

\subsubsection{Iteratively Running \texorpdfstring{$\improveCover$}{improve-cover}}

\begin{algorithm}[H]
\caption{$\fracCover(I, \rho, \eps, \eta)$: Returns either $x \in P$ or \texttt{null}.
\\ Requires $\rho \ge \widthHyp(A, b, P)$ and $\eps, \eta \in (0, 1]$.}
\label{algo:frac-cover}
\begin{algorithmic}[1]
\If{$\rho \texttt{ == } 0$}
    \State \label{alg-line:frac-cover:rho}\Return \texttt{null}
\EndIf
\State $x = \getSeedHyp(I, \eta)$
\If{$x \texttt{ == } \texttt{null}$}
    \State \label{alg-line:frac-cover:ret-null-1}\Return \texttt{null}
\EndIf
\State Let $\epsSigma \defeq \eps/(6+5\eps)$ and $\eps_1 \defeq \eps_3 \defeq \eps/3$.
\State Let ${\displaystyle \eps_2 \defeq 1 - \eta\frac{1-\epsSigma}{1+\epsSigma}}$
and ${\displaystyle \eps' \defeq \frac{\eps_1 + \eps_2 + \eps_3}{1 + \eps_1 + \eps_3}}$.
\While{\texttt{true}}
    \If{$\lambda(x) \ge 1$}\label{alg-line:frac-cover:lambda-check-1}
        \State \label{alg-line:frac-cover:ret-exact}\Return $x$
    \EndIf
    \State $(x, \texttt{success})
        = \improveCoverHyp(x, \epsSigma, \eps_1, \eps_2, \eps_3, \rho)$.
    \If{\texttt{success}}
        \Comment{$x$ is $\eps'$-optimal.}
        \State \label{alg-line:frac-cover:ret-e-opt}%
\Return $x \textrm{ if } \lambda(x) \ge 1-\eps' \textrm{ else } \texttt{null}$
    \EndIf
\EndWhile
\end{algorithmic}
\end{algorithm}

\begin{lemma}
\label{thm:epsdash}
In $\fracCover$, $1 - \eps' = \eta/(1+\eps)$.
\end{lemma}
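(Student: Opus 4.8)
The plan is to just unfold the definitions introduced in \cref{algo:frac-cover} and simplify. The first step is to rewrite $1-\eps'$ in a more convenient form. Since $\eps' = (\eps_1+\eps_2+\eps_3)/(1+\eps_1+\eps_3)$,
\[ 1 - \eps' = \frac{(1+\eps_1+\eps_3) - (\eps_1+\eps_2+\eps_3)}{1+\eps_1+\eps_3} = \frac{1-\eps_2}{1+\eps_1+\eps_3}, \]
so it suffices to check that $(1-\eps_2)/(1+\eps_1+\eps_3) = \eta/(1+\eps)$.

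Next I would handle the numerator and denominator separately. For the denominator, $\eps_1 = \eps_3 = \eps/3$ gives $1+\eps_1+\eps_3 = (3+2\eps)/3$. For the numerator, I would first evaluate $\frac{1-\epsSigma}{1+\epsSigma}$ using $\epsSigma = \eps/(6+5\eps)$: clearing the common denominator $6+5\eps$ turns this ratio into $\frac{6+4\eps}{6+6\eps}$, which factors as $\frac{2(3+2\eps)}{2\cdot 3(1+\eps)} = \frac{3+2\eps}{3(1+\eps)}$. Hence $1-\eps_2 = \eta\,\frac{1-\epsSigma}{1+\epsSigma} = \frac{\eta(3+2\eps)}{3(1+\eps)}$.

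Finally, dividing the two expressions, the factor $3+2\eps$ (and the stray factor $3$) cancel:
\[ 1 - \eps' = \frac{1-\eps_2}{1+\eps_1+\eps_3} = \frac{\eta(3+2\eps)}{3(1+\eps)}\cdot\frac{3}{3+2\eps} = \frac{\eta}{1+\eps}, \]
which is the claim. There is no real obstacle here; the computation is routine. The only thing worth remarking is that the slightly odd-looking choice $\epsSigma = \eps/(6+5\eps)$ is precisely the value that forces $\frac{1-\epsSigma}{1+\epsSigma} = \frac{3+2\eps}{3(1+\eps)}$, so that it cancels against $1+\eps_1+\eps_3 = \frac{3+2\eps}{3}$ — in other words $\epsSigma$ is reverse-engineered from this identity, and once that is seen the algebra is immediate.
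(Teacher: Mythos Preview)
Your proof is correct and follows essentially the same approach as the paper: both first rewrite $1-\eps' = (1-\eps_2)/(1+\eps_1+\eps_3)$ and then substitute the definitions of $\eps_1,\eps_2,\eps_3,\epsSigma$ from \cref{algo:frac-cover} and simplify. Your version is somewhat more explicit about the intermediate cancellations (in particular the observation that $\frac{1-\epsSigma}{1+\epsSigma} = \frac{3+2\eps}{3(1+\eps)}$ exactly cancels $1+\eps_1+\eps_3 = \frac{3+2\eps}{3}$), but the argument is the same routine computation.
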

\begin{proof}
\[ 1-\eps' = \frac{1-\eps_2}{1 + \eps_1 + \eps_3}
= \eta \frac{\gamma - \epsSigma}{\gamma + \epsSigma}
\frac{1}{1 + \eps_1 + \eps_3}
= \eta \frac{1-\eps/(6+5\eps)}{1+\eps/(6+5\eps)}\frac{1}{1 + 2\eps/3}
= \frac{\eta}{1 + \eps} \qedhere \]
\end{proof}

\restate{\cref{thm:frac-cover}}{\thmFracCover}
\begin{proof}
When $\fracCover$ returns \texttt{null} at \cref{alg-line:frac-cover:rho},
then $\fcov(A, b, P)$ is unsatisfiable by \thmdepcref{thm:width-is-positive}{thm:frac-cover}.

$\getSeed$ and $\improveCover$ have output in $P$.
Therefore, throughout the execution of $\fracCover$, $x \in P$.

When $\fracCover$ returns \texttt{null} at \cref{alg-line:frac-cover:ret-null-1},
it does so because $\getSeed$ returned \texttt{null}.
Then as per \thmdepcref{thm:get-seed}{thm:frac-cover}, $\fcov(A, b, P)$ is unsatisfiable.

When $\fracCover$ returns $x$ at \cref{alg-line:frac-cover:ret-exact},
then $\lambda(x) \ge 1 \implies Ax \ge b$.

$\fracCover$ returns at \cref{alg-line:frac-cover:ret-e-opt} when \texttt{success} is \texttt{true}.
Then by \thmdepcref{thm:improve-cover}{thm:frac-cover}, $x$ would be $\eps'$-optimal.
By \thmdepcref{thm:opt-to-feas}{thm:frac-cover},
we get that if $\fracCover$ returns \texttt{null},
then $\fcov(A, b, P)$ is unsatisfiable.
Otherwise, $x$ is a $(1-\eps')$-approximate solution to $Ax \ge b$,
i.e. $Ax \ge (1-\eps')b$.
By \thmdepcref{thm:epsdash}{thm:frac-cover}, $Ax \ge (\eta/(1+\eps))b$.
\end{proof}

\begin{lemma}
\label{thm:fc-pf-calls}
Let $\fcov(A, b, P)$ be implicitly defined by input $I$,
where $A \in \mathbb{R}^{m \times N}_{\ge 0}$.
Suppose $\fracCover(I, \rho, \eps, \eta)$ calls $\pointFindHyp$ $T$ times. Then
\[ T \le U \defeq \Uexpr \in \Otild\left(\frac{m\rho}{\eta\eps^3}\right) \]
\end{lemma}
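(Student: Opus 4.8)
The plan is to split the $\pointFindHyp$ calls of $\fracCover$ into exactly three groups — the $m$ calls made inside $\getSeedHyp$, the calls made inside the successive invocations of $\improveCoverHyp$, and none other — and then bound the number of $\improveCover$ invocations and the number of $\pointFind$ calls per invocation separately. First I would dispose of the early returns: if $\fracCover$ stops at \cref{alg-line:frac-cover:rho} it makes no oracle call, and if it stops at \cref{alg-line:frac-cover:ret-null-1} then $\getSeed$ ran its \texttt{for} loop of length at most $m$ (\cref{algo:get-seed}) and returned, so $T\le m\le U$. So from now on assume $\fracCover$ enters its main \texttt{while} loop.

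Next I would bound the number $L$ of $\improveCover$ invocations. By \cref{thm:get-seed} the seed $\ddot x$ satisfies $\lambda(\ddot x)\ge\eta/m>0$, so the first invocation is called with $\lambda_0=\lambda(x)\ge\eta/m$; here I would verify that all of $\improveCover$'s preconditions hold for this call (in particular $\lambda(x)>0$, $\rho>0$, $0<\epsSigma,\eps_1,\eps_2,\eps_3<1$, and $(1-\eps_2)(1+\epsSigma)/(1-\epsSigma)\le\eta\le1$, the last holding with equality on the left by the definition $\eps_2\defeq 1-\eta(1-\epsSigma)/(1+\epsSigma)$ together with $\eta\le1$). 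By \cref{thm:improve-cover}, whenever an invocation returns with \texttt{success} being \texttt{false} the current point has $\lambda(x)>2\lambda_0$; since the output of one invocation is the input of the next, every invocation but the last at least doubles $\lambda(x)$ and hands the following invocation a $\lambda_0\ge\eta/m$. Because $\fracCover$ invokes $\improveCover$ only while $\lambda(x)<1$ (otherwise it returns at \cref{alg-line:frac-cover:ret-exact}), $\lambda(x)$ can only rise from $\ge\eta/m$ past $1$ through logarithmically many doublings, and unwinding this gives $L\le\ceil{\ln(m/\eta)}$, the first ceiling in the definition of $U$. (As a byproduct this shows $\fracCover$ terminates.)

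Then I would bound the number $T_j$ of $\pointFind$ calls made by the $j$-th invocation. \Cref{thm:improve-cover-iters} gives $T_j\le\ceil{\frac{4\rho}{3\epsSigma\eps_3\lambda_0^{(j)}}\left(\ln m+\frac{4}{\eps_1}\ln\frac{4m}{\eps_1}\right)}$, and since $\ceil{\cdot}$ is monotone and $\lambda_0^{(j)}\ge\eta/m$, it suffices to upper-bound the argument after substituting the parameter values $\fracCover$ uses, namely $\epsSigma=\eps/(6+5\eps)$ and $\eps_1=\eps_3=\eps/3$. This gives $\frac{4\rho}{3\epsSigma\eps_3\lambda_0^{(j)}}\le\frac{4(6+5\eps)\rho m}{\eps^2\eta}\le\frac{24(1+\eps)\rho m}{\eps^2\eta}$ (using $6+5\eps\le 6(1+\eps)$) and $\ln m+\frac{4}{\eps_1}\ln\frac{4m}{\eps_1}=\ln m+\frac{12}{\eps}\ln\frac{12m}{\eps}\le\frac{13}{\eps}\ln\frac{12m}{\eps}$ (using $\ln m\le\ln(12m/\eps)$ and $\eps\le1$), so $T_j\le\ceil{\frac{312\, m\rho(1+\eps)}{\eta\eps^3}\ln\frac{12m}{\eps}}$, which is exactly the second ceiling in the definition of $U$.

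Finally I would assemble: $T=m+\sum_{j=1}^{L}T_j\le m+L\ceil{\frac{312\,m\rho(1+\eps)}{\eta\eps^3}\ln\frac{12m}{\eps}}\le m+\ceil{\ln(m/\eta)}\ceil{\frac{312\,m\rho(1+\eps)}{\eta\eps^3}\ln\frac{12m}{\eps}}=U$, and $U\in\Otild(m\rho/(\eta\eps^3))$ because the two logarithmic factors are polylogarithmic in $m$, $1/\eps$, $1/\eta$. I do not expect a conceptual obstacle: the only real work is (i) checking that \cref{thm:improve-cover,thm:improve-cover-iters} genuinely apply to every invocation — i.e. that the conditions $\lambda_0\ge\eta/m>0$ and the parameter preconditions of $\improveCover$ propagate through the doubling — and (ii) the constant- and logarithm-chasing that collapses the generic per-invocation bound and the invocation count to the precise closed form $U$; getting the doubling/propagation argument of step (i) airtight is the part I would be most careful about.
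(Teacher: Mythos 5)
Your proposal takes essentially the same approach as the paper: split the $\pointFind$ calls into the $m$ made by $\getSeed$ plus those made inside $\improveCover$, bound the number of $\improveCover$ invocations by the doubling of $\lambda(x)$ from $\ge\eta/m$ to $1$ (via \cref{thm:improve-cover,thm:get-seed} and the check at \cref{alg-line:frac-cover:lambda-check-1}), and bound each invocation's calls via \cref{thm:improve-cover-iters} after substituting $\epsSigma=\eps/(6+5\eps)$ and $\eps_1=\eps_3=\eps/3$; even the constant-chasing ($6+5\eps\le 6(1+\eps)$, absorbing $\ln m$ into $\frac{1}{\eps}\ln(12m/\eps)$) mirrors the paper's.

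One small caveat, inherited from the paper rather than introduced by you: the doubling argument $2^{t-1}\lambda(x^{[0]})<1$ yields $L\le\ceil{\lg(m/\eta)}$ with a base-2 logarithm, not $\ceil{\ln(m/\eta)}$, which is what the stated $U$ uses (and what you write when you say ``unwinding this gives $L\le\ceil{\ln(m/\eta)}$''). The paper's own proof in fact derives $\ceil{\lg(m/\eta)}$ and then silently writes $\ln$ into $U$. Since $\lg x>\ln x$ for $x>1$ this is a (minor) inconsistency in the stated bound, not a flaw in your argument, and it is immaterial for the $\Otild$ claim.
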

\begin{proof}
$\getSeed$ calls $\pointFind$ $m$ times.
The other $T-m$ calls to $\pointFind$ occur inside $\improveCover$.

For $j \ge 0$, let $x^{[j]}$ be the value of $x$ given as input to
the $(j+1)\Th$ run of $\improveCover$.
Suppose $\improveCover$ is called at least $t$ times.
Then $\lambda(x^{[t-1]}) < 1$ (see \cref{alg-line:frac-cover:lambda-check-1}).
Also, \texttt{success} was \texttt{false} for the first $t-1$ calls to $\improveCover$,
so by \thmdepcref{thm:improve-cover}{},
$\forall j \in [t-1], \lambda(x^{[j]}) > 2\lambda(x^{[j-1]})$.
Therefore, $2^{t-1}\lambda(x^{[0]}) \le \lambda(x^{[t-1]}) < 1$.

By \thmdepcref{thm:get-seed}{},
$\lambda(x^{[0]}) \ge \eta/m$. Therefore,
\begin{align*}
& 2^{t-1} < \frac{m}{\eta}
\implies t-1 < \lg\left(\frac{m}{\eta}\right)
\iff t \le \ceil{\lg\left(\frac{m}{\eta}\right)}
\end{align*}
Therefore, $\improveCover$ is called at most $\ceil{\lg(m/\eta)}$ times.

For each input $x$ to $\improveCover$, $\lambda(x) \ge \eta/m$.
So, by \thmdepcref{thm:improve-cover-iters}{},
the number of times $\pointFind$ is called in each call to $\improveCover$ is at most
\begin{align*}
&\ceil{\frac{4\rho}{3\epsSigma\eps_3\lambda_0}
\left(\ln m + \frac{4}{\eps_1}\ln\left(\frac{4m}{\eps_1}\right)\right)}
\le \ceil{\frac{4m\rho}{3\eta\epsSigma\eps_3}
\left(\ln m + \frac{4}{\eps_1}\ln\left(\frac{4m}{\eps_1}\right)\right)}
\\ &= \ceil{\frac{4m\rho}{3\eta}\frac{6+5\eps}{\eps}\frac{3}{\eps}
\left(\ln m + \frac{12}{\eps}\ln\left(\frac{12m}{\eps}\right)\right)}
\le \ceil{\frac{312m\rho(1+\eps)}{\eta\eps^3}\ln\left(\frac{12m}{\eps}\right)}
\end{align*}
This gives us
\[ T \le \Uexpr \qedhere \]
\end{proof}

\begin{lemma}
\label{thm:fc-time-2}
Let $\fcov(A, b, P)$ be implicitly defined in terms of input $I$,
where $A \in \mathbb{R}^{m \times N}_{\ge 0}$.
Let $\tau$ be an upper-bound on the support of the output of $\pointFindHyp$.
Suppose $\fracCover(I, \rho, \eps, \eta)$ calls $\pointFind$ $T$ times.
Then,
\begin{itemize}
\item $\fracCover$ makes at most $T$ calls to the \prodOrHyp{}.
For every input $x$ to the product oracle, $|\support(x)| \le \tau$.
\item The running time of $\fracCover$, excluding the time taken by oracles,
is $O(T(m + \tau))$.
\item The solution $\xhat$ returned by $\fracCover$ has $|\support(\xhat)| \le T\tau$.
\end{itemize}
\end{lemma}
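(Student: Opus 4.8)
The plan is to prove the three bullets by settling a single implementation point: \emph{$\fracCover$ never touches $x$ as an explicit dense vector during its run; it only ever uses $Ax$, $\lambda(x)$, $y_\alpha(x)$, and the outputs of $\pointFind$.} Granting that, all three bounds fall out by bookkeeping.

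I would first describe the data structures. The vector $x$ is maintained only as a running convex combination $x=\sum_k c_k z_k$ of the vectors $z_k$ returned by calls to $\pointFind$ (stored together with a global scalar, so that the step $x\gets(1-\sigma)x+\sigma\xtild$ only appends one summand, in $O(\tau)$ time, rather than forcing a rescaling sweep over the whole list). Alongside it we keep $Ax\in\mathbb{R}^m$. Each time $\pointFind$ returns a new vector $\xtild$, one call to the product oracle produces $A\xtild$, after which $Ax$ is updated by the same convex combination in $O(m)$ time; inside $\getSeed$, $A\ddot x=\tfrac1m\sum_i Ax^{(i)}$ is assembled from the already-computed $Ax^{(i)}$'s in $O(m^2)$ time. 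Since $x$ is, at every moment, a convex combination of at most $T$ vectors each of support $\le\tau$, and the product oracle is invoked exactly once per $\pointFind$ output — always on a vector of support $\le\tau$ — this gives the first bullet: at most $T$ product-oracle calls, each on an input of support $\le\tau$.

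Next I would verify that every predicate the algorithms evaluate is a function of $Ax$ together with the latest $\pointFind$ output, computable in $O(m)$ time: $\lambda(x)=\min_{i=1}^m(Ax)_i/b_i$, the vector $y_\alpha(x)$, the tests $\lambda(x)\ge1$ and $\lambda(x)\ge1-\eps'$ in $\fracCover$, the guard $\lambda(x)\le2\lambda_0$ of $\improveCover$, and the condition $\CondIIHyp(\eps_2,\eps_3)$. For the last one $C(y)$ is not available, so one tests instead the sufficient condition $y^TAx\ge(1-\eps_2)(y^TA\xtild)/\eta-\eps_3\lambda(x)y^Tb$, where $\xtild=\pointFind(y)$; by $\eta$-weakness $y^TA\xtild\ge\eta C(y)$, so this implies the true $\CondIIHyp$, and the proof of \cref{thm:pot-dec} goes through verbatim with $y^TA\xtild$ in place of $C(y)$, so nothing in the analysis is lost. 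Hence the non-oracle work charged to each $\pointFind$ call is $O(m+\tau)$: $O(m)$ to update $Ax$ and recompute the above quantities, plus $O(\tau)$ to record the new summand of $x$. Summing over the $T$ calls (and noting the $O(m^2)\subseteq O(Tm)$ averaging inside $\getSeed$, since $T\ge m$) gives $O(T(m+\tau))$; the only remaining non-oracle cost is materializing the returned vector once, which, being a convex combination of at most $T$ vectors of support $\le\tau$, costs $O(T\tau)$ and simultaneously yields $|\support(\xhat)|\le T\tau$. This proves the second and third bullets.

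The step that needs real care — the main obstacle — is exactly the decision to keep $x$ implicit. If instead $x$ were materialized and rewritten densely at each of the up to $T$ iterations, the update $x\gets(1-\sigma)x+\sigma\xtild$ would cost $\Theta(|\support(x)|)$, which grows to $\Theta(T\tau)$, inflating the total to $\Theta(T^2\tau)$ and breaking the claimed bound. So the crux of the argument is the (routine but necessary) check that none of $\getSeed$, $\improveCover$, $\fracCover$ ever reads an individual coordinate of $x$ — only $Ax$, $\lambda(x)$, $y_\alpha(x)$ and $\pointFind$-outputs are consulted — which is what permits deferring the single dense pass over $x$ to the moment it is returned.
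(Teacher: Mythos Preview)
Your proposal is correct and takes essentially the same approach as the paper's proof: keep $Ax$ incrementally updated via $Ax^{[t]}=(1-\sigma_t)Ax^{[t-1]}+\sigma_t A\xtild^{[t]}$, never materialize $x$ during the run, and reconstruct $\xhat$ once at the end from the stored $\pointFind$ outputs (the paper writes out the closed form $x^{[T-m]}=\gamma_{T-m}\bigl(x^{[0]}+\sum_t\sigma_t\xtild^{[t]}/\gamma_t\bigr)$ with $\gamma_t=\prod_{j\le t}(1-\sigma_j)$, which is exactly your ``global scalar'' trick). You are in fact more careful than the paper on one point it glosses over---how to test $\CondII$ when $C(y)$ is not directly computable; your surrogate test using $y^TA\xtild/\eta\ge C(y)$, together with the observation that the proof of \cref{thm:pot-dec} goes through unchanged with $y^TA\xtild$ substituted for $\eta C(y)$, is a valid patch.
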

\begin{proof}
$\getSeed$ calls $\pointFind$ $m$ times.
The other $T-m$ calls to $\pointFind$ occur inside $\improveCover$.
Let $x^{[0]}$ be the output of $\getSeed$.
Assume $x^{[0]}$ is not null (since otherwise the proof is trivial).
For $t \in [T-m]$, in the $t\Th$ run of
line \ref{alg-line:improve-cover:x-upd} in $\improveCover$,
let $\sigma_t$ be the value of variable $\sigma$,
let $\xtild^{[t]}$ be the output of $\pointFind$,
and let $x^{[t]}$ be the new value of variable $x$.
Therefore, $x^{[t]} = (1-\sigma_t)x^{[t-1]} + \sigma_t\xtild^{[t]}$,
and the output of $\fracCover$ is $x^{[T-m]}$.

In $\getSeed$, the product oracle is called $m$ times
--- to compute $(Ax^{(i)})_i$ for $i \in [m]$.
The inputs to these calls have a support of size at most $\tau$.
We can take the mean of these product oracle outputs to get $Ax^{[0]}$.

After $\getSeed$, the product oracle is never called directly --- it is only needed
to compute $\lambda(x^{[t]})$ and $y_{\alpha}(x^{[t]})$
and to check condition $\CondIIHyp(\eps_2, \eps_3)$ for all $0 \le t \le T-m$.
To compute $Ax^{[t]}$, we won't call the product oracle on $x^{[t]}$,
since the support of $x^{[t]}$ can be too large.
Instead, we'll compute it indirectly from $x^{[t-1]}$ and $\xtild^{[t]}$ using
$Ax^{[t]} = (1-\sigma_t)(Ax^{[t-1]}) + \sigma_t(A\xtild^{[t]})$.
Note that we already know $Ax^{[0]}$ and $|\support(\xtild^{[t]})| \le \tau$.
Therefore, we need to call the product oracle $T$ times in $\fracCover$,
and each input to the product oracle has support of size at most $\tau$.

$|\support(x^{[0]})| \le m\tau$, since $x^{[0]}$ is the mean of $m$ outputs of $\pointFind$.
Each modification of $x$ increases $|\support(x)|$ by at most $\tau$.
Therefore, $|\support(x^{[T-m]})| \le T\tau$.

A crucial observation for reducing the running time of the algorithm is that
we don't really need to keep track of intermediate values of $x$;
we only need the final value $x^{[T-m]}$.
This is because $x$ isn't used directly anywhere in the algorithm.
We only need it indirectly in the form of $\lambda(x)$ and $y_{\alpha}(x)$
and for checking if $(x, y_{\alpha}(x))$ satisfies condition $\CondIIHyp(\eps_2, \eps_3)$.
But for these purposes, knowing $Ax$ is enough.
Actually computing $x^{[t]}$ for all $t$, as line \ref{alg-line:improve-cover:x-upd}
in $\improveCover$ suggests, can be wasteful and costly,
since $x^{[t]}$ can have a large support and $T$ can be large.
We will, therefore, compute $x^{[T-m]}$ directly
without first computing the intermediate values of $x$.

On solving the recurrence $x^{[t]} = (1-\sigma_t)x^{[t-1]} + \sigma_t\xtild^{[t]}$, we get
\[ x^{[T-m]} = \gamma_{T-m}\left(x^{[0]}
+ \sum_{t=1}^{T-m} \frac{\sigma_t\xtild^{[t]}}{\gamma_t}\right) \]
where $\gamma_t \defeq \prod_{j=1}^t (1-\sigma_j)$.
This way, $x^{[T-m]}$ can be computed in $O(T\tau)$ time.

Other than oracle calls and computing $x^{[T-m]}$,
the time taken by $\fracCover$ is $O(Tm)$:
$O(m)$ in $\getSeed$ and $O(m)$ in each iteration of
the \texttt{while} loop in $\improveCover$.
\end{proof}

\begin{proof}[Proof of \cref{thm:frac-cover-time}]
Follows from \thmdepcref{thm:fc-pf-calls,thm:fc-time-2}{thm:frac-cover-time}.
\end{proof}

\section{Covering LPs}
\label{sec:covLP-solve}

Recall that $\covLP(A, b, c)$ is this linear program:
\[ \min_{x \in \mathbb{R}^N} c^Tx \textrm{ where } Ax \ge b \textrm{ and } x \ge 0 \]
where $A \in \mathbb{R}_{\ge 0}^{m \times N}$, $b \in \mathbb{R}^m_{>0}$
and $c \in \mathbb{R}^N_{\ge 0}$.
$A, b, c$ are defined implicitly by an input $I$.

Before we try to approximately solve $\covLP(A, b, c)$,
let us see why an optimal solution always exists.
\begin{lemma}
A covering LP is feasible and has bounded objective value.
\end{lemma}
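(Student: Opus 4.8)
The statement has two parts: feasibility and boundedness of the objective. The plan is to exhibit an explicit feasible point for feasibility, and then use the non-negativity of $c$ together with $x \ge 0$ for boundedness below; boundedness of the optimum then follows since a feasible point gives a finite upper bound and $0$ is a lower bound.

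\emph{Feasibility.} First I would construct a feasible solution directly. Since $b \in \mathbb{R}^m_{>0}$, each $b_i > 0$. For feasibility we need some $x \ge 0$ with $Ax \ge b$. The natural candidate is to scale a single column, or more robustly a combination of columns, so that every row constraint is met. Concretely: for each row $i$, because $\covLP$ arises from a sensible implicit definition we expect at least one column $j$ with $A[i,j] > 0$; picking for each $i$ such a column $j(i)$ and setting $x_{j(i)}$ large enough (at least $b_i / A[i, j(i)]$, summed appropriately if a column serves several rows) yields $Ax \ge b$ with $x \ge 0$. Since $A \ge 0$, increasing any coordinate of $x$ never decreases any entry of $Ax$, so taking coordinatewise maxima of these per-row choices works. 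Thus a feasible point exists, and I would simply write $x^\circ$ for it. I expect this to be the main subtlety: the definition of $\covLP$ only guarantees $A \ge 0$, not that every row is nonzero, so strictly one should note that if some row of $A$ is entirely zero then (as with \cref{thm:width-is-positive}) the LP is infeasible — so boundedness-of-optimum would be vacuous or one restricts to the feasible case. In the bin-packing application every row has a singleton configuration (as used in \cref{thm:config-lp-bounds}), so this is a non-issue there; for the general lemma I would state the feasibility claim under the implicit assumption that each row of $A$ is nonzero, or equivalently that $\covLP(A,b,c)$ is satisfiable, which is what the surrounding text presumes.

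\emph{Boundedness.} Given any feasible $x$, we have $x \ge 0$ and $c \in \mathbb{R}^N_{>0}$ (the Definition says $c > 0$, though a later restatement writes $c \ge 0$; either way $c \ge 0$ suffices), hence $c^T x = \sum_j c_j x_j \ge 0$. So the objective is bounded below by $0$ over the feasible region. Combined with the feasible point $x^\circ$ giving $c^T x^\circ < \infty$, the infimum is a finite real number, and since the feasible region is a closed polyhedron and the objective is linear and bounded below, the infimum is attained — so an optimal solution exists. I would cite the standard fact that a linear program with nonempty feasible region and objective bounded below (over that region) attains its optimum.

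Putting it together, the write-up is short: \textbf{(1)} handle the degenerate case (a zero row $\Rightarrow$ infeasible, matching \cref{thm:width-is-positive}) or assume satisfiability; \textbf{(2)} construct $x^\circ \ge 0$ with $Ax^\circ \ge b$ by scaling columns, using $A \ge 0$ so coordinatewise maxima preserve the inequalities; \textbf{(3)} observe $c^T x \ge 0$ for all feasible $x$ since $c \ge 0$, $x \ge 0$; \textbf{(4)} conclude the optimum is a finite value in $[0, c^T x^\circ]$ and is attained by the usual LP existence theorem. The only thing requiring any care is step (2)'s reliance on every row being nonzero, which I would flag explicitly rather than gloss over.
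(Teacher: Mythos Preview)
Your proposal is correct and covers both parts. The boundedness argument is identical to the paper's ($c \ge 0$ and $x \ge 0$ give $c^T x \ge 0$), and your extra remark about attainment of the optimum is a harmless addition the paper omits.

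For feasibility you take a genuinely different route. You pick, for each row $i$, some column $j(i)$ with $A[i,j(i)] > 0$ and scale that column so that row $i$ is covered, then combine these choices (using $A \ge 0$ so that adding more mass never hurts any row). The paper instead writes down a single closed-form vector
\[
\widehat{x} \;=\; \sum_{i=1}^m \frac{b_i}{\lVert a_i\rVert^2}\, a_i,
\]
where $a_i^T$ is the $i$th row of $A$, and checks $(A\widehat{x})_i = a_i^T\widehat{x} \ge a_i^T\bigl(\tfrac{b_i}{\lVert a_i\rVert^2}a_i\bigr) = b_i$ using $a_i^T a_j \ge 0$. Both constructions need every row of $A$ to be nonzero; you flag this explicitly, while the paper's justification (``For any feasible solution $x$, \ldots\ Hence, $a_i \neq 0$'') is somewhat circular since feasibility is precisely what is being established. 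Your per-row column-picking argument is more elementary and makes the hidden assumption visible; the paper's formula is slicker and avoids any choice or case analysis once $a_i \neq 0$ is granted.
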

\begin{proof}
Consider $\covLP(A, b, c)$. Let $a_i^T$ be the $i\Th$ row of $A$.
For any feasible solution $x$, we have $(Ax)_i = a_i^Tx \ge b_i > 0$.
Hence, $a_i \neq 0$. Let
\[ \xhat \defeq \sum_{i=1}^m \frac{b_i}{\norm{a_i}^2} a_i > 0. \]

For all $i, j \in [m]$, we have $a_i^Ta_j \ge 0$. This gives us
\[ (A\xhat)_i = a_i^T\xhat
= a_i^T\left(\sum_{j=1}^m \frac{b_j}{\norm{a_j}^2} a_j\right)
\ge a_i^T \left(\frac{b_i}{\norm{a_i}^2} a_i\right) = b_i \]
Therefore, $\xhat$ is feasible for $\covLP(A, b, c)$.

For any feasible solution $x$, we have $c^Tx \ge 0$ because $c > 0$ and $x \ge 0$.
Therefore, $\covLP(A, b, c)$ is bounded.
\end{proof}

We will try to solve $\covLP(A, b, c)$ by binary searching on the objective value $c^Tx$.

Given $r \in \mathbb{R}_{\ge 0}$, we want to either
find a feasible solution to $\covLP(A, b, c)$ of objective value $r$,
or claim that no solution exists of objective value $r$.
This is equivalent to $\fcovHyp(A, b, P_r)$,
where $P_r \defeq \{x: c^Tx = r \textrm{ and } x \ge 0\}$.

Let $r^* = \opt(\covLP(A, b, c))$. Then $\fcov(A, b, P_r)$ has a solution iff $r \ge r^*$.
If we could exactly solve $\fcov$, then finding a $(1+\eps)$-approximate solution
to $\covLP(A, b, c)$ is straightforward:
use binary search to find $r$ such that $r^* \le r \le (1+\eps)r^*$,
and then solve $\fcov(A, b, P_r)$ to get a feasible solution $x$ such that $c^Tx = r$.
However, we can't do this because we can't solve $\fcov(A, b, P_r)$ exactly.
Nevertheless, a similar approach can be used to approximately solve $\covLP(A, b, c)$
if we can weakly solve $\fcov(A, b, P_r)$.

\subsection{Solving the Fractional Covering Problem}

$\covLPsolve$ receives the following inputs:
\begin{itemize}
\item $I$: the input used to implicitly define $\covLP(A, b, c)$.
\item $q$: an upper-bound on $\opt(\covLP(A, b, c))$.
\item $\rho$: an upper-bound on
${\displaystyle q\max_{i=1}^m \max_{j=1}^N \frac{A[i,j]}{b_ic_j}}$.
\item $\eps \in (0, 1]$.
\item $\eta \in (0, 1]$.
\end{itemize}
$\covLPsolve$ is also provided a \columnOrHyp{} for $A$, a \costOrHyp{} for $c$,
and an $\eta$-weak \indexOrHyp{}.
We will now design an algorithm $\fracCovII(I, r, \rho, \eps, \eta)$
that weakly solves $\fcov(A, b, P_r)$ for any $r \in [0, q]$
using these inputs and oracles.

As per \cref{thm:frac-cover}, $\fracCover$ can $\eta/(1+\eps)$-weakly solve
$\fcov(A, b, P_r)$ if we can provide it the following values and oracles:
\begin{itemize}
\item an upper-bound on $\width(A, b, P_r)$.
\item $\eps, \eta \in (0, 1]$.
\item A \prodOrHyp{}.
\item An $\eta$-weak \pointOrHyp{}.
\end{itemize}
Providing $\eta$ and $\eps$ is trivial, since they are inputs to $\covLPsolve$.
We will now prove that $\rho$ is an upper-bound on $\width(A, b, P_r)$
and see how to implement $\pointFind$ using $\indexFind$ and the cost oracle,
and how to implement the product oracle using the column oracle.

\subsubsection{Upper-Bounding Width}

\begin{lemma}
\label{thm:pr-extreme-points}
Let $P_r \defeq \{x: c^Tx = r \textrm{ and } x \ge 0\}$.
Then $P_r$ is bounded, and the extreme points of $P_r$ are
\[ S \defeq \left\{ \frac{r}{c_j} e_j: j \in [N] \right\} \]
Here $e_j$ is the $j\Th$ standard basis vector of $\mathbb{R}^N$.
\end{lemma}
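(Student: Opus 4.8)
The plan is to establish boundedness by a one-line coordinate estimate, and then prove $\extreme(P_r) = S$ by the standard ``an extreme point cannot be a midpoint of two distinct feasible points'' argument, applied in both directions.

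First, for boundedness: take any $x \in P_r$. Since $x \ge 0$ and $c > 0$, every term of the sum $c^Tx = \sum_{j=1}^N c_jx_j = r$ is non-negative, so $c_jx_j \le r$, i.e.\ $0 \le x_j \le r/c_j$ for each $j \in [N]$. Hence $P_r$ is contained in a bounded box and is therefore bounded; being an intersection of finitely many closed halfspaces with a hyperplane, it is in fact a polytope. Next I would check $S \subseteq P_r$ (immediate: $c^T(\tfrac{r}{c_j}e_j) = r$, and $\tfrac{r}{c_j}e_j \ge 0$ because $r \ge 0$ and $c_j > 0$) and that each $\tfrac{r}{c_j}e_j$ is an extreme point: if $\tfrac{r}{c_j}e_j = \tfrac12(u+v)$ with $u,v \in P_r$, then for every $k \ne j$ we get $0 = \tfrac12(u_k+v_k)$ with $u_k,v_k \ge 0$, forcing $u_k = v_k = 0$; then $c^Tu = c_ju_j = r$ gives $u_j = r/c_j$, and likewise for $v$, so $u = v = \tfrac{r}{c_j}e_j$. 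Thus $S \subseteq \extreme(P_r)$.

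For the reverse inclusion $\extreme(P_r) \subseteq S$, let $x$ be an extreme point of $P_r$. If $r = 0$ then $c^Tx = 0$ with $c > 0$, $x \ge 0$ forces $x = 0$, so $P_r = \{0\} = S$ and we are done; assume $r > 0$. If $x$ had two distinct indices $j \ne k$ with $x_j, x_k > 0$, then for sufficiently small $\delta > 0$ the points $x^{\pm} := x \pm \delta\bigl(\tfrac{1}{c_j}e_j - \tfrac{1}{c_k}e_k\bigr)$ both lie in $P_r$: the perturbation leaves $c^Tx$ unchanged since $c_j\cdot\tfrac1{c_j} - c_k\cdot\tfrac1{c_k} = 0$, and all coordinates stay non-negative when $\delta$ is small because only the $j$th and $k$th coordinates (both currently positive) move. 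But then $x = \tfrac12(x^+ + x^-)$ with $x^+ \ne x^-$, contradicting extremity. Hence $\support(x)$ has size at most $1$, and it cannot be empty (otherwise $c^Tx = 0 \ne r$), so $x = x_je_j$ for a unique $j$ with $x_j > 0$; then $c_jx_j = r$ yields $x = \tfrac{r}{c_j}e_j \in S$.

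I do not expect a genuine obstacle here: the argument is a textbook vertex characterization of a scaled simplex. The only points needing a little care are the degenerate case $r = 0$ (so that $S$ is correctly read as $\{0\}$) and making the perturbation argument precise, i.e.\ using $j \ne k$ so the two moved coordinates are independent and choosing $\delta$ small enough that non-negativity is preserved.
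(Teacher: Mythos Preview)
Your proof is correct. The boundedness argument is identical to the paper's. For the characterization of extreme points, however, you take a different route: you verify directly that each $\tfrac{r}{c_j}e_j$ is extreme via the midpoint criterion, and conversely show that any extreme point has support size~$1$ by a perturbation $x \pm \delta(\tfrac{1}{c_j}e_j - \tfrac{1}{c_k}e_k)$. The paper instead writes every $x \in P_r$ explicitly as a convex combination $x = \sum_j \tfrac{x_jc_j}{r}\cdot\tfrac{r}{c_j}e_j$, which immediately gives $P_r = \operatorname{conv}(S)$ and hence $\extreme(P_r) \subseteq S$; the reverse inclusion is then argued from linear independence of $S$. The paper's convex-combination trick is slicker and avoids the perturbation bookkeeping, but it tacitly assumes $r > 0$ (the coefficients $x_jc_j/r$ are undefined otherwise), whereas your argument handles the degenerate case $r = 0$ explicitly. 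Both are standard simplex-vertex arguments and equally valid here.
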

\begin{proof}
Let $j \in [N]$. Then $c^Tx = r \implies c_jx_j \le r \implies x_j \in [0, r/c_j]$.
Therefore, $P_r$ is bounded.

Let $x \in P_r$. Then
\[ x = \sum_{j=1}^N \left(\frac{x_jc_j}{r}\right)\left(\frac{re_j}{c_j}\right) \]
Therefore, all points in $P_r$ can be represented as convex combinations of $S$.
Hence, the set of extreme points of $P_r$ is a subset of $S$.
Since $S$ is a basis of $\mathbb{R}^N$, no point in $S$
can be represented as a linear combination of the other points in $S$.
Therefore, $S$ is the set of extreme points of $P_r$.
\end{proof}

\begin{claim}
\label{thm:lin-opt-extreme}
Let $h: \mathbb{R}^N \mapsto \mathbb{R}$ be a linear function and
$P \subseteq \mathbb{R}^N$ be a polytope.
Then there exists an extreme point $\xhat$ of $P$ such that
$h(\xhat) = \max_{x \in P} h(x)$.
\end{claim}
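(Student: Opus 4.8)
The plan is to reduce the optimization over all of $P$ to optimization over the finite vertex set, using the standard representation theorem for polytopes. First I would recall that a polytope $P$ (a bounded polyhedron) has only finitely many extreme points $v_1, \ldots, v_k$, and that $P$ equals their convex hull, $P = \operatorname{conv}\{v_1, \ldots, v_k\}$ (Minkowski--Weyl / the finite-dimensional Krein--Milman theorem). In the setting where this claim is applied, namely $P = P_r$, \cref{thm:pr-extreme-points} already supplies the explicit extreme-point set $S$ and shows $P_r$ is bounded, and the displayed identity there exhibits every $x \in P_r$ as an explicit convex combination of $S$; so for the intended use one does not even need to invoke the general theorem, but stating the claim for a general polytope I would just cite the representation theorem.

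Next I would note that $h$ is continuous and $P$ is compact (a polytope is closed and bounded), so $\max_{x \in P} h(x)$ is attained; call a maximizer $x^*$ and set $v^* \defeq h(x^*) = \max_{x \in P} h(x)$. Writing $x^* = \sum_{i=1}^k \lambda_i v_i$ with $\lambda_i \ge 0$ and $\sum_i \lambda_i = 1$, linearity of $h$ gives
\[ v^* = h(x^*) = h\left(\sum_{i=1}^k \lambda_i v_i\right) = \sum_{i=1}^k \lambda_i h(v_i) \le \max_{i=1}^k h(v_i). \]
On the other hand each $v_i \in P$, so $h(v_i) \le v^*$ for all $i$, hence $\max_i h(v_i) \le v^*$. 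Combining the two inequalities, $v^* = \max_i h(v_i)$, so the maximum is attained at an extreme point $\xhat \defeq v_j$ where $j$ achieves $\max_i h(v_i)$; this is the desired point.

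I do not expect any real obstacle here: the only substantive ingredient is the fact that a polytope is the convex hull of its finitely many extreme points (together with compactness to guarantee the maximum is attained), and the rest is the one-line averaging argument above. The only minor point to be careful about is that the argument uses linearity of $h$ merely to distribute over the convex combination, so it would work verbatim for an affine $h$ as well; and if one prefers to avoid the general representation theorem in the context of \cref{thm:pr-extreme-points}, one can substitute the explicit convex combination from that lemma directly into the computation above.
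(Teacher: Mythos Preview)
Your argument is correct and is the standard convex-combination averaging proof. The paper itself does not supply a proof for \cref{thm:lin-opt-extreme}; it is stated as a claim without proof and treated as a well-known fact, so there is nothing to compare against.
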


\begin{lemma}
\label{thm:pr-width}
Let $P_r \defeq \{x: c^Tx = r \textrm{ and } x \ge 0\}$. Then
\[ \width(A, b, P_r) = r \max_{i=1}^m \max_{j=1}^N \frac{A[i, j]}{b_ic_j} \]
\end{lemma}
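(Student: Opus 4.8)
The plan is to combine the two preceding results about $P_r$: \cref{thm:pr-extreme-points}, which identifies the extreme points of $P_r$, and \cref{thm:lin-opt-extreme}, which says a linear function on a polytope attains its maximum at an extreme point. The key observation is that for each fixed row index $i$, the map $x \mapsto (Ax)_i/b_i$ is a linear function of $x$.

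First I would rewrite the definition of width by swapping the two maxima:
\[ \width(A, b, P_r) = \max_{x \in P_r} \max_{i=1}^m \frac{(Ax)_i}{b_i}
= \max_{i=1}^m \max_{x \in P_r} \frac{(Ax)_i}{b_i}, \]
which is valid since maximizing over $(x,i)$ in a product set can be done in either order. Next, fix $i \in [m]$ and let $h_i(x) \defeq (Ax)_i/b_i$; this is linear in $x$, and $P_r$ is a polytope (bounded by \cref{thm:pr-extreme-points}), so by \cref{thm:lin-opt-extreme} the maximum of $h_i$ over $P_r$ is attained at an extreme point of $P_r$. By \cref{thm:pr-extreme-points}, the extreme points are exactly $\{(r/c_j)e_j : j \in [N]\}$, so
\[ \max_{x \in P_r} \frac{(Ax)_i}{b_i}
= \max_{j=1}^N \frac{1}{b_i}\left(A\cdot\frac{r}{c_j}e_j\right)_i
= \max_{j=1}^N \frac{r\,A[i,j]}{b_i c_j}. \]

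Finally I would take the maximum over $i$ and pull the nonnegative factor $r$ outside:
\[ \width(A, b, P_r) = \max_{i=1}^m \max_{j=1}^N \frac{r\,A[i,j]}{b_i c_j}
= r \max_{i=1}^m \max_{j=1}^N \frac{A[i,j]}{b_i c_j}, \]
using $r \ge 0$ (the case $r=0$ is consistent, as then $P_r=\{0\}$ and both sides are $0$). There is no real obstacle here — the only thing to be slightly careful about is justifying the interchange of the two maxima and confirming that \cref{thm:lin-opt-extreme} applies because $P_r$ is a genuine (bounded) polytope, both of which are immediate from the earlier lemmas.
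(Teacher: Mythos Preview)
Your proposal is correct and follows essentially the same approach as the paper: swap the order of maxima, apply \cref{thm:lin-opt-extreme} to the linear function $x\mapsto (Ax)_i/b_i$ over the polytope $P_r$, use \cref{thm:pr-extreme-points} to enumerate the extreme points, and compute. The paper's proof is slightly terser (it does not separately discuss the interchange of maxima or the $r=0$ case), but the argument is the same.
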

\begin{proof}
Let $a_i^T$ be the $i\Th$ row of $A$.
Since a linear function $(a_i^Tx/b_i)$ over polytope $P_r$ is maximized at its extreme points
(see \thmdepcref{thm:lin-opt-extreme,thm:pr-extreme-points}{}), we get
\begin{align*}
\width(A, b, P_r) &= \max_{i=1}^m \max_{x \in P_r} \frac{a_i^Tx}{b_i}
= \max_{i=1}^m \max_{j=1}^N \frac{1}{b_i}a_i^T\left(\frac{r}{c_j}e_j\right)
= r \max_{i=1}^m \max_{j=1}^N \frac{A[i, j]}{b_ic_j}
\qedhere \end{align*}
\end{proof}

Since $r \le q$, we get that $\rho \ge \width(A, b, P_r)$.

\subsubsection{Implementing the Point-Finding Oracle}

\begin{lemma}
\label{thm:point-vs-index}
Let $P_r \defeq \{x: c^Tx = r \textrm{ and } x \ge 0\}$. Then
\[ \max_{x \in P_r} y^TAx = r\max_{j=1}^N y^TA\pfrac{e_j}{c_j} \]
\end{lemma}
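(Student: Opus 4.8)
The plan is to derive this directly from the two facts already established about $P_r$. The key observation is that $x \mapsto y^TAx$ is a linear function of $x$ (it equals $(A^Ty)^Tx$), so by \cref{thm:lin-opt-extreme} its maximum over the polytope $P_r$ is attained at an extreme point of $P_r$. By \cref{thm:pr-extreme-points}, $P_r$ is bounded and its extreme points are exactly $S = \{(r/c_j)e_j : j \in [N]\}$, so the maximum over $P_r$ equals the maximum over the finite set $S$.

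Concretely, I would write
\[
\max_{x \in P_r} y^TAx
= \max_{\xhat \in S} y^TA\xhat
= \max_{j=1}^N y^TA\left(\frac{r}{c_j}e_j\right)
= r\max_{j=1}^N y^TA\left(\frac{e_j}{c_j}\right),
\]
where the first equality is \cref{thm:lin-opt-extreme} together with \cref{thm:pr-extreme-points}, the second is just unpacking the definition of $S$, and the last pulls the nonnegative scalar $r$ out of the maximum (each term scales linearly in $r$, and $r \ge 0$ since $P_r$ is nonempty only in that case, or one notes the identity holds trivially when $r=0$).

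I do not anticipate any real obstacle here: the statement is essentially a corollary of \cref{thm:pr-extreme-points,thm:lin-opt-extreme}, and the only thing to be careful about is invoking \cref{thm:lin-opt-extreme} with the right linear function, namely $h(x) = y^TAx$, and noting $P_r$ is a genuine polytope (bounded, by \cref{thm:pr-extreme-points}) so that the claim applies. The purpose of the lemma is downstream: it reduces implementing an $\eta$-weak point-finding oracle for $\fcov(A,b,P_r)$ to an $\eta$-weak index-finding oracle, since $D_j(y) = y^TA(e_j/c_j)$, so I would end the proof here and move on to that reduction.
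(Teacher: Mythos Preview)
Your proposal is correct and matches the paper's own proof sketch exactly: apply \cref{thm:lin-opt-extreme} with $h(x)=y^TAx$ and use the extreme-point description of $P_r$ from \cref{thm:pr-extreme-points}. Nothing further is needed.
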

\begin{proof}[Proof sketch]
Use \thmdepcref{thm:lin-opt-extreme}{} with $h(x) = y^TAx$
and use \thmdepcref{thm:pr-extreme-points}{}.
\end{proof}
\begin{corollary}
\label{thm:point-using-index}
Let $\indexFind$ be an $\eta$-weak for $\fcov(A, b, c)$.
If we set $\pointFind(y) = (r/c_k)e_k$, where $k \defeq \indexFind(y)$,
then $\pointFind$ is $\eta$-weak for $\fcov(A, b, P_r)$,
where $P_r \defeq \{x: c^Tx = r \textrm{ and } x \ge 0\}$.
Here $c_k$ is obtained as $\operatorname{cost-oracle}(k)$.
\end{corollary}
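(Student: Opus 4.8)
The plan is to verify the two requirements in the definition of an $\eta$-weak point-finding oracle (\cref{defn:point-find}) for the polytope $P_r = \{x: c^Tx = r \textrm{ and } x \ge 0\}$: that $\pointFind(y) \in P_r$, and that $y^TA\pointFind(y) \ge \eta\max_{x \in P_r} y^TAx$ for every $y \in \mathbb{R}^m_{\ge 0}$.

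For membership, let $k = \indexFind(y)$; the value $c_k$ returned by the cost oracle is positive, so $\pointFind(y) = (r/c_k)e_k$ is well-defined, is nonnegative (using $r \ge 0$), and satisfies $c^T\bigl((r/c_k)e_k\bigr) = (r/c_k)c_k = r$, hence $\pointFind(y) \in P_r$. For the objective bound, I would first unfold the definitions to get $y^TA\pointFind(y) = r\,y^TA(e_k/c_k) = r\,D_k(y)$, where $D_k$ is the quantity from \cref{defn:index-find}. Since $\indexFind$ is $\eta$-weak, $D_k(y) \ge \eta\max_{j=1}^N D_j(y)$, so $y^TA\pointFind(y) \ge \eta\, r\max_{j=1}^N D_j(y) = \eta\, r\max_{j=1}^N y^TA(e_j/c_j)$; by \cref{thm:point-vs-index} this last expression equals $\eta\max_{x \in P_r} y^TAx$, which is exactly the required inequality.

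I do not expect any real obstacle here: the substance is already contained in \cref{thm:pr-extreme-points} and \cref{thm:point-vs-index}, which reduce maximizing the linear functional $x \mapsto y^TAx$ over $P_r$ to its extreme points $\{(r/c_j)e_j : j \in [N]\}$, matching precisely the form of the point that $\indexFind$ implicitly selects. The only things to state carefully are the identification $y^TA(e_j/c_j) = D_j(y)$, which is immediate from comparing \cref{defn:index-find} with the definition of the point-finding oracle, and the degenerate case $r = 0$, in which $P_0 = \{0\}$, $\pointFind(y) = 0$, and both sides of the desired inequality vanish, so the claim holds trivially; it is also worth recalling that the surrounding construction only invokes this for $r \in [0,q]$.
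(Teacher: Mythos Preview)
Your proposal is correct and follows essentially the same approach as the paper: compute $y^TA\pointFind(y) = r\,y^TA(e_k/c_k)$, apply the $\eta$-weakness of $\indexFind$, and invoke \cref{thm:point-vs-index} to identify $r\max_j y^TA(e_j/c_j)$ with $\max_{x\in P_r} y^TAx$. You are simply a bit more explicit than the paper, additionally verifying membership $\pointFind(y)\in P_r$ and handling the degenerate case $r=0$, both of which the paper leaves implicit.
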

\begin{proof}
Let $\xtild = \pointFind(y)$.
Then by \thmdepcref{thm:point-vs-index}{}, we get
\[ y^TA\xtild = r y^TA\pfrac{e_k}{c_k} \ge r\eta \max_{j=1}^N y^TA\pfrac{e_j}{c_j}
= \eta \max_{x \in P_r} y^TAx \qedhere \]
\end{proof}

\subsubsection{Implementing the Product Oracle}

Let $a_j$ be the $j\Th$ column of $A$. To compute $Ax$, simply use
\[ Ax = \sum_{j=1}^N x_ja_j \]
Therefore, we can implement the product oracle over $A$
using $|\support(x)|$ calls to the column oracle.

\subsubsection{Summary}

The description of $\fracCovII$ is now complete, and we get the following result:
\begin{theorem}
\label{thm:frac-cov-2}
Let $\covLP(A, b, c)$ be defined implicitly in terms of $I$.
Let $q$ be an upper-bound on $\opt(\covLP(A, b, c))$ and
$\rho$ be an upper-bound on $q\max_{i=1}^m \max_{j=1}^N A[i,j]/(b_ic_j)$.
Let $P_r \defeq \{x: c^Tx = r \textrm{ and } x \ge 0 \}$.
Let $\indexFindHyp$ be an $\eta$-weak index-finding oracle.

Then we can implement an $\eta$-weak $\pointFindHyp$ for $\fcov(A, b, P_r)$
using a single call to $\indexFind$ and the \costOrHyp{}.
For the \prodOrHyp{}, we can compute $Ax$ using $|\support(x)|$
calls to the \columnOrHyp{}.
Also, for every $\xtild$ output by $\pointFind$, $|\support(\xtild)| \le 1$.

Furthermore, $\fracCovII(I, r, \rho, \eps, \eta)$ will $\eta/(1+\eps)$-weakly solve
$\fcovHyp(A, b, P_r)$, and\\$\fracCovII(I, r, \rho, \eps, \eta)$ works by
returning the output of $\fracCoverHyp((I, r), \rho, \eps, \eta)$.
\end{theorem}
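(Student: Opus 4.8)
The plan is to assemble the components built in the three preceding subsections and to \emph{define} $\fracCovII$ so that the last sentence of the statement holds by construction. Concretely, I would let $\fracCovII(I, r, \rho, \eps, \eta)$ be the algorithm that runs $\fracCover((I, r), \rho, \eps, \eta)$, feeding it the product oracle and the $\eta$-weak point-finding oracle described below, and returns exactly what $\fracCover$ returns. Everything then reduces to checking that the hypotheses of \cref{thm:frac-cover} are met for the instance $\fcov(A, b, P_r)$.

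First, the point-finding oracle. On input $y \in \mathbb{R}^m_{\ge 0}$, compute $k \defeq \indexFind(y)$, obtain $c_k$ via the \costOrHyp{}, and output $\xtild \defeq (r/c_k)e_k$. By \cref{thm:point-using-index} this $\xtild$ satisfies $y^TA\xtild \ge \eta\max_{x \in P_r} y^TAx$, so it is an $\eta$-weak $\pointFind$ for $\fcov(A, b, P_r)$; it uses one call to $\indexFind$ and one to the cost oracle, and since it is a scalar multiple of a single standard basis vector, $|\support(\xtild)| \le 1$. Second, the product oracle: given $x$, compute $Ax = \sum_{j \in \support(x)} x_j a_j$, where each column $a_j$ of $A$ is fetched from the \columnOrHyp{}; this uses $|\support(x)|$ column-oracle calls.

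It remains to verify that $\rho$ is a valid width bound. By \cref{thm:pr-width}, $\width(A, b, P_r) = r\max_{i=1}^m\max_{j=1}^N A[i,j]/(b_ic_j)$, and since $r \le q$ while $\rho \ge q\max_{i=1}^m\max_{j=1}^N A[i,j]/(b_ic_j)$ by hypothesis, we get $\rho \ge \width(A, b, P_r)$. With $\eps, \eta \in (0,1]$ and the two oracles above in hand, \cref{thm:frac-cover} applies: $\fracCover((I, r), \rho, \eps, \eta)$ --- and therefore $\fracCovII(I, r, \rho, \eps, \eta)$ --- $\eta/(1+\eps)$-weakly solves $\fcov(A, b, P_r)$.

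I do not expect a real obstacle; the statement is essentially a bookkeeping summary of the subsection, and each clause follows from a result already proved. The only point deserving a sentence of care is the degenerate case $r = 0$, where $P_0 = \{0\}$ so $\fcov(A, b, P_0)$ is genuinely unsatisfiable since $0 \not\ge b$: here $\getSeedHyp$ obtains $x^{(i)} = (0/c_k)e_k = 0$, hence $(Ax^{(i)})_i = 0 < \eta b_i$, so $\fracCover$ correctly returns \texttt{null} (consistent with \cref{thm:get-seed,thm:width-is-positive}), which matches the claimed weak-solve guarantee.
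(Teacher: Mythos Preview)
Your proposal is correct and follows essentially the same approach as the paper: the paper's ``proof'' is simply a dependency pointer to \cref{thm:pr-width}, \cref{thm:point-using-index}, and \cref{thm:frac-cover}, and you have unpacked precisely those three ingredients in the intended way. The extra care you take with the degenerate case $r=0$ is a nice touch that the paper omits.
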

\thmdep{thm:pr-width,thm:point-using-index,thm:frac-cover}{}

\subsection{Algorithm Based on Binary Search}

Let $r^* \defeq \opt(\covLP(A, b, c))$ and $\mu \defeq \eta/(1+\eps)$.
Note that $r^* > 0$, since every feasible solution is non-zero,
and hence has positive objective value.

\begin{algorithm}[H]
\caption{$\covLPsolve(I, q, \rho, \eps, \eta)$:
\\ Finds a $(1+\delta)/\mu$-approximate solution to $\covLP(A, b, c)$,
where $A, b, c$ are implicitly defined by $I$.
Here $\eps, \eta \in (0, 1]$,
$q$ is an upper-bound on $r^* \defeq \opt(\covLP(A, b, c))$,
$\rho$ is an upper-bound on $q\max_{i=1}^m \max_{j=1}^N A[i,j]/(b_ic_j)$,
$P_r \defeq \{x: c^Tx = r \textrm{ and } x \ge 0 \}$,
and $\indexFindHyp$ is an $\eta$-weak index-finding oracle.}
\label{algo:cov-lp-solve}
\begin{algorithmic}[1]
\State $\delta \defeq \eps^2/(1+\eps)$
\State $\alpha = 0$
\State $\beta = q$
\State $\xhat = \fracCovII(I, q, \rho, \eps, \eta)$
\While{$\beta > (1+\delta)\alpha$}
    \State $r = (\alpha+\beta)/2$
    \State $\yhat = \fracCovII(I, r, \rho, \eps, \eta)$
    \If{$\yhat$ is \texttt{null}}
        \State $\alpha = r$
    \Else
        \State $\beta = r$
        \State $\xhat = \yhat$
    \EndIf
\EndWhile
\State \Return $(\alpha, \beta, \xhat)$
\end{algorithmic}
\end{algorithm}

\begin{definition}
Let $g: [0, q] \mapsto \{0, 1\}$ be a function where
\\ $g(r) = 0$ iff $\fracCovII(I, r, \rho, \eps, \eta)$ returns \texttt{null}.
\end{definition}
Every call to $\fracCovII$ in $\covLPsolve$ probes a point $r$ in the interval $[0, q]$
and gives us $g(r)$.

\begin{lemma}
\label{thm:covlps-part}
When $r < \mu r^*$, $g(r)$ is always 0. When $r \ge r^*$, $g(r)$ is always 1.
(When $\mu r^* \le r < r^*$, $g(r)$ may be 0 or 1.)
\end{lemma}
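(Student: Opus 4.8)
The statement has two parts, and I would prove them separately, both by tracing through what $\fracCovII$ does and applying the guarantees already established for $\fracCover$ (via \cref{thm:frac-cov-2} and \cref{thm:frac-cover}). Recall $\mu = \eta/(1+\eps)$ and that $\fracCovII(I,r,\rho,\eps,\eta)$ returns the output of $\fracCover((I,r),\rho,\eps,\eta)$, which $\mu$-weakly solves $\fcov(A,b,P_r)$ where $P_r = \{x : c^Tx = r \text{ and } x \ge 0\}$.

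First I would handle the case $r \ge r^*$. Since $r^* = \opt(\covLP(A,b,c))$, there is a feasible $x$ with $c^Tx = r^*$ and $Ax \ge b$; scaling, $x' \defeq (r/r^*)x$ satisfies $c^Tx' = r$ and $Ax' \ge (r/r^*)b \ge b$, so $\fcov(A,b,P_r)$ is satisfiable. By the contrapositive of the weak-solver guarantee (if it returns \texttt{null}, then $\fcov(A,b,P_r)$ is unsatisfiable), $\fracCovII(I,r,\rho,\eps,\eta)$ does not return \texttt{null}, hence $g(r) = 1$.

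Next I would handle $r < \mu r^*$. Suppose for contradiction that $g(r) = 1$, i.e., $\fracCovII$ returns some vector $x$. Then by the weak-solver guarantee, $x \in P_r$ (so $c^Tx = r$) and $Ax \ge \mu b$. Now $(1/\mu)x$ is feasible for $\covLP(A,b,c)$ with objective value $r/\mu < r^*$, contradicting the optimality of $r^*$. Hence $g(r) = 0$. The parenthetical remark about $\mu r^* \le r < r^*$ needs no separate argument — it is just the observation that neither of the two deterministic conclusions applies in that range.

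I do not expect a serious obstacle here; the lemma is essentially bookkeeping on top of \cref{thm:frac-cover}. The one point requiring a little care is making sure the scaling arguments stay inside $P_r$ — i.e., that $c^Tx' = r$ holds exactly after scaling (which it does because $c^Tx$ scales linearly) and that the inequality $Ax \ge b$ is preserved or improved under multiplication by a nonnegative scalar $\ge 1$ (resp. that $Ax \ge \mu b$ together with $c^Tx = r$ gives a feasible point of value $r/\mu$). Both are immediate from nonnegativity of $A$, $b$, $c$, so the proof is short.
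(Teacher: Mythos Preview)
Your proposal is correct and matches the paper's own proof essentially line for line: both argue $r \ge r^*$ by scaling an optimal solution into $P_r$ to show satisfiability (hence $\fracCovII$ cannot return \texttt{null}), and both argue $r < \mu r^*$ by contradiction, scaling a returned $\mu$-approximate solution by $1/\mu$ to get a feasible point of value $r/\mu < r^*$. The only cosmetic difference is that the paper treats the two cases in the opposite order.
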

\begin{proof}
Let $r < \mu r^*$. Assume $g(r) = 1$. This means that $\fracCovII$ returned
a $\mu$-approximate solution $x$ to $\fcov(A, b, P_r)$
(by \thmdepcref{thm:frac-cov-2}{}), i.e.,
$Ax \ge \mu b$ and $c^Tx = r$.
Therefore, $x/\mu$ is feasible for $\covLP(A, b, c)$ and $c^T(x/\mu) = r/\mu < r^*$.
This is a contradiction, since we found a feasible solution to $\covLP(A, b, c)$
of objective value less than the optimum.
Therefore, $g(r) = 0$.

Let $r \ge r^*$ and let $x^*$ be an optimal solution to $\covLP(A, b, c)$.
Therefore, $Ax^* \ge b$ and $c^Tx^* = r^*$.
Let $x \defeq (r/r^*)x^*$. Then $Ax \ge (r/r^*)b \ge b$ and $c^Tx = r$.
Therefore, $x$ is a feasible solution to $\fcov(A, b, P_r)$.
This means $\fcov(A, b, P_r)$ is satisfiable, so $\fracCovII$ cannot return \texttt{null}.
Therefore, $g(r) = 1$.
\end{proof}
\begin{lemma}
\label{thm:covlps-invs}
Throughout $\covLPsolve$'s execution,
$g(\alpha) = 0$, $g(\beta) = 1$, $c^T\xhat = \beta$ and $A\xhat \ge \mu b$.
(assuming the \texttt{while} loop body is executed atomically).
\end{lemma}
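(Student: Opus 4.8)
The plan is to prove the four-part statement as a loop invariant by induction on the number of completed iterations of the \texttt{while} loop, treating each loop body as a single transition (which is exactly what the parenthetical ``assuming the \texttt{while} loop body is executed atomically'' licenses).

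First I would check the base case, namely the state immediately before the loop, where $\alpha = 0$, $\beta = q$, and $\xhat = \fracCovII(I, q, \rho, \eps, \eta)$. Since every feasible solution of $\covLP(A,b,c)$ is nonzero we have $r^* > 0$, and $\mu = \eta/(1+\eps) > 0$, so $0 < \mu r^*$; hence \cref{thm:covlps-part} gives $g(0) = 0$. Since $q$ upper-bounds $\opt(\covLP(A,b,c)) = r^*$, the same lemma gives $g(q) = 1$. Because $g(q) = 1$, the call $\fracCovII(I, q, \rho, \eps, \eta)$ does not return \texttt{null}, so by \cref{thm:frac-cov-2} its output $\xhat$ is a $\mu$-approximate solution to $\fcov(A, b, P_q)$, i.e. $A\xhat \ge \mu b$ and $c^T\xhat = q = \beta$. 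Thus all four conjuncts hold initially.

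For the inductive step, assume the invariant holds at the top of an iteration whose body runs (so $\beta > (1+\delta)\alpha$). The body sets $r = (\alpha+\beta)/2$ and $\yhat = \fracCovII(I, r, \rho, \eps, \eta)$, then branches. If $\yhat$ is \texttt{null}, then $g(r) = 0$ by the definition of $g$; the update $\alpha \gets r$ gives $g(\alpha) = 0$, while $\beta$ and $\xhat$ are untouched, so $g(\beta) = 1$, $c^T\xhat = \beta$, $A\xhat \ge \mu b$ all persist. If $\yhat$ is not \texttt{null}, then $g(r) = 1$ and by \cref{thm:frac-cov-2} $\yhat$ satisfies $A\yhat \ge \mu b$ and $c^T\yhat = r$; the updates $\beta \gets r$, $\xhat \gets \yhat$ then give $g(\beta) = g(r) = 1$, $c^T\xhat = c^T\yhat = r = \beta$, and $A\xhat = A\yhat \ge \mu b$, while $\alpha$ is unchanged so $g(\alpha) = 0$ persists. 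In either branch the invariant is restored, completing the induction.

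I do not expect a genuine obstacle; the argument is essentially bookkeeping. The two points that need care are: (i) invoking \cref{thm:covlps-part} for the base case, which only requires $r^* > 0$ and $q \ge r^*$, both already established in the text; and (ii) using the fact, recorded in \cref{thm:frac-cov-2}, that a non-\texttt{null} output of $\fracCovII(I, r, \cdot)$ lies in $P_r$ and therefore has objective value \emph{exactly} $r$ — this is what yields $c^T\xhat = \beta$ after an update, rather than merely $c^T\xhat \le \beta$.
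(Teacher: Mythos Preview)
Your proposal is correct and follows essentially the same approach as the paper: both arguments verify the invariant by appealing to the definition of $g$ for the updates to $\alpha,\beta$ and to \cref{thm:frac-cov-2} for the properties $c^T\xhat=\beta$ and $A\xhat\ge\mu b$. Your treatment is more explicit than the paper's (which is little more than a sketch of the inductive step); in particular you spell out the base case via \cref{thm:covlps-part}, whereas the paper leaves initialization implicit.
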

\begin{proof}
When $\yhat = \texttt{null}$, then $g(r) = 0$ and $\alpha = r$.
Otherwise, $g(r) = 1$ and $\beta = r$.

$c^T\xhat = \beta$ and $A\xhat \ge \mu b$
follow from $\xhat \neq \texttt{null}$,
$\xhat = \fracCovII(I, \beta, \rho, \eps, \eta)$
and the fact that $\fracCovII$ can $\mu$-weakly solve $\fcov(A, b, P_r)$
(by \thmdepcref{thm:frac-cov-2}{}).
\end{proof}

\restate{\cref{thm:cov-lp-solve}}{\thmCovLPSolve}
\begin{proof}
Let $(\alpha, \beta, \xhat) = \covLPsolve(I, q, \rho, \eps, \eta)$.
By \thmdepcref{thm:covlps-invs}{thm:cov-lp-solve},
$g(\alpha) = 0$, $g(\beta) = 1$, $c^T\xhat = \beta$ and $A\xhat \ge \mu b$.
By \thmdepcref{thm:covlps-part}{thm:cov-lp-solve}, $\alpha < r^*$ and $\beta \ge \mu r^*$.
Since the algorithm terminated, $\beta \le (1+\delta)\alpha$.
This gives us $r^* \le c^T(\xhat/\mu) \le r^*(1+\delta)/\mu$
and $A(\xhat/\mu) \ge b$.
Therefore, $\xhat$ is a $(1+\delta)/\mu$-approximate solution to $\covLP(A, b, c)$.
\[ \frac{1+\delta}{\mu} = \left(1 + \frac{\eps^2}{1+\eps}\right) \frac{1+\eps}{\eta}
= \frac{1+\eps+\eps^2}{\eta} \qedhere \]
\end{proof}

\begin{theorem}
\label{thm:covlps-iters}
Suppose the \texttt{while} loop in $\covLPsolve$ runs $T$ times. Then
\[ T \le 2 + \lg\pfrac{q}{r^*} + \lg\pfrac{1}{\eta} + 2\lg\left(\frac{1}{\eps} + 1\right) \]
\end{theorem}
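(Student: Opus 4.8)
The plan is to treat the \texttt{while} loop as ordinary bisection on an interval $[\alpha,\beta]\subseteq[0,q]$ and count the halvings. The first thing to record is structural: each iteration replaces $\alpha$ or $\beta$ by the midpoint $r=(\alpha+\beta)/2$, so $\beta-\alpha$ is exactly halved each time; since it starts at $q$, at the beginning of iteration $T$ (the last one) we have $\beta-\alpha=q/2^{T-1}$. Two monotonicity facts are also needed: $\alpha$ never decreases and $\beta$ never increases, and by \cref{thm:covlps-invs,thm:covlps-part} the invariant $g(\alpha)=0,\ g(\beta)=1$ yields $\alpha<r^*$ and $\beta\ge\mu r^*$ throughout.

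The quantitative heart of the argument is a lower bound on $\alpha$ during the last iteration. I would split the run into the initial iterations in which $\alpha=0$ (say there are $k_1$ of them) and the rest. While $\alpha=0$ the probe is $r=\beta/2$, and $\beta$ is halved precisely when $g(\beta/2)=1$; since $\beta\ge\mu r^*>0$, once $\beta/2<\mu r^*$ we are forced to have $g(\beta/2)=0$, so $\alpha$ becomes positive — hence $k_1$ is finite. At the end of those $k_1$ iterations $\beta=q/2^{k_1-1}\ge\mu r^*$ (it was halved $k_1-1$ times and then left unchanged) and $\alpha=\beta/2\ge\mu r^*/2$; moreover the loop condition $\beta>(1+\delta)\alpha$ there reads $2>1+\delta$, which holds because $\delta=\eps^2/(1+\eps)<1$, so the loop does not stop at that point and the last iteration is one of the subsequent ones. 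By monotonicity, $\alpha\ge\mu r^*/2$ at the start of iteration $T$.

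Combining: since iteration $T$ executes, $\beta-\alpha>\delta\alpha$ at its start, so $q/2^{T-1}>\delta\cdot(\mu r^*/2)$, which rearranges to $T<2+\lg\!\big(q/(\delta\mu r^*)\big)$. The remaining step is algebra: substituting $\delta=\eps^2/(1+\eps)$ and $\mu=\eta/(1+\eps)$, the quantity $\lg(1/\delta)+\lg(1/\mu)$ equals $2\lg\!\big((1+\eps)/\eps\big)+\lg(1/\eta)=2\lg(1/\eps+1)+\lg(1/\eta)$, which gives $T\le 2+\lg(q/r^*)+\lg(1/\eta)+2\lg(1/\eps+1)$.

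The only delicate point is keeping the constants honest. A lazier bound — estimating $\beta-\alpha>\delta\alpha$ through $\beta$ via $\alpha>\beta/(1+\delta)$ and $\beta\ge\mu r^*$ — costs an extra additive $\Theta(\lg(1/\eps+1))$ term, so it matters to use the genuine inequality $\alpha\ge\mu r^*/2$ coming from the $\alpha=0$ phase, and to keep $\lg(1/\delta)$ and $\lg(1/\mu)$ together until the very end so that the two $(1+\eps)$ factors cancel against $\eps^{-2}$ and collapse into exactly $2\lg(1/\eps+1)$.
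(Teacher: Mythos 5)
Your proof is correct and takes essentially the same approach as the paper's: both identify the transition iteration where $\alpha$ first becomes positive, observe that $\alpha=\beta/2\ge\mu r^*/2$ at that point, and then use the exact halving of $\beta-\alpha$ together with monotonicity of $\alpha$ to bound the remaining iterations. The only difference is packaging — you derive $q/2^{T-1}>\delta\mu r^*/2$ directly from the loop condition at iteration $T$, whereas the paper separately bounds $p$ (the transition index) and shows $T\le p+\lceil\lg(1/\delta)\rceil$ by contradiction — but the underlying inequalities are identical.
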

\begin{proof}
Let $\alpha_t$ and $\beta_t$ denote the values of
$\alpha$ and $\beta$ after $t$ runs of the \texttt{while} loop.
Then $\alpha_0 = 0$ and $\beta_0 = q$.

Suppose $\alpha_t = 0$ for all $t < p$. Then $\beta_{p-1} = q/2^{p-1}$.
$g(\beta_{p-1}) = 1$ by \thmdepcref{thm:covlps-invs}{},
and $\beta_{p-1} \ge \mu r^*$ by \thmdepcref{thm:covlps-part}{}.
Therefore, $p \le \lg(q/(\mu r^*)) + 1$.
Let $p$ be the largest possible such that $\alpha_t = 0$ for all $t < p$.
Then $\alpha_p = q/2^p$ and $\beta_p = \beta_{p-1} = q/2^{p-1}$.

$\beta_t - \alpha_t$ halves in each iteration. So for any $t \ge p$,
$\beta_t - \alpha_t = (\beta_p - \alpha_p)/2^{t-p}$.
Let $t = p + \ceil{\lg(1/\delta)}$.
Assume that $T > t$. Since the \texttt{while} loop ran the $(t+1)\Th$ time,
$\beta_t > (1+\delta)\alpha_t$.
\[ \frac{\beta_t - \alpha_t}{\alpha_t}
= \frac{\beta_p - \alpha_p}{2^{\ceil{\lg(1/\delta)}}\alpha_t}
\le \frac{q/2^p}{\alpha_p/\delta} = \delta
\implies \beta_t \le (1+\delta)\alpha_t \]
This is a contradiction. Therefore,
\begin{align*}
T &\le p + \ceil{\lg\pfrac{1}{\delta}}
\le 2 + \lg\pfrac{q}{r^*} + \lg\pfrac{1}{\mu\delta}
\\ &= 2 + \lg\pfrac{q}{r^*} + \lg\pfrac{1}{\eta} + 2\lg\left(\frac{1}{\eps} + 1\right)
\qedhere \end{align*}
\end{proof}

\restate{\cref{thm:covlps-time}}{\thmCovLPSolveTime}
\begin{proof}[Proof sketch]
By \thmdepcref{thm:covlps-iters}{thm:covlps-time},
$\covLPsolve$ calls $\fracCovII$ at most $M$ times.
By \thmdepcref{thm:frac-cov-2}{thm:covlps-time}, $\tau = 1$
and every call to $\fracCovII$ results in one call to $\fracCover$,
and every call to $\pointFind$ results in one call to $\indexFind$.
The rest follows from \thmdepcref{thm:frac-cover-time}{thm:covlps-time}.
\end{proof}

\section{Future Work}
\label{sec:future-work}

$\fracCover$ is based on a simplified version of the Plotkin-Shmoys-Tardos
algorithm \cite{plotkin1995fast} for fractional covering.
We did not focus on optimizing the running time of $\fracCover$;
instead, we focused on getting as small an approximation factor as possible,
even when the point-finding oracle could be very weak.
\cite{plotkin1995fast} uses many tricks to get a low running time.
We didn't adapt those tricks to our algorithm,
so our algorithm is not as fast as theirs.
For example, they use different values of $\epsilon$
for each call to $\improveCover$,
and they have fast randomized versions of their algorithms.

The most important of these tricks, in our opinion, is the one
that reduces the dependence on $\rho$.
The number of times our algorithm $\fracCover$ calls the point-finding oracle
varies linearly with $\rho$. But for some applications,
$\rho$ can be super-polynomial in the input size.
Section 4 of \cite{plotkin1995fast} explains a possible approach to fix this.
The number of times their algorithm calls the point-finding oracle
is linear in $\log\rho$.

Another direction of work would be to adapt our techniques to the
fractional packing problem. \cite{plotkin1995fast} already have an algorithm for this,
but their algorithm doesn't work for very small values of $\eta$
(when using a packing analogue of the $\eta$-weak point-finding oracle).

\end{document}